\newenvironment{EL}[1][$\ast$]{\color{magenta}\ifthenelse{\equal{#1}{}}{}{\marginpar{#1}}}{\color{black}}
\newcommand{\ELstart}{\begin{EL}}
\newcommand{\ELend}{\end{EL}}
\theoremstyle{definition}
\newtheorem{definition}{Definition}
\newtheorem{theorem}[definition]{Theorem}
\newtheorem{proposition}[definition]{Proposition}
\newtheorem{lemma}[definition]{Lemma}
\newtheorem{corollary}[definition]{Corollary}
\newtheorem{example}[definition]{Example}
\theoremstyle{remark}
\newtheorem{remark}[definition]{Remark}
\theoremstyle{claim}
\newtheorem{claim*}{Claim}
\newcommand{\svee}{\operatornamewithlimits{\varovee}}
\def \bN {\mathbb{N}}
\def \bZ {\mathbb{Z}}
\def \1{{\mathbf{1}}}
\def \I {\mathcal{I}}
\def \1{{\mathbf{1}}}
\def \cL {\mathcal{L}}
\def \gR{{\mathfrak{R}}}
\def \gS{{\mathfrak{S}}}
\begin{document}

\title{On the poset of computation rules for nonassociative calculus}

\author{Miguel COUCEIRO$^{1}$ and Michel GRABISCH$^{2}$\thanks{Corresponding
    author. Tel (+33) 1-44-07-82-85, Fax
    (+33) 1-44-07-83-01,
    email \texttt{michel.grabisch@univ-paris1.fr}} \\
\normalsize 1. Mathematics Research Unit, FSTC, University of Luxembourg\\
\normalsize   6, rue Coudenhove-Kalergi, L-1359 Luxembourg, Luxembourg    \\
\normalsize          2. Paris School of Economics, University of Paris I\\
\normalsize          106-112, Bd de l'H\^opital, 75013 Paris, France\\
\normalsize         Email: \texttt{miguel.couceiro@uni.lu, michel.grabisch@univ-paris1.fr}}

\date{Version of \today}
\maketitle

\begin{abstract}
The symmetric maximum, denoted by $\svee$, is an extension of the usual maximum
$\vee$ operation so that 0 is the neutral element, and $-x$ is the symmetric (or
inverse) of $x$, i.e., $x\svee(-x)=0$. However, such an extension does not
preserve the associativity of $\vee$.  This fact asks for systematic ways of
parenthesing (or bracketing) terms of a sequence (with more than two arguments) when using such an
extended maximum. We refer to such systematic (predefined) ways of parenthesing as computation rules.
 
 As it turns out there are infinitely many computation rules each of which
 corresponding to a systematic way of bracketing arguments of sequences.
 Essentially, computation rules reduce to deleting terms of sequences based on
 the condition $x\svee(-x)=0$.  This observation gives raise to a quasi-order on
 the set of such computation rules: say that rule 1 is below rule 2 if for all
 sequences of numbers, rule 1 deletes more terms in the sequence than rule 2.

In this paper we present a study of this quasi-ordering of computation rules.
In particular, we show that the induced poset of all equivalence classes of
computation rules is uncountably infinite, has infinitely many maximal elements, 
has infinitely many atoms, and it embeds the powerset of natural numbers
ordered by inclusion.
\end{abstract}

{\bf Keywords:} symmetric maximum, nonassociative algebra, computation rule,
partially ordered set

\section{Introduction}\label{intro}
Among the wide variety of algebraic structures sofar studied in the realm of aggregation theory, 
only a few have been considered with nonassociative fundamental operations; see e.g. \cite{BaeFodGra04,Gra03,Gra04,Gra06,PapStaj01},
see also \cite{GraMarMesPap09} for a recent reference. 
If commutativity, distributivity, and existence of neutral element and of symmetric element, etc., 
are often abandonned, associativity remains a desirable property in order to avoid ambiguities 
when assessing the outcome of composed computations within the algebraic structure.
 However, in certain situations such nonassociative operations are both natural and necessary: this is the case of the symmetric
maximum \cite{Gra03,Gra04}. 

For a preliminary discussion, consider the set $\bN$
of nonnegative integers and the maximum operation $\vee$ defined on it. Let us try to
build on $\bZ$ an operation $\svee$ behaving like a group addition but coinciding with
$\vee$ on the positive side, that is, for every $a,b\in\bZ$, $a\svee 0=a$
(neutral element), $a\svee (-a)=0$ (symmetry), $a\svee b = a\vee b$ if $a,b\geqslant
0$. If such an operation exists, it is necessarily nonassociative as shown below:
\begin{align}
-3\svee(3\svee 2) & = -3\svee 3=0\label{eq:0}\\
(-3\svee 3)\svee 2 & = 0\svee 2 = 2.\label{eq:2}
\end{align} 
One can show \cite{Gra03} that the best definition (in the sense that it fails
associativity on the smallest possible domain) of $\svee$ is given by:
\begin{equation}\label{eq:3}
a\svee b = \left\{ \begin{array}{ll}
-(|a| \vee |b|) & \mbox{ if } b \neq -a \mbox{ and } |a| \vee |b|=-a \mbox{ or } =-b
\\
0 & \mbox{ if } b=-a \\
|a| \vee |b| & \mbox{ otherwise.}
\end{array}
\right.
\end{equation}
Except for the case $b=-a$, $a \svee b$ equals the element among the two that has the greatest absolute value.

The main problem is how to interpret this nonassociative operation when
evaluating expressions like $\svee_{i=1}^na_i$, as it was the case in
\cite{Gra04}. The solution proposed in \cite{Gra04,Gra03} was to define
\emph{computation rules}, that is, to define systematic ways of putting
parentheses so that no ambiguity occurs. 
Since we deal with commutative operations, a simple example of a computation rule
is the following: put parentheses around each pair of maximal symmetric terms. If
we apply this to our example above, this rule corresponds to
(\ref{eq:2}). Another one is to make the computation separetely on positive and
on negative terms, and to aggregate the result: $(\svee_i a^+_i)\svee(\svee_i
a_i^-)$. This corresponds to (\ref{eq:0}). 

It is easy to see that there are many possible
computation rules, but to study them, one needs
to formalize the intuitive idea of a computation rule.
The aim of this paper is twofold: to propose a formal definition of a computation rule, which
was lacking in \cite{Gra03}, and to study the set of all computation rules
endowed with a very natural ordering.
As we will see, the poset of computation rules induced by this ordering is uncountable; 
in fact, from Corollary~\ref{cor:3} below,
it follows that this poset is equimorphic (equivalent with respect to embeddability) to 
the power set of positive integers ordered by inclusion.
Moreover, we show that the poset of computation rules
has infinitely many atoms and has infinitely many maximal elements; these are completely described in
Subsections \ref{atoms} and \ref{maximals}.

 Throughout the paper, we adopt the following notation:
if $Z$ is a set of symbols, then $\cL(Z)$
denotes the language (set of words, including the empty word $\varepsilon$)
built on the alphabet $Z$.

\section{The symmetric maximum}\label{sec:symmax}

In this section we recall basic concepts and preliminary results needed hereinafter 
(for further developments see \cite{Gra04,Gra03} and \cite[\S 9.3]{GraMarMesPap09}).
However, we assume that the reader is familiar with elementary notions in the theory of ordered sets,
and refer the reader, e.g., to \cite{Blyth05,CasLecMon07,DavPrie02} for basic background.

Let $C$ be a chain endowed
with an order $\leqslant$ and least element $0$, and let $C^-:=\{-c: \, c\in C\}$  
be its dually isomorphic copy, which we refer to as its symmetric counterpart.

We define $\tilde{C}:=C\cup C^-$, and set $0=-0$.  
Since we will only consider countable sequences of elements of  $\tilde{C}$, 
without loss of generality, we may assume that $\tilde{C}=\mathbb{Z}$, or a finite symmetric interval
of it.

Let us introduce a binary operation $\svee$ on $\tilde{C}$ fulfilling the following
independent conditions:
\begin{enumerate}
\item[(I)] $\svee$ coincides with $\vee$ on $C^2$.
\item[(II)] $-a$ is the symmetric of $a$, i.e., $a\svee(-a)=0$.
\item[(III)] $-(a\svee b)= (-a)\svee(-b)$ for all $a,b\in C$.
\end{enumerate}

As observed in Section~\ref{intro}, (I) and (II) imply that $\svee$ is
not associative.
Note also that from (III), it follows that $\svee$ coincides with the minimum on
$C^-$. The following results are not difficult to verify.
\begin{proposition}
Under the conditions (I), (II) and (III) above, no operation is associative on a
larger domain than that on which the symmetric maximum defined by (\ref{eq:3}) is associative.
\end{proposition}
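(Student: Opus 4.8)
The statement means that, among all binary operations on $\tilde C$ satisfying (I)--(III), none is associative on a set of ordered triples strictly containing the set of triples on which the $\svee$ of (\ref{eq:3}) is associative; in other words, that $\svee$ is maximal for inclusion of associativity domains. Since the $\svee$ of (\ref{eq:3}) is itself non-associative (e.g.\ by (\ref{eq:0}) and (\ref{eq:2})), the plan is to establish the following uniqueness statement, from which the proposition is immediate: \emph{if $\star$ is any binary operation on $\tilde C$ satisfying (I), (II), (III) and associative at every ordered triple at which the $\svee$ of (\ref{eq:3}) is associative, then $\star=\svee$.} Indeed, an operation associative on a strictly larger set of triples would be associative wherever $\svee$ is, hence would equal $\svee$ by the uniqueness statement, and so its associativity domain would both equal and strictly contain that of $\svee$ --- a contradiction.

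First I would record the elementary consequences of the axioms for such a $\star$. By (I), $x\star y=x\vee y$ for all $x,y\geqslant 0$; combining (III) with (I), $x\star y=x\wedge y$ for all $x,y\leqslant 0$; in particular $0$ is a two-sided neutral element for $\star$. Applying (II) to $a$ and also to $-a$ yields $a\star(-a)=(-a)\star a=0$ for all $a$. It follows that $\star$ is already completely determined --- and coincides with the $\svee$ of (\ref{eq:3}) --- except possibly on the ordered pairs $(x,y)$ whose entries have opposite signs and satisfy $|x|\neq|y|$; on such a pair the $\svee$ of (\ref{eq:3}) returns whichever of $x,y$ has the larger absolute value. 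I would also record the easy observation that the $\svee$ of (\ref{eq:3}) is associative at every triple in which one entry has strictly larger absolute value than the other two, since that entry then absorbs the other two under either bracketing.

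The core of the argument is to pin down $\star$ on the remaining mixed-sign pairs. Fix $(x,y)$ of opposite signs with $|x|\neq|y|$. If $|x|>|y|$, I would use the associativity of $\star$ at the triple $(x,-y,y)$, which lies in the associativity domain of the $\svee$ of (\ref{eq:3}) because $x$ is its strict absolute-value maximum; expanding $(x\star(-y))\star y=x\star((-y)\star y)$ and noting that $-y$ has the same sign as $x$ (so $x\star(-y)=x$ by the same-sign case) while $(-y)\star y=0$ by (II), one obtains $x\star y=x\star 0=x$, which is exactly $x\svee y$. If instead $|x|<|y|$, I would use the triple $(x,-x,y)$, whose strict absolute-value maximum is $y$; here $x\star(-x)=0$ by (II) and $(-x)\star y=y$ by the same-sign case (since $-x$ has the sign of $y$ and $|y|>|x|$), so $(x\star(-x))\star y=x\star((-x)\star y)$ becomes $y=x\star y=x\svee y$. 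Hence $\star$ agrees with the $\svee$ of (\ref{eq:3}) on all pairs, i.e.\ $\star=\svee$, and the proposition follows.

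I expect the main obstacle to be the bookkeeping in this core step: for each type of still-undetermined pair one must produce a triple that simultaneously lies in the associativity domain of the $\svee$ of (\ref{eq:3}) \emph{and} whose associativity equation isolates the single unknown $x\star y$. The triples $(x,-y,y)$ and $(x,-x,y)$ do exactly this, and they avoid circularity because every auxiliary $\star$-product occurring in them is of an already-settled type (same-sign, a symmetric pair, or involving $0$). A further point to keep in mind is that associativity must be handled as a property of \emph{ordered} triples --- it genuinely depends on the order of the arguments, even though $\svee$ is commutative --- but this causes no real difficulty in the argument above.
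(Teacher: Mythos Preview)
Your argument is correct. The paper itself does not supply a proof of this proposition (it is stated among results that ``are not difficult to verify''), so there is nothing to compare your approach against; your uniqueness reformulation --- showing that any $\star$ satisfying (I)--(III) and associative on the associativity domain of $\svee$ must coincide with $\svee$ --- is a clean way to establish maximality, and the triples $(x,-y,y)$ and $(x,-x,y)$ do exactly the job you claim.
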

\begin{proposition}\label{prop:1s}
The symmetric maximum has the following properties:
\begin{enumerate}
\item $\svee$ is commutative on $\tilde{C}$.
\item $0$ is the neutral element of $\svee$.
\item $\svee$ is associative on an expression involving $a_1,\ldots,a_n\in
  \tilde{C}$, $n>2$, if and only if  $\bigvee_{i=1}^n a_i \neq -
  \bigwedge_{i=1}^na_i$.
\item $\svee$ is nondecreasing in each argument on $\tilde{C}$. 
\end{enumerate}
\end{proposition}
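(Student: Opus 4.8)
The plan is to verify the four items in turn, each following directly from the definition~\eqref{eq:3} with only a short case analysis.

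\textbf{Commutativity.} The definition \eqref{eq:3} is manifestly symmetric in $a$ and $b$: each of the three cases ($b=-a$; $|a|\vee|b|\in\{-a,-b\}$ with $b\neq-a$; and otherwise) is invariant under swapping $a$ and $b$, and so is the value returned in each case. So I would just remark that \eqref{eq:3} is symmetric in its arguments.

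\textbf{Neutral element.} I would check $a\svee 0=a$ for all $a\in\tilde C$ by cases on the sign of $a$. If $a=0$, then $b=-a$ holds and the result is $0=a$. If $a>0$, then $b=0\neq-a$ and $|a|\vee|0|=|a|=a$, which is neither $-a$ nor $-b=0$ (since $a>0$), so we are in the ``otherwise'' case and get $|a|\vee|0|=a$. If $a<0$, then $|a|\vee|0|=|a|=-a$, so the first case applies and the value is $-(|a|\vee|0|)=-|a|=a$. Alternatively this is already contained in condition~(II) together with the observation following~\eqref{eq:3}, but an explicit check is cleanest.

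\textbf{The associativity criterion (item 3).} This is the main point and the one requiring real work. The strategy is: first reduce an $n$-ary expression to a binary computation by grouping, and observe that the value of any bracketing of $a_1\svee\cdots\svee a_n$, whenever it is unambiguous, is governed by the element(s) of largest absolute value. Concretely, let $M=\bigvee_i a_i$ and $m=\bigwedge_i a_i$. If $M\neq -m$, I claim every bracketing yields the same value, namely the one among $M,m$ of larger absolute value (with ties broken consistently since $M\neq-m$ rules out the only problematic tie). One proves this by induction on $n$: split the sequence into two nonempty blocks, apply the induction hypothesis to each (checking the hypothesis $M\neq-m$ is inherited, or handling the degenerate sub-cases where a block is ``balanced''), and then combine the two partial results using \eqref{eq:3}, verifying that the outcome does not depend on where the split was made. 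Conversely, if $M=-m$, the two bracketings that first form $M$ (resp.\ $m$) in isolation and then absorb the rest give values of opposite sign (as in \eqref{eq:0}--\eqref{eq:2}), witnessing non-associativity. The bookkeeping over the possible sign patterns and over blocks that evaluate to $0$ is the part that needs care; everything else is immediate from \eqref{eq:3}.

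\textbf{Monotonicity (item 4).} Fix $b$ and show $a\mapsto a\svee b$ is nondecreasing on $\tilde C$; by commutativity this suffices. I would split according to the sign of $b$ and, within that, track the three cases of \eqref{eq:3} as $a$ increases through $-\infty,\dots,-|b|,\dots,-0,0,\dots,|b|,\dots$. On the region $|a|>|b|$ the value equals $a$ (with the correct sign handled by the first vs.\ third case of \eqref{eq:3}), which is increasing in $a$; on the region $|a|\le|b|$ the value is constantly $b$ except at $a=-b$ where it is $0$, and one checks $b\le 0\le b$-side inequalities hold so no descent occurs at the transition points. The only thing to be careful about is the jump at $a=-b$: if $b>0$ then to the left of $-b$ the value is $a<-b<0\le b$ and at $-b$ it is $0\le b$, so still nondecreasing; the case $b<0$ is symmetric, and $b=0$ is item~2. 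Hence $\svee$ is nondecreasing in each argument.

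I expect item~3 to be the main obstacle, since it requires the induction on the number of arguments together with a careful treatment of sub-blocks that evaluate to $0$; items 1, 2 and 4 are routine case checks against~\eqref{eq:3}.
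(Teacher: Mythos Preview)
The paper does not actually prove this proposition: it is stated as a preliminary result with the remark that it is ``not difficult to verify'' (with a reference to earlier work), so there is no argument in the paper to compare yours against. Your case analyses for (i), (ii) and (iv) are correct, and your inductive strategy for the forward implication in (iii) is the natural one; it goes through cleanly once you first record the easy lemma that any bracketing of $a_1,\ldots,a_n$ returns a value lying in $[\bigwedge_i a_i,\bigvee_i a_i]$, so that the block containing an occurrence of the element of strictly largest absolute value again satisfies the hypothesis.

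One small correction to your converse in (iii): the two bracketings you describe do not in general give values of \emph{opposite sign}---already in the paper's own example $(3,-3,2)$ they yield $0$ and $2$---so you should say ``distinct values'' instead. More substantively, if every element other than one occurrence of $M$ and one of $m$ equals $0$ (e.g.\ the triple $(M,-M,0)$ with $M>0$), then \emph{every} bracketing returns $0$ and the equivalence as literally stated fails. This is a quirk of the formulation rather than of your method; the paper sidesteps it by working over $\tilde C\setminus\{0\}$ from Section~\ref{sec:coru} onward, and under that implicit convention your argument for the converse is correct.
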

Property (iii) of Proposition \ref{prop:1s} will be the basis for defining computation rules.

\section{Computation rules}\label{sec:coru}
The lack of associativity of $\svee$ 
induces ambiguity when evaluating expressions like
$\svee_{i=1}^n a_i$. To overcome this difficulty, computation rules were
proposed in \cite{Gra04,Gra03}, and
 which amount to eliminating situations where nonassociativity occurs,
as characterized by property (iii) in Proposition~\ref{prop:1s}. 

Given a sequence $(a_i)_{i\in I}$ with $I\subseteq \bN$, we say that it
\emph{fulfills associativity} if either $|I|\leqslant 2$ or $\bigvee_{i\in I}a_i\neq
-\bigwedge_{i\in I}a_i$. Hence $\svee_{i\in I}a_i$ is well-defined if and only
if $(a_i)_{i\in I}$ fulfills associativity.
Informally speaking, a computation rule is a systematic (predefined) way to delete symbols in
a sequence in order to make it associative, provided that this corresponds to some
arrangement of parentheses. 
\begin{example}
Consider the following sequence in $\bZ$: $3,2,1,0,-2,-3,-3$. A possible way to make the
sequence associative is to delete $3,-3$, which corresponds to the arrangement
\[
(3\svee -3 )\svee(-3\svee 2 \svee  -2 \svee 1 \svee 0) = -3.
\]
Another possibility is to delete all occurrences of maximal symmetric symbols, that
is, first $3,-3$ then $2,-2$, wich corresponds to:
\[
(3\svee ( -3 \svee  -3 ))\svee (2\svee -2 )\svee 1\svee  0 = 1.
\]
Even though deleting the 3 makes this sequence associative, it does not
correspond to any arrangement of parentheses.
\end{example}

In this section we reassemble these ideas and propose a formalism where the intuitive idea
of a computation rule is made precise, and show that our formalization fulfills our initial requirements. 
  
Since 0 is the neutral element of $\svee$, we deal with
sequences (words) built on $Z:=\tilde{C}\setminus\{0\}$, including the empty
sequence $\varepsilon$. Hence, we consider the language
$\cL(Z)$. Nonempty words are denoted by $\sigma=(a_i)_{i\in I}$, where $I$
is a finite index set.

We are interested in computing expressions $\svee_{i\in I}a_i$ unambiguously.
Since $\svee$ is commutative, the order of symbols in the word does not matter,
and we can consider any particular ordering of the word, like the decreasing
order of the absolute values of the elements in the sequence:
\[
(1,3,-2,-3,3,1,2) \rightarrow (3,3,-3,2,-2,1,1).
\]
Hence, we do not deal with words, but with such ordered sequences. We denote by
$\mathfrak{S}$ the set of all such sequences. We introduce a convenient and
unambiguous encoding of sequences, based on two mappings. The mapping $\theta$
assigns to every $\sigma\in\mathfrak{S}$, the list of the
absolute values in $\sigma$ in decreasing order:
\[
\theta(\sigma):= (n_1,\ldots,n_q). 
\]
We assume that $\theta(\sigma)$ is always a finite sequence of
arbitrary length.
The mapping $\psi:\mathfrak{S}\rightarrow\cup_{i\in \mathbb{N}}(\mathbb{N}_0^2)^i$
is defined by:
\[
\psi((a_i)_{i\in I}) = ((p_1,m_1),\ldots,(p_q,m_q))
\]
where $p_k,m_k$ are the numbers of occurrences of the $k$-th greatest absolute
value of elements in the sequence, $p_k$ being for the positive element, and
$m_k$ for the negative one. In other words, for $\theta(\sigma)=(n_1,\ldots,n_q )$,
the sequence $\sigma$ can be rewritten after reordering as:
\[
\sigma = (\underbrace{n_1,\ldots,n_1}_{p_1\text{
    times}},\underbrace{-n_1,\ldots,-n_1}_{m_1\text{ times}},\ldots,\underbrace{n_q,\ldots,n_q}_{p_q\text{
    times}},\underbrace{-n_q,\ldots,-n_q}_{m_q\text{ times}}).
\]  
Note that no pair in $\psi(\sigma)$ can be $(0,0)$.
\begin{example}
Consider the sequence $\sigma=(1,3,-3,2,-2,-2,3,1,1,1)$. Then
\begin{align*}
\theta(\sigma) &= (3,2,1)\\
\psi(\sigma) &= ((2,1),(1,2),(4,0)).
\end{align*} 
\end{example}
Note that $\theta(\sigma)$ and $\psi(\sigma)$ uniquely determine $\sigma$.
 Also, saying that $\sigma$ fulfills associativity means
that either $p_1$ or $m_1$ is 0. We denote by $\mathfrak{S}_0$ the set of
sequences which do \emph{not} fulfill associativity.
\begin{definition}\label{def:elem}
There exist five elementary rules $\rho_i:\gS\rightarrow\gS$,
defined as follows. For any sequence
$\sigma$ with $\psi(\sigma)=((p_k,m_k)_{k=1,\ldots,q})$:
\begin{enumerate}
\item elementary rule $\rho_1$: if $p_1>1$ and $m_1>0$, the number $p_1$ is
  changed into $p_1=1$;
\item elementary rule $\rho_2$: if $m_1>1$ and $p_1>0$, the number $m_1$ is
  changed into $m_1=1$;
\item elementary rule $\rho_3$: if $p_1>0$,
  $m_1>0$, the pair $(p_1,m_1)$ is changed into $(p_1-c,m_1-c)$, where
  $c:=p_1\wedge m_1$. If this results in the pair (0,0), then this pair is
  deleted, and all subsequent pairs $(p_k,m_k)$, $k=2,3,\ldots$,  are renumbered as $(p_{k-1},m_{k-1})$.
\item elementary rule $\rho_4$: if $p_1>0$,
  $m_1>0$, and if $p_2>0$, the number
  $p_2$ is changed into $p_2=0$. If this results in the pair (0,0), then this pair is
  deleted, and all subsequent pairs $(p_k,m_k)$, $k=3,4,\ldots$,  are renumbered as $(p_{k-1},m_{k-1})$.
\item elementary rule $\rho_5$: if $p_1>0$,
  $m_1>0$, and if $m_2>0$, the number
  $m_2$ is changed into $m_2=0$. If this results in the pair (0,0), then this pair is
  deleted, and all subsequent pairs $(p_k,m_k)$, $k=3,4,\ldots$,  are renumbered as $(p_{k-1},m_{k-1})$.
\end{enumerate}
Rules $\rho_1,\ldots,\rho_5$ have no action (i.e., $\rho_i(\sigma)=\sigma$) if
the conditions of application are not satisfied.  
\end{definition}

We define the (computation) alphabet as $\Psi:=\{\rho_1,\rho_2,\rho_3,\rho_4,\rho_5\}$. 
\begin{definition}
A \emph{computation rule} $R$ is any word built on $\Psi$, i.e.,
$R\in\cL(\Psi)$. We say that $R$ is a \emph{well-formed computation rule}
  (\emph{w.f.c.r.}) if for any sequence $\sigma\in\gS$ we have
$R(\sigma)\in\gS\setminus\gS_0$. We denote by $\gR$ the set of well-formed
computation rules.
\end{definition}
For example, $\rho_2\rho_3\rho_1$, $\rho_4^*\rho_1$,
$(\rho_1\rho_3)^*(\rho_4\rho_5)^*$ are computation rules, where as usual
$w^*$ denotes the infinite concatenation $wwwww\cdots$ of the word $w$ (we recall
that words are read from left to right).
Observe that only the two latter rules are well-formed.

Note that from Definition~\ref{def:elem}, we have $R(\sigma)=\sigma$ for any
rule $R$ and any sequence $\sigma$ in $\gS\setminus\gS_0$.
We give examples of w.f.c.r.'s which include those already proposed in
\cite{Gra03} (we leave to the reader the proof that they are well-formed):
\begin{enumerate}
\item $\langle \cdot\rangle_0 = \rho_3^*$,
\item $\langle \cdot\rangle_= = (\rho_1\rho_2\rho_3)^*$,
\item $\langle \cdot\rangle_-^+ = (\rho_4\rho_5)^*\langle \cdot\rangle_==(\rho_4\rho_5)^*\rho_1\rho_2\rho_3$,
\item $\langle \cdot\rangle_{pess} = (\rho_4\rho_5)^*\rho_1\rho_3$,
\item $\langle \cdot\rangle_{opt} = (\rho_4\rho_5)^*\rho_2\rho_3$,
\item $\langle \cdot\rangle_L = (\rho_1\rho_3)^*$,
\item $\langle \cdot\rangle_R = (\rho_2\rho_3)^*$.
\end{enumerate}
Note that $\langle\sigma\rangle^+_-=\varepsilon$ for all $\sigma\in\gS_0$. 

We use $\svee(R(\sigma))$ to denote the value of $\svee_{i\in I}a_i$ after applying the computation rule $R\in\gR$ to $\psi(\sigma)=\psi((a_i)_{i\in I}) $. 
To compute $\svee(R(\sigma))$, one
needs to delete symbols in the sequence $\theta(\sigma)$ exactly as they are
deleted in $\psi(\sigma)$.  We say that $R,R'\in\gR$ are \emph{equivalent},
denoted by $R\sim R'$, if for any sequence $\sigma\in\gS$ we have $\svee(R(\sigma)) = \svee(R'(\sigma))$.

The next fundamental theorem shows that our setting covers all possible ways of
putting parentheses on words in $\cL(Z)$ in order to make them associative\footnote{It is noteworthy to observe that this framework is suitable for any nonassociative operation which satisfies (ii) and (iii) of Proposition \ref{prop:1s}.}.  
\begin{theorem}\label{thm:computation rules}
Any computation rule applied to some $\sigma\in\gS$ corresponds to an
arrangement of parentheses and a permutation on $\sigma$. Conversely, any
arrangement of parentheses and permutation on some $\sigma\in\gS$ making the
sequence associative is equivalent to a computation rule applied to $\sigma$.
\end{theorem}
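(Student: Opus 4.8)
The plan is to prove the two directions separately, exploiting the fact that an arrangement of parentheses together with a permutation on a finite sequence can be evaluated by $\svee$ step by step, each binary application either producing $0$ (when two maximal symmetric terms meet) or returning the argument of larger absolute value. First I would set up the correspondence at the level of a single binary bracketing step: if a subexpression currently has maximal absolute value $n_1$ occurring both as $+n_1$ and $-n_1$, then any parenthesization forces, at some point, a binary $\svee$ between two terms whose absolute values are $n_1$; after that step the effect on the multiset of surviving symbols is exactly one of the moves available to $\rho_1,\ldots,\rho_5$ (kill all but one $+n_1$, kill all but one $-n_1$, cancel $c=p_1\wedge m_1$ pairs, or — when the top block has been reduced to a single $\pm n_1$ that must still be neutralised — reach down to the second block and delete its $+n_2$ or $-n_2$ occurrences). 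This is the heart of the argument; I expect the main obstacle to be the bookkeeping showing that \emph{every} parenthesization which makes the sequence associative factors into a finite composition of such elementary moves, and conversely that each elementary move is realisable by an honest bracket placement — in particular handling $\rho_4,\rho_5$, where the interaction is between the first and second blocks and one has to argue that a term of absolute value $n_1$ sitting alone can legitimately be combined with a term of absolute value $n_2$ inside some subexpression, yielding $-n_1$ or $+n_1$ and hence annihilating an $n_2$.

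For the forward direction (any computation rule $R$ applied to $\sigma\in\gS$ corresponds to an arrangement of parentheses and a permutation), I would induct on the number of elementary letters in $R$ that actually act on $\sigma$, i.e. on the total number of symbols deleted. Since $R(\sigma)\in\gS\setminus\gS_0$ is only required for $R\in\gR$ but the theorem speaks of arbitrary $R\in\cL(\Psi)$, I note that it suffices to track each elementary letter: by the single-step lemma above, applying $\rho_i$ to the current sequence is the net effect of inserting one pair of parentheses around a suitable pair (or short run) of terms of maximal — or, for $\rho_4,\rho_5$, second-maximal — absolute value, evaluating that inner $\svee$, and continuing. Concatenating these bracketings (working from the innermost outward) produces a single global arrangement of parentheses on a permutation of $\sigma$ whose evaluation deletes precisely the symbols $R$ deletes; because $\svee$ is commutative (Proposition~\ref{prop:1s}(i)) the permutation is harmless, and because Proposition~\ref{prop:1s}(iii) characterises failure of associativity exactly by $\bigvee a_i = -\bigwedge a_i$, each such bracket is a legitimate disambiguating move.

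For the converse, given an arrangement of parentheses $P$ and a permutation making $\sigma$ associative, I would read $P$ as a binary tree and evaluate it bottom-up. Each internal node is a binary $\svee$; the only nodes that delete symbols are those where two subexpressions of equal current maximal absolute value meet, and — by the single-step lemma applied in reverse — the deletion performed at such a node coincides with one application of some $\rho_i$ to the sequence as it stands at that moment. Traversing the tree in an order compatible with its depth (deepest nodes first) thus yields a word $R\in\cL(\Psi)$ such that $R(\sigma)$ has the same surviving multiset of symbols as the fully evaluated expression; since the final value of $\svee$ depends only on that surviving multiset (again via commutativity and the absolute-value rule \eqref{eq:3}), we get $\svee(R(\sigma))$ equal to the value of the parenthesized expression, which is the meaning of "equivalent to a computation rule applied to $\sigma$". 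The one subtlety to check carefully is that when $P$ combines a lone top-block term with something from a lower block, the resulting sign is forced (it is $-n_1$ if the surviving top term was $+n_1$ was already cancelled down, etc.), so that the move is genuinely one of $\rho_4$ or $\rho_5$ and not something outside $\Psi$; this is where I would spend the most care, and it is essentially the same obstacle flagged above.
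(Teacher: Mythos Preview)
Your forward direction is fine and matches the paper's: each elementary letter $\rho_i$ is realised by an honest bracket placement, and concatenating these gives a global parenthesization.

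The converse direction, however, has a real gap. You claim that when you read a parenthesization bottom-up, ``the deletion performed at such a node coincides with one application of some $\rho_i$ to the sequence as it stands at that moment''. This is false: the elementary rules $\rho_1,\ldots,\rho_5$ only touch the first block (and, for $\rho_4,\rho_5$, the second), whereas a bracket can combine two terms living in an arbitrarily deep block. Take $\sigma=(5,-5,3,2,1)$, so $\psi(\sigma)=((1,1),(1,0),(1,0),(1,0))$, with the parenthesization $(5\svee -5)\svee(3\svee(2\svee 1))$. The innermost bracket $(2\svee 1)$ deletes the symbol $1$, which sits in the \emph{fourth} block of the current sequence; no $\rho_i$ can do that. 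More generally your stated goal --- producing an $R\in\cL(\Psi)$ with the \emph{same surviving multiset} as the parenthesized evaluation --- is too strong and is not achievable in $\cL(\Psi)$: in the example the bracketing survives only $\{3\}$, whereas any $R$ built from $\rho_1,\ldots,\rho_5$ that returns the value $3$ must leave at least $\{3,2,1\}$ alive.

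The paper sidesteps this by \emph{not} matching bracket-by-bracket. It first enlarges the alphabet to ``basic rules'' ${\rho'}_i^k$ that can act at any level $k$, and shows (easily) that parenthesizations correspond exactly to words in this larger language. Then, for the converse, it abandons the step-by-step tracking entirely: whatever the parenthesization does, its final value is some $\delta n_k$ with $\delta\in\{\pm 1\}$ occurring in $\sigma$, and the paper gives an explicit little algorithm that builds an $R\in\cL(\Psi)$ with $\svee(R(\sigma))=\delta n_k$. Since the theorem only asks for equivalence in the sense of equal $\svee$-value, this suffices. To repair your argument you would need either to adopt this value-only matching, or to prove that any word in the ${\rho'}_i^k$-language making $\sigma$ associative can be reordered into one that always acts on the top two blocks --- which is essentially the same thing.
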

\begin{proof}
Let us define 5 basic rules applied on any sequence $\sigma$ with
$\psi(\sigma)=(p_k,m_k)_{k\in K}$ as follows:
\begin{enumerate}
\item basic rule ${\rho'}_1^k$,  for a given $k\in K$: if $p_k>1$, the
  number $p_k$ is changed into $p_k-1$;
\item basic rule ${\rho'}_2^k$, for a given $k\in K$: if $m_k>1$, the
  number $m_k$ is changed into $m_k-1$;
\item basic rule ${\rho'}_3^k$, for a given $k\in K$: if $p_k>0,m_k>0$,
  the pair $(p_k,m_k)$ is changed into $(p_k-1,m_k-1)$;  
\item basic rule ${\rho'}_4^k$, for $k>1$: if $p_{k}>0$, the number $p_k$
  is changed into $p_k-1$;
\item basic rule ${\rho'}_5^k$, for $k>1$: if $m_{k}>0$, the number $m_k$
  is changed into $m_k-1$,
\end{enumerate}
For all these rules, if a pair (0,0) appears, it is immediately deleted. Observe
that the elementary rules are concatenations of the above basic rules. Indeed, we have:
\[
 \rho_1 = ({\rho'}_1^1)^*,\quad  \rho_2 = ({\rho'}_2^1)^*,\quad  \rho_3 =
 ({\rho'}_3^1)^*,\quad  \rho_4 = ({\rho'}_4^2)^*,\quad  \rho_5 =
 ({\rho'}_5^2)^*. 
\]

\begin{claim*}\label{claim:1}
  Any way of parenthesing a word in $\cL(Z)$ corresponds to a word
(rule) in $\cL(\{{\rho'}_1^k,\ldots,{\rho'}_5^k\}_{k\in\bN})$, and conversely.
\end{claim*}
 
\begin{proof}[Proof of Claim~\ref{claim:1}.]
Indeed,
consider a word $w\in\cL(Z)$: parentheses are put around 2 consecutive elements, like $(a\svee b)$, where $a$ or $b$ can
be the result of a pair of parentheses too. Only three cases can occur:
\begin{enumerate}
\item either $a=b$, then $(a\svee b) = a = b$. This corresponds to basic rules
  ${\rho'}_1^k$ (if $a>0$) or
  ${\rho'}_2^k$ (if $a<0$) for a suitable $k$;
\item or $a=-b$, then $(a\svee b)= 0$. This corresponds to the basic rule
  ${\rho'}_3^k$ for a suitable $k$;
\item otherwise $|a|<|b|$ (or $|a|>|b|$). Then $(a\svee b) = b$ and this
  corresponds to the basic rules ${\rho'}^k_4$ (if $a>0$) or ${\rho'}_5^k$ (if
  $a<0$) for a suitable $k$.\qedhere
\end{enumerate}
\end{proof}
\medskip

\begin{claim*}\label{claim:2}
 Given a sequence $\sigma\in\gS_0$, for any rule $\rho$ in
$\cL(\{{\rho'}_1^k,\ldots,{\rho'}_5^k\}_{k\in\bN})$ making $\sigma$ associative, 
there exists a computation rule $R$ in $\cL(\Psi)$ such that
$\svee(\rho(\sigma)) = \svee(R(\sigma))$.
\end{claim*}
 
\begin{proof}[Proof of Claim~\ref{claim:2}.]
We have already established that any elementary rule is a particular rule in
$\cL(\{{\rho'}_1^k,\ldots,{\rho'}_5^k\}_{k\in\bN})$, and therefore this is true
also for any computation rule in $\cL(\Psi)$.

Take then any rule $\rho$ in $\cL(\{{\rho'}_1^k,\ldots,{\rho'}_5^k\}_{k\in\bN})$
making $\sigma$ associative. The result $\svee(\rho(\sigma))$ is some number in
$\sigma$, say $\delta n_k$, with $\delta =1$ or $-1$ (i.e., the $k$th
positive or negative symbol in $\theta(\sigma)$). Let us construct a
computation rule $R$ such that $\svee(R(\sigma))=\delta n_k$ as follows:
\begin{itemize}
\item Suppose $k=1,\delta=1$ (provided $p_1>1$). Then $R=\rho_2\rho_3$. For
  the case $\delta=-1$, we find $R=\rho_1\rho_3$.
\item Suppose $k>1$, $\delta=1$ (provided $p_k>0$) or $\delta=-1$ (provided
  $m_k>0$). Apply the following algorithm:
  \begin{itemize}
  \item Initialization: $R\leftarrow \varepsilon$
  \item For $i=2$ to $k-1$, Do:
    \begin{itemize}
    \item If $p_i=0$ put $R\leftarrow R\rho_5$
    \item If $m_i=0$ put $R\leftarrow R\rho_4$
    \item Otherwise put $R\leftarrow R\rho_4\rho_5$
    \end{itemize}
  \item Case $\delta=1$: if $m_k>0$, $R\leftarrow R\rho_5$. 
  \item Case $\delta=-1$: if $p_k>0$, $R\leftarrow R\rho_4$.
  \item $R\leftarrow R\rho_1\rho_2\rho_3$
  \end{itemize}
\end{itemize}
By construction, $R$ is equivalent to $\rho$ on $\sigma$, and the proof of 
the claim is now complete.
\end{proof}
Theorem~\ref{thm:computation rules} now follows from Claims~\ref{claim:1} and \ref{claim:2}.
\end{proof}

\begin{remark}
Note that a well-formed rule in
$\cL(\{{\rho'}_1^k,\ldots,{\rho'}_5^k\}_{k\in\bN})$ (i.e., making any $\sigma$
associative)  is not necessarily equivalent to a
w.f.c.r. in $\gR$. For instance, consider the well-formed rule
$\rho={\rho'}_5^3(({\rho'}_1^1)^*({\rho'}_2^1)^*{\rho'}_3^1)^*$, and apply it on
the sequences:
\[
\sigma=(2,3)(1,0)(0,1)(2,1), \quad \sigma'=(2,3)(1,1)(0,1)(2,0).
\] 
Then $\svee(\rho(\sigma)) = n_2$ and $\svee(\rho(\sigma'))=n_4$. Let us try to
build an equivalent w.f.c.r. $R\in \gR$ . Since the second pair in  $\sigma$ is the
final result, one cannot touch it. Therefore, $R$ contains only
$\rho_1,\rho_2,\rho_3$, and thus one finds $-n_3$ on $\sigma'$. Hence, compositions of 
basic rules may result in rules more general than our computation rules. However, those
rules which are not computation rules are rather artificial.
\end{remark}

Hereinafter, we will make use of the following ``factorization scheme" for computation rules. 
\begin{lemma}\label{lem:1a}
Let $R$ be a w.f.c.r. in $\gR$.
\begin{enumerate}
\item {\bf Factorization:} Rule $R$ can be factorized into a composition
\begin{equation}\label{eq:1}
R=T_1T_2\cdots T_i\cdots
\end{equation}
where each term has the form $T_i:=\omega_i\rho_1^{a_i}\rho_2^{b_i}\rho_3$, with
$\omega_i\in\cL(\{\rho_4,\rho_5\})$ (possibly empty),
and $a_i,b_i\in\{0,1\}$. 
\item {\bf Simplification:} Suppose that in \eqref{eq:1} there exists $j\in \mathbb{N}$ such
  that $\omega_j=\omega \rho^*_4$ or $\omega \rho^*_5$ for some $\omega \in
  \cL(\{\rho_4,\rho_5\})$, or that $\rho_4 $ and $
  \rho_5$ alternate infinitely many times in $\omega_j$. Let
\[
k_1=\min\{j:\, \mbox{$\omega_j=\omega \rho^*_4$ or $\omega \rho^*_5$} \}, \mbox{ and} 
\] 
\[
k_2=\min\{j:\, \mbox{$\rho_4 $ and $ \rho_5$ alternate infinitely many times in $\omega_j$} \}.
\]
\begin{itemize}
 \item If $k_1<k_2$, then $R\sim T_1\cdots T_{k_1}$.
\item Otherwise, $k_2\leqslant k_1$, and  $R\sim T_1\cdots T'_{k_2}$, where $T'_{k_2}=(\rho_4\rho_5)^*\rho_1^{a_{k_2}}\rho_2^{b_{k_2}}\rho_3$.
\end{itemize}
\end{enumerate}
\end{lemma}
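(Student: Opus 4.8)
\emph{Factorization.} The plan is to first show that every w.f.c.r.\ $R\in\gR$ admits the factorization \eqref{eq:1}. The key observation is that $\rho_3$ acts as a ``synchronization barrier'': after any application of $\rho_1,\rho_2,\rho_3$ with nontrivial effect, the leading pair $(p_1,m_1)$ has $p_1\wedge m_1=0$, so the sequence momentarily fulfills associativity and no further $\rho_1,\rho_2,\rho_3$ can act until a $\rho_4$ or $\rho_5$ has removed leading mass. I would scan $R$ from left to right and cut it immediately after each occurrence of $\rho_3$ that has a chance to act; within one block between two such cuts, only letters from $\{\rho_1,\rho_2,\rho_3,\rho_4,\rho_5\}$ appear, and I claim each such block is \emph{equivalent} (in the sense of $\sim$, i.e., produces the same $\svee$-value on every $\sigma$) to one of the prescribed form $\omega_i\rho_1^{a_i}\rho_2^{b_i}\rho_3$. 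The arguments needed are: (a) consecutive copies of $\rho_1$ (resp.\ $\rho_2$) are idempotent since after one application $p_1=1$ (resp.\ $m_1=1$), so $\rho_1^k\sim\rho_1$, $\rho_2^k\sim\rho_2$, reducing exponents to $\{0,1\}$; (b) $\rho_3^k\sim\rho_3$ since after one $\rho_3$ either $p_1=0$ or $m_1=0$ and $\rho_3$ is then inert; (c) within a block any $\rho_1,\rho_2$ may be pushed to just before the block's final $\rho_3$ (since $\rho_4,\rho_5$ act on pairs $k\ge 2$ and commute past $\rho_1,\rho_2$ up to the possible renumbering caused by a $(0,0)$-deletion — here one checks that the renumbering does not change which symbol is eventually output); (d) $\rho_4$ and $\rho_5$ commute with each other, so the $\omega_i$ may be taken in $\cL(\{\rho_4,\rho_5\})$ in any convenient normal form. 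Care must be taken with blocks where $\rho_3$ never fires because the leading pair is never ``mixed''; but then the block either does nothing or is absorbed into the neighbouring structure, and a boundary case is $R$ itself being finite — then the tail is padded with inert $\rho_3$'s, which is harmless.

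\emph{Simplification.} For part (ii), I would reason about what happens the first time an ``infinitely long'' $\omega_j$ is encountered. If $\omega_{k_1}=\omega\rho_4^*$ (the case $\rho_5^*$ is symmetric), then once we have consumed the finite prefixes $T_1\cdots T_{k_1-1}$ and the finite part $\omega\rho_1^{a_{k_1}}\rho_2^{b_{k_1}}$, applying $\rho_4^*$ repeatedly drives $p_k\to 0$ for \emph{every} index $k\ge 2$ in turn (each $\rho_4$ zeroes $p_2$, triggering renumbering, so the ``new $p_2$'' is the old $p_3$, etc.), hence after finitely many steps — bounded by the current length of $\sigma$ — the sequence has only one pair left, namely an altered leading pair, and it already fulfills associativity; the trailing $\rho_1^{a_{k_1}}\rho_2^{b_{k_1}}\rho_3$ then just normalizes that pair and outputs its value. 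Consequently the action of $R$ on $\sigma$ is completed by $T_1\cdots T_{k_1}$ and everything after $T_{k_1}$ is inert, giving $R\sim T_1\cdots T_{k_1}$. In the alternating case $k_2\le k_1$, the infinitely-alternating $\omega_{k_2}$ has, as an effective sub-rule, $(\rho_4\rho_5)^*$ up to reordering of finitely many letters, so one shows directly $\omega_{k_2}\sim(\rho_4\rho_5)^*$ on every $\sigma$ (both strip all mass from pairs $k\ge 2$, just possibly in a different order, which does not affect the final symbol because $\rho_4$ and $\rho_5$ commute and a $(0,0)$-deletion happens regardless of order), yielding $T_{k_2}\sim T'_{k_2}=(\rho_4\rho_5)^*\rho_1^{a_{k_2}}\rho_2^{b_{k_2}}\rho_3$, and again $(\rho_4\rho_5)^*$ on a finite $\sigma$ terminates after finitely many steps leaving a single pair, so the rest of $R$ is inert and $R\sim T_1\cdots T'_{k_2}$.

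\emph{Main obstacle.} The delicate point is the bookkeeping around the automatic $(0,0)$-deletion and renumbering in rules $\rho_3,\rho_4,\rho_5$: when I claim I can commute a $\rho_1$ or $\rho_2$ past a $\rho_4\rho_5$ block, or reorder letters inside $\omega_j$, I must verify that the \emph{identity of the symbol eventually returned by $\svee$} is unchanged, not merely that the abstract multiset $\psi(\sigma)$ is. Because renumbering shifts indices, a naive ``the pair-sequences agree'' argument is not quite enough — one has to track the image of the surviving symbol through $\theta$. I would handle this by working not with $\psi$ alone but with the pair $(\theta(\sigma),\psi(\sigma))$, or equivalently by labelling each pair with its original absolute value $n_k$ and checking that commutations preserve which labelled pair survives. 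This is the step where the proof, though conceptually clear, requires the most careful case analysis; all other steps (idempotence of $\rho_1,\rho_2,\rho_3$, termination of $\rho_4^*$ and $(\rho_4\rho_5)^*$ on finite inputs, commutativity of $\rho_4,\rho_5$) are routine consequences of Definition~\ref{def:elem}.
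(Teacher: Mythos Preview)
Your plan for part~(i) is close to the paper's, but two of your claimed facts are false, and one of them undermines your argument for part~(ii).

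\medskip
\textbf{Claim (d) is false.} The rules $\rho_4$ and $\rho_5$ do \emph{not} commute. On $\sigma$ with $\psi(\sigma)=(1,1)(1,0)(0,1)$ one has $\rho_4\rho_5(\sigma)=(1,1)$ but $\rho_5\rho_4(\sigma)=(1,1)(0,1)$, because $\rho_5$ is inert when $m_2=0$ while $\rho_4$ deletes the pair $(1,0)$ and promotes $(0,1)$ to position~2. Indeed the entire later analysis in the paper (Proposition~\ref{prop:ab}, Lemma~\ref{lem:3}) relies on the order of letters in $\omega_i$ being meaningful. Fortunately the lemma only asserts $\omega_i\in\cL(\{\rho_4,\rho_5\})$, so (d) is not needed; the commutation you actually require is (c), that $\rho_1,\rho_2$ commute past $\rho_4,\rho_5$, which is correct since they act on pair~1 versus pair~2.

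\medskip
\textbf{The argument for $k_1<k_2$ has a genuine gap.} You write that $\rho_4^*$ ``drives $p_k\to 0$ for every index $k\ge 2$'' and leaves ``only one pair''. This is wrong: $\rho_4$ only fires when $p_2>0$, so $\rho_4^*$ deletes successive pairs of the form $(p,0)$ and then halts at the first pair $(p',m')$ with $m'>0$, turning it into $(0,m')$; all later pairs survive untouched. For instance $\rho_4^*$ is inert on $(1,1)(0,1)(1,0)$. Thus your termination argument (``one pair left, already associative'') fails. The paper's fix is to observe what $\rho_1^{a_{k_1}}\rho_2^{b_{k_1}}\rho_3$ then does to the leading pair $(p_1,m_1)$: it leaves $(p,0)$, $(0,m)$, or $(0,0)$. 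In the first two cases the sequence is associative and $R$ stops; in the third the leading pair is deleted, the new leading pair is the $(0,m')$ produced by $\rho_4^*$, which is associative, and $R$ again stops. Either way everything after $T_{k_1}$ is inert. Your argument for the $k_2\le k_1$ case is sound, since $(\rho_4\rho_5)^*$ genuinely does delete all pairs beyond the first.

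\medskip
A minor point: your handling of ``$R$ finite'' by padding with inert $\rho_3$'s does not work; a finite $R$ is simply not well-formed (take $\sigma$ long enough), which is the first thing the paper establishes.
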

\begin{proof}
Let $R$ be a w.f.c.r. Then $R$ is necessarily infinite, otherwise one can always
construct a sequence $\sigma$ such that
$R(\sigma)\not\in\gS\setminus\gS_0$. Also, $\rho_3$ necessarily belongs to $R$,
otherwise the sequence $\sigma$ with $\psi(\sigma)=(2,1)$ would not be made
associative by $R$. Therefore, the word $R$ can be cut into terms where $\rho_3$
acts as a separator, i.e., $R=R_1\rho_3R_2\rho_3\cdots$, with
$R_i\in\cL(\{\rho_1,\rho_2,\rho_4,\rho_5\})$. Now observe that $\rho_1$ and
$\rho_1^k$ are equivalent for any $k>1$, and the same holds for
$\rho_2$. Moreover, the order between $\rho_1,\rho_2$ and $\omega_i$ is
unimportant because none of these symbols can make the sequence $\sigma$
associative (i.e., the rule will not stop after applying these elementary
rules), and each of them applies on a different symbol of $\sigma$. This proves
that each term $T_i$ can be written in form (\ref{eq:1}).
 
Observe that since $R$ is infinite, there can be infinitely many factors $T_i$
or finitely many, provided one factor $T_i$ has an infinite $\omega_i$. In the
first case, there is no last factor and the proof of (i) is complete. In the
second case, it remains to prove that the last factor $T_l$ has the same form, i.e.,
it ends with $\rho_3$. Suppose on the contrary that there are elementary rules
$\rho_4,\rho_5$ after $\rho_3$. If $\sigma$ is made associative after applying
$\rho_3$, then the rule stops and the remaining $\rho_4,\rho_5$ are useless. If
not, it is because $\rho_3$ has acted on a pair $(p,p)$ with $p>0$. But if the
next pair is, say, (1,1), $\sigma$  will not be made associative by the
remaining $\rho_4,\rho_5$, contradicting the fact that $R$ is well-formed.

Let us prove (ii). Suppose first that $k_2\leqslant k_1$. Observe that any $\omega_i$
where $\rho_4,\rho_5$ alternate infinitely many times is equivalent to
$(\rho_4\rho_5)^*$. Moreover, $(\rho_4\rho_5)^*$ deletes all pairs after the
current one. Therefore, it remains only the current pair, and
$\rho_1^{a_{k_2}}\rho_2^{b_{k_2}}\rho_3$ necessarily stops on it, for any value
of $a_{k_2},b_{k_2}$. 

Suppose now that $k_1<k_2$, and $\omega_{k_1}=\omega\rho_4^*$ (the other case is
similar). Then $\rho^*_4$ deletes all pairs after the current pair of the
form $(p',0)$, and stops at the first pair of the form $(p',m')$ with $m'>0$,
which is transformed into $(0,m')$. Then
$\rho_1^{a_{k_1}}\rho_2^{b_{k_1}}\rho_3$ makes the current pair either of the
form $(0,0)$, or $(p,0)$ or $(0,m)$. In the two last cases, $R$ stops. In the
first case, the current term is deleted, and the next pair encountered is
$(0,m')$, where the rule stops. The proof of (ii) is complete.
\end{proof}

\begin{remark} Note that (ii) of Lemma \ref{lem:1a} does not refer to
every $\omega$ containing a $\rho_4^*$ or a $\rho_5^*$. For instance, 
if $\omega=\rho_5\rho_4^*\rho_5$, then the subsequent terms of $R$ are relevant. 
  \end{remark}

\begin{remark}\label{rem:2} If $T_i:=\omega_i\rho_1^{a_i}\rho_2^{b_i}\rho_3$, where
$\omega_i\neq (\rho_4\rho_5)^*, \omega\rho_4^*, \omega\rho_5^*$, for any
$\omega \in  \cL(\{\rho_4,\rho_5\})$, then there is $\sigma_i$ such that 
$T_i(\sigma_i)=\varepsilon$ and $T_i(\sigma_i\sigma)=\sigma$ for every $\sigma \in\gS$. 
  \end{remark}

We refer to the compositions given in (ii) as \emph{factorized irredundant forms} of computation rules. 
For instance,
$\langle\cdot\rangle_-^+$ can be factorized into two equivalent compositions
\[
\langle\cdot\rangle_-^+=(\rho_4\rho_5)^*(\rho_1\rho_2\rho_3)^* = (\rho_4\rho_5)^*\rho_1\rho_2\rho_3.
\] 
but only the second is a factorized irredundant form.
Note that our previous examples of w.f.c.r.'s are given in factorized irredundant forms.

Now it is natural to ask whether two equivalent rules have
necessarily the same factorized irredundant form.
The next proposition shows that there is a
unique factorized irredundant form for each equivalence class of computation rules.

\begin{proposition}\label{prop:ab}
Let $T=T_1\cdots T_n$ and $T'=T'_1\cdots T'_m$ be two rules in factorized
irredundant form, where $n,m$ may be infinite. Then $T\sim T'$ if and only if $n=m$ and 
for every $1\leqslant i\leqslant n$, $T_i=T'_i$.
\end{proposition}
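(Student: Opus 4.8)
The plan is to prove the non-trivial direction (uniqueness): if $T=T_1T_2\cdots$ and $T'=T'_1T'_2\cdots$ are in factorized irredundant form and $T\sim T'$, then they agree term by term (and hence have the same length). I would argue by induction on the index $i$, showing that $T_i=T'_i$ for all $i$, using carefully designed test sequences in $\gS_0$ to distinguish rules that differ at some position. The key tool is Remark~\ref{rem:2}: a term $T_i=\omega_i\rho_1^{a_i}\rho_2^{b_i}\rho_3$ whose $\omega_i$ is \emph{not} one of the ``terminal'' shapes $(\rho_4\rho_5)^*,\omega\rho_4^*,\omega\rho_5^*$ admits a sequence $\sigma_i$ with $T_i(\sigma_i)=\varepsilon$ and $T_i(\sigma_i\sigma)=\sigma$ for all $\sigma$; this lets me ``feed'' a prefix that is consumed exactly by $T_1\cdots T_{i-1}$ and then probe what $T_i$ does on a freely chosen continuation, reducing the global comparison to a local one.

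The first step is the base case. I would show $T_1=T'_1$ by splitting according to the shape of $\omega_1$ and $\omega'_1$. If $\omega_1=(\rho_4\rho_5)^*$ (respectively $\omega\rho_4^*$, $\omega\rho_5^*$), then by irredundancy $T_1$ is the last term; I test on sequences $\sigma\in\gS_0$ with many pairs, all strictly decreasing in absolute value, to detect which pairs beyond the first get wiped out, and with $(p_1,m_1)$ chosen so that $\rho_1^{a_1}\rho_2^{b_1}\rho_3$ leaves a recognizable residue (e.g. $\psi(\sigma)=((2,1),\dots)$ vs $((1,2),\dots)$ distinguishes $a_1=1,b_1=0$ from $a_1=0,b_1=1$; $(2,2)$ detects whether $\rho_3$ alone suffices). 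Conversely if $\omega_1$ has a non-terminal shape, some finite prefix of $\omega_1$ (a finite word in $\rho_4,\rho_5$) determines the behaviour on a suitable $\sigma$, and I choose test sequences whose second, third, \dots\ pairs have various $(p,m)$ patterns so that the pattern of which $p_k$'s and $m_k$'s are zeroed reads off $\omega_1$ letter by letter, and the treatment of the first pair reads off $a_1,b_1$. In all cases a disagreement at position $1$ produces a sequence on which $\svee(T(\sigma))\neq\svee(T'(\sigma))$, using the encoding $\theta,\psi$ to translate the residual pair into an actual value.

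For the inductive step, assume $T_j=T'_j$ for all $j<i$. I would pick, using Remark~\ref{rem:2} applied to $T_1,\dots,T_{i-1}$ (each of which, being \emph{not} the last term, has non-terminal $\omega_j$-shape), a sequence $\pi=\sigma_1\sigma_2\cdots\sigma_{i-1}$ such that $T_1\cdots T_{i-1}$ maps $\pi\sigma$ to $\sigma$ for every $\sigma\in\gS$ — concretely, concatenating the witnessing sequences with absolute values shifted up so the $\sigma_j$-part of each stage is the current largest block and gets fully consumed. Then $\svee(T(\pi\sigma))=\svee(T_i(\sigma))$ and $\svee(T'(\pi\sigma))=\svee(T'_i(\sigma))$, so $T\sim T'$ forces $T_i$ and $T'_i$ to give the same value on \emph{all} $\sigma\in\gS$; the base-case analysis (with ``last term'' now meaning last term of the residual rule) then yields $T_i=T'_i$. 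Finiteness of the length is handled inside the same scheme: if $T$ has $n<\infty$ terms while $T'$ has an $(n+1)$st term, then after consuming a prefix as above, $T$ acts as the identity on a continuation in $\gS_0$ while $T'_{n+1}\cdots$ still makes it associative, a contradiction since the empty residual rule is not well-formed.

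The main obstacle I expect is the bookkeeping in the inductive step: arranging the prefix $\pi$ so that $T_1\cdots T_{i-1}$ consumes exactly $\sigma_1\cdots\sigma_{i-1}$ and then behaves as the identity, \emph{uniformly} over all continuations $\sigma$, requires that each $\sigma_j$ sits in its own absolute-value band (larger than everything in $\sigma_{j+1}\cdots\sigma_{i-1}\sigma$) and that the residual pair left by $T_j$ on $\sigma_j$ is $\varepsilon$, not merely associative — this is exactly what Remark~\ref{rem:2} guarantees, but one must check that the $\rho_4,\rho_5$ in later terms $T_{j+1},\dots$ do not reach back into already-deleted bands (they cannot, since those pairs are gone) and that $\rho_1,\rho_2,\rho_3$ in $T_j$ act only within band $j$. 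Once this ``band separation'' is set up cleanly, the rest is the finite case analysis on the shape of a single factor $\omega_i\rho_1^{a_i}\rho_2^{b_i}\rho_3$, which is routine.
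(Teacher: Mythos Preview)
Your approach is essentially the paper's: both consume a prefix via Remark~\ref{rem:2} to reduce the global comparison to a single-factor comparison, and then probe that factor with explicit test sequences (the paper's are of the shape $(1,1)(1,0)^a(0,1)[(1,0)(0,1)]^q(1,0)(0,1)^b$ together with $(2,1),(1,2),(2,2)$ to read off $\omega_i$ and $(a_i,b_i)$). One small wrinkle: your closing argument for the length mismatch is muddled --- after the prefix is consumed, $T$ does not ``act as the identity'' (the terminal factor $T_n$ still acts) --- but this is harmless, since your induction already detects the discrepancy at position $n$, where a terminal $\omega_n$ cannot equal a non-terminal $\omega'_n$.
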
 
\begin{proof}
Clearly, the conditions are sufficient. So let us prove that they are also necessary.

First, we show that $n=m$. For a contradiction, suppose that $n\neq m$, say $n<m$.
In particular, for every $j<m$, $\omega'_j$ is not of the $\omega \rho^*_4$ nor
$\omega \rho^*_5$ form for any $\omega \in \cL(\{\rho_4,\rho_5\})$,
and $\omega'_j\neq (\rho_4\rho_5)^*$.

Note that $(a_1,b_1)=(a'_1,b'_1)$, otherwise $T\not\sim T'$ (just consider $(2,1)$, $(2,2)$, or $(1,2)$).
Thus, to verify that $T_1=T'_1$, it suffices to show that $\omega_1=\omega'_1$. Let $p$ and $p'$ be the
number of times that $\rho_4 $ and $ \rho_5$ alternate in $\omega_1$ and $\omega'_1$, respectively.
It is easy to see that $p=p'$ (just consider sequences of the form $(1,1)(1,0)^a(0,1)[(1,0)(0,1)]^q(1,0)(0,1)^b$, for suitable $a,b\in \{0,1\}$
and $q\in \mathbb{N}$). Moreover, either both start with $\rho_4$ or both start with $\rho_5$ (just consider strings of the form 
$(1,1)[(1,0)(0,1)]^p$).

So suppose both start with $\rho_4$ and $p=2t-1$ (the case $p=2t$ is similar), say 
$$\omega_1=\rho_4^{l_1}\rho_5^{r_1}\cdots \rho_4^{l_{t}}\rho_5^{r_t}\, \mbox{ and }\, \omega'_1=\rho_4^{l'_1}\rho_5^{r'_1}\cdots \rho_4^{l'_t}\rho_5^{r'_t}$$
where $r_t,r'_t\neq 0$, and let $k=\min\{j:l_j\neq l'_j \, \mbox{or} \, r_j\neq r'_j\}$, say $l_k< l'_k$ (the other cases are dealt with similarly).
Then, for 
\[
\sigma=(1,1)[(1,0)(0,1)]^{k-1}(1,0)^{l_k+1}(0,1)[(1,0)(0,1)]^{t-k}
\]
$\svee(T(\sigma))> \svee(T'(\sigma))$,
which contradicts $T\sim T'$.
Hence, $\omega_1=\omega'_1$, and we conclude $T_1=T'_1$. In fact,
following exactly the same steps, one can verify that $T_i=T'_i$, for every $i<n$.

Now, as in the case above $(a_n,b_n)=(a'_n,b'_n)$, otherwise $T\not\sim T'$.
Moreover, by assumption, we have that $\omega_n=\omega \rho^*_4$ or $\omega \rho^*_5$ for some $\omega \in \cL(\{\rho_4,\rho_5\})\setminus \{(\rho_4\rho_5)^*\}$,
or that $\omega_n=(\rho_4\rho_5)^*$. 
Since $\omega'_n\neq (\rho_4\rho_5)^*$, $\omega_n\neq (\rho_4\rho_5)^*$. Hence, $\rho_4 $ and $ \rho_5$
must alternate the same number of times, say  
\[
\omega_n=\rho_4^{l_1}\rho_5^{r_1}\cdots \rho_4^{l_{t}}\rho_5^{r_t}\, \mbox{ and }\, \omega'_n=\rho_4^{l'_1}\rho_5^{r'_1}\cdots \rho_4^{l'_t}\rho_5^{r'_t},
\] 
 where either $l_t=*\neq l'_t$ and $r_t=r'_t=0$, or $r_t=*\neq r'_t$ and $l_t, l'_t>0$. Without loss of generality, suppose that the latter holds.
Then, for 
\[
\sigma=(1,1)[(1,0)(0,1)]^{n-1}(1,0)(0,1)^{r'_t+1}\,
\]
$ \svee(T(\sigma))<  \svee(T'(\sigma)),$  again a contradiction.
Using Lemma~\ref{lem:1a} (ii), we see that all possible cases have been considered and, since each leads to a contradiction, we have
 $n=m$.

Now, by making use of (concatenations of) sequences of the form 
\[
(1,1)(1,0)^a(0,1)[(1,0)(0,1)]^q(1,0)(0,1)^b(2,1)^{c}(2,2)^d(1,2)^e,
\]
 if both have infinitely many terms 
$T_i, T'_i$, then $T_i= T'_i$ for every $i\in \mathbb{N}$, and 
if $T$ and $T'$ have  the same (finite) number of terms, say $n$, then $T_i=T'_i$ for every $i<n$, and $(a_n,b_n)=(a'_n,b'_n)$. 

Thus, to complete the proof it remains to show that in the latter case, we have $\omega_n=\omega'_n$; in fact, both $\omega_n$
and $\omega'_n$ are $(\rho_4\rho_5)^*$,
or $\omega \rho^*_4$ or $\omega \rho^*_5$ for some $\omega \in \cL(\{\rho_4,\rho_5\})\setminus \{(\rho_4\rho_5)^*\}$.

For the sake of a contradiction, suppose first that $\omega_n=(\rho_4\rho_5)^*$ but $\omega'_n=\omega \rho^*_4$ or $\omega'_n=\omega \rho^*_5$ where $\rho_4$ and $\rho_5$ alternate
finitely many times in $\omega$, say $p$ times. Then 
\[
 \svee(T(\sigma(1,1)[(1,0)(0,1)]^{p+2}))<  \svee(T'(\sigma(1,1)[(1,0)(0,1)]^{p+2})),
\]
where $\sigma$ is the concatenation of the sequences $\sigma_i$, $1\leqslant i<n$, given in Remark~\ref{rem:2}. 

Now suppose that $\omega_n=\omega \rho^*_4$ and $\omega'_n=\omega' \rho^*_5$ where $\rho_4$ and $\rho_5$ alternate
finitely many times in $\omega$ and $\omega'$, say $p$ and $p'$ times, respectively. (The remaining cases can be dealt with similarly.)
Without loss of generality, suppose that $p\leqslant p'$, then taking $a$ as the ceiling of $\frac{p}{2}$ we have 
\begin{itemize}
 \item 
 $ \svee(T(\sigma(1,1)[(1,0)(0,1)]^{a}(0,1)))>  \svee(T(\sigma(1,1)[(1,0)(0,1)]^{a}(0,1))),$
  if $\omega $ and $\omega'$ both start with $\rho_4$ or $\rho_5$, or
 $\omega $ and $\omega'$ start with $\rho_5$ and $\rho_4$, respectively, and
\item 
$ \svee(T(\sigma(1,1)(0,1)[(1,0)(0,1)]^{a}(0,1)))> \svee(T(\sigma(1,1)(0,1)[(1,0)(0,1)]^{a}(0,1))),$
 if $\omega $ and $\omega'$ start with $\rho_4$ and $\rho_5$, respectively,
\end{itemize}
where again $\sigma$ is the concatenation of the sequences $\sigma_i$, $1\leqslant i<n$, given in Remark~\ref{rem:2}. 

Since both cases yield the desired contradiction, the proof is now complete.
\end{proof}


\section{The poset $(\gR/_\sim,\leqslant)$ of computation rules}\label{sec:struc}
 The above considerations allow us to focus on the quotient $\gR/_\sim$ of
equivalence classes rather than on the whole set of w.f.c.r.'s.
Moreover, by making use of Lemma \ref{lem:1a}, we can focus
on factorized irredundant forms. 

We consider the following order $\leqslant$ on $\gR/_\sim$ which was introduced in  \cite{Gra03}.  
Let $R,R'$ be two computation
rules in $\gR/_\sim$ and, for each sequence $\sigma=(a_i)_{i\in I}$, let
 $J_\sigma$ and $J'_\sigma$, $J_\sigma,J'_\sigma\subseteq I$, 
 be the sets of indices of the terms in $\sigma$ deleted by $R$ and $R'$,
respectively. Then, we write $R\leqslant R'$ if for all sequences $\sigma\in\gS$ we have
$J_\sigma\supseteq J'_\sigma$. To simplify our exposition, 
we use $R(\sigma)\sqsubseteq R'(\sigma)$ to denote the fact that $J_\sigma\supseteq J'_\sigma$. 
If $J_\sigma= J'_\sigma$, then we simply write $R(\sigma)= R'(\sigma)$.
Moreover, we may adopt the same notation to arbitrary substrings of w.f.c.r.'s.

It is easy to verify that
$\leqslant$ is reflexive and transitive (but,  as we will see, not linear).  
Also, it is antisymmetric: if two rules
$R,R'$ delete exactly the same terms, i.e., $R\leqslant R'$ and $R'\leqslant R$, then they are
equivalent. Conversely, it follows from
Proposition~\ref{prop:ab} that if two rules are equivalent, then they have the same
factorized irredundant form, therefore $R\leqslant R'$ and $R'\leqslant R$.
Thus,  $(\gR/_\sim,\leqslant)$ is a poset (partially ordered set). 
In what follows, we make no distinction between w.f.c.r.'s and the elements of $\gR/_\sim$ which will be always written in the factorized irredundant form.

\subsection{Preliminary results}

In the sequel, let $\omega, \omega'\in\cL(\{\rho_4,\rho_5 \})$, 
and $a,b, c,d\in\{0,1\}$.

\begin{lemma}\label{lem:4}
Let $T,T' \in\gR/_\sim$. If $T\geqslant T'$, then 
$\omega\rho_1^a \rho_2^b \rho_3T\geqslant \omega\rho_1^a \rho_2^b \rho_3T'$.
Moreover, if $\rho_4$ and $\rho_5$ alternate finitely many times in $\omega$,
then $\omega\rho_1^a \rho_2^b \rho_3T> \omega\rho_1^a \rho_2^b \rho_3T'$ 
(resp. $\omega\rho_1^a \rho_2^b \rho_3T\parallel \omega\rho_1^a \rho_2^b \rho_3T'$)
if and only if $T> T'$ (resp.  $T\parallel T'$).
\end{lemma}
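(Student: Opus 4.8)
The statement splits into a monotonicity part and, under the extra hypothesis on $\omega$, an equivalence for strictness/incomparability. For the monotonicity part, the plan is to analyze how the prefix $\omega\rho_1^a\rho_2^b\rho_3$ acts on an arbitrary sequence $\sigma\in\gS_0$ and to show that the ``tail'' of the computation — namely whatever is left for $T$ or $T'$ to act on — is the \emph{same} sequence in both cases. Concretely, $\omega\in\cL(\{\rho_4,\rho_5\})$ only deletes pairs strictly after the current (first) pair, and $\rho_1^a\rho_2^b\rho_3$ acts only on the current pair; so after running the whole prefix we either (a) reach a sequence that already fulfills associativity, in which case $T$ and $T'$ both leave it untouched and $\omega\rho_1^a\rho_2^b\rho_3 T(\sigma)=\omega\rho_1^a\rho_2^b\rho_3 T'(\sigma)$, or (b) the prefix produces some sequence $\tau\in\gS_0$ independent of what follows, and then $T(\tau)\sqsubseteq T'(\tau)$ by the hypothesis $T\geqslant T'$, whence $\omega\rho_1^a\rho_2^b\rho_3T(\sigma)\sqsubseteq \omega\rho_1^a\rho_2^b\rho_3T'(\sigma)$. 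The subtlety is that ``$\omega$ produces the same $\tau$ regardless of the tail'' can fail when $\omega$ contains a $\rho_4^*$ or $\rho_5^*$ block followed by more letters, or alternates infinitely often — this is exactly the phenomenon flagged in the Remark after Lemma~\ref{lem:1a} — so I expect to handle the general case by invoking the simplification part of Lemma~\ref{lem:1a} (or by arguing separately that when $\omega$ is of the ``infinitely deleting'' type, $\omega\rho_1^a\rho_2^b\rho_3$ is itself a w.f.c.r.\ and both sides collapse to it, giving equality).

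For the second part, assume $\rho_4,\rho_5$ alternate only finitely many times in $\omega$. The ``if'' directions (i.e.\ $T>T'\Rightarrow$ strict, $T\parallel T'\Rightarrow$ incomparable) are where the real work is. By Remark~\ref{rem:2} (using that $\omega$ is not of the form $(\rho_4\rho_5)^*$, $\omega'\rho_4^*$, or $\omega'\rho_5^*$), there is a sequence $\sigma_0$ with $(\omega\rho_1^a\rho_2^b\rho_3)(\sigma_0)=\varepsilon$ and, crucially, $(\omega\rho_1^a\rho_2^b\rho_3)(\sigma_0\sigma)=\sigma$ for every $\sigma\in\gS$. So given a witness sequence $\sigma$ on which $T$ deletes strictly more than $T'$ (resp.\ on which they are incomparable), the concatenation $\sigma_0\sigma$ is a witness that $\omega\rho_1^a\rho_2^b\rho_3T$ deletes strictly more than (resp.\ is incomparable with) $\omega\rho_1^a\rho_2^b\rho_3T'$: the prefix peels off $\sigma_0$ exactly, leaving $T$ and $T'$ to act on $\sigma$ as before. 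The ``only if'' directions are the contrapositives of the first part: if $T\leqslant T'$ then $\omega\rho_1^a\rho_2^b\rho_3T\leqslant\omega\rho_1^a\rho_2^b\rho_3T'$ by the monotonicity already proved, so the composed rules cannot be strictly ordered the other way nor incomparable; combining with the forward direction gives the equivalences.

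The main obstacle, I expect, is the bookkeeping in the monotonicity step when $\omega$ is not ``well-behaved'' — precisely, making rigorous that the output of the prefix $\omega\rho_1^a\rho_2^b\rho_3$ on $\sigma$ does not depend on the suffix $T$ except in the degenerate case where the prefix already terminates the computation. I would organize this as a short case analysis on the form of $\omega$ dictated by Lemma~\ref{lem:1a}(ii): either $\rho_4,\rho_5$ alternate finitely often in $\omega$ and $\omega$ contains no $\rho_4^*$/$\rho_5^*$ (the clean case, where Remark~\ref{rem:2} applies and the prefix is a ``deletion-independent'' operator), or $\omega$ is one of $(\rho_4\rho_5)^*$, $\omega'\rho_4^*$, $\omega'\rho_5^*$ (where $\omega\rho_1^a\rho_2^b\rho_3$ is already well-formed and both composites equal it, so equality holds trivially and the strict/incomparable clause is vacuous since its hypothesis on $\omega$ excludes these). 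Everything else is routine tracking of which pair is ``current'' after each elementary rule fires.
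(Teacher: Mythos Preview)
Your approach is essentially the paper's: determinism of the prefix $\omega\rho_1^a\rho_2^b\rho_3$ gives the monotonicity (the tail passed to $T$ or $T'$ is the same), and witness-lifting via the $\sigma_0$ of Remark~\ref{rem:2} gives the strict and incomparable equivalences. Two small corrections. First, under the paper's conventions $T\geqslant T'$ means $T(\tau)\sqsupseteq T'(\tau)$ (fewer deletions), so your displayed inequalities are reversed. Second, your dichotomy on $\omega$ is miscalibrated: ``$\rho_4,\rho_5$ alternate finitely many times'' does \emph{not} exclude $\omega'\rho_4^*$ or $\omega'\rho_5^*$ (e.g.\ $\omega=\rho_4^*$ has zero alternations), and ``contains no $\rho_4^*/\rho_5^*$'' is stricter than Remark~\ref{rem:2} needs (e.g.\ $\omega=\rho_5\rho_4^*\rho_5$ is admissible there). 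The correct split is exactly the one in Remark~\ref{rem:2}: either $\omega\notin\{(\rho_4\rho_5)^*,\,\omega'\rho_4^*,\,\omega'\rho_5^*\}$, in which case the witness-lifting works, or $\omega$ is of that terminating form, in which case both composites collapse to $\omega\rho_1^a\rho_2^b\rho_3$. The paper's own proof carries the same implicit restriction (it too invokes Remark~\ref{rem:2}) and relies on the standing convention that rules are in factorized irredundant form, so a non-final factor never has $\omega$ of the terminating kind.
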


\begin{proof}
From Lemma~\ref{lem:1a} (ii), if $\rho_4$ and $\rho_5$ alternate infinitely many times in $\omega$, then
\[
\omega\rho_1^a \rho_2^b \rho_3T=(\rho_4\rho_5)^*\rho_1^a \rho_2^b \rho_3 =\omega\rho_1^a \rho_2^b \rho_3T'.
\]
So we may assume that $\omega:=\rho_4^{a_1}\rho_5^{b_1}\cdots\rho_4^{a_n}\rho_5^{b_n}$, with
$a_i,b_i\in \bN \cup \{*\}$. We assume also that $a_i\neq 0$ for $2\leqslant
i\leqslant n$, and $b_i\neq 0$ for $1\leqslant n-1$. We treat the case $a_1\neq 0$, $b_n\neq 0$, the remaining cases follow similarly.

To see that $\omega\rho_1^a \rho_2^b \rho_3T\geqslant \omega\rho_1^a \rho_2^b \rho_3T'$, just note that
 for every string $\gamma =(p_1,m_1)\cdots (p_k,m_k)$, $(p_1,m_1)\geq (1,1)$, we have
\begin{eqnarray*}
\omega\rho_1^a \rho_2^b \rho_3T(\gamma)&=&(\rho_1^a \rho_2^b \rho_3)(p_1,m_1)T((p'_1,m'_1)\cdots (p'_{k'},m'_{k'}))\\
 &\sqsupseteq &
(\rho_1^a \rho_2^b \rho_3)(p_1,m_1)T'((p'_1,m'_1)\cdots (p'_{k'},m'_{k'}))=\omega\rho_1^a \rho_2^b \rho_3T'(\gamma).
\end{eqnarray*}
Hence, $\omega\rho_1^a \rho_2^b \rho_3T\geqslant \omega\rho_1^a \rho_2^b \rho_3T'$, and by antisymmetry, the strict inequality
occurs if and only if $T>T'$. 
Similarly, if $T\parallel T'$, then by taking 
\begin{itemize}
 \item $\sigma_>$ and $\sigma_<$ such
that  $T(\sigma_>)\sqsupset T'(\sigma_>)$
and  $T(\sigma_<)\sqsubset T'(\sigma_<)$, respectively, and 
\item $\gamma_>=\sigma\sigma_>$ and $\gamma_<=\sigma\sigma_<, $ where 
$\sigma$ is given in Remark~\ref{rem:2} (for $T_i=\omega\rho_1^a \rho_2^b \rho_3$),
\end{itemize}
we can verify that  
$\omega\rho_1^a \rho_2^b \rho_3T\parallel \omega\rho_1^a \rho_2^b \rho_3T'$.  
This completes the proof of the lemma.
\end{proof}

By repeated applications of Lemma \ref{lem:4}, we have the following corollary.

\begin{corollary}\label{cor:5}
Let $T, T'\in \gR/_\sim$, and let $R=T_1T_2\cdots T_m\in \cL(\Psi)$, where 
$T_i=\omega_i\rho_1^{a_i} \rho_2^{b_i} \rho_3$. 
If  $T\geqslant T'$, then $RT\geqslant RT'$. 
Furthermore, if $\rho_4$ and $\rho_5$ alternate finitely many times in each $\omega_i$,
then $RT> RT'$ (resp. $RT\parallel RT'$) if and only if $T> T'$ (resp. $T\parallel T'$).
\end{corollary}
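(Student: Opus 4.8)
The plan is to prove Corollary~\ref{cor:5} by a straightforward induction on $m$, the number of factors $T_i$ appearing in $R$, using Lemma~\ref{lem:4} as the single-factor step. The base case $m=1$ is exactly the statement of Lemma~\ref{lem:4} applied to $T_1=\omega_1\rho_1^{a_1}\rho_2^{b_1}\rho_3$: from $T\geqslant T'$ we get $T_1T\geqslant T_1T'$, and when $\rho_4$ and $\rho_5$ alternate finitely many times in $\omega_1$ we moreover get the strict/incomparable equivalences. For the inductive step, write $R=T_1R''$ with $R''=T_2\cdots T_m$; by the induction hypothesis applied to $R''$ we have $R''T\geqslant R''T'$ (monotonicity part), and then one more application of Lemma~\ref{lem:4} with the single factor $T_1$ yields $T_1(R''T)\geqslant T_1(R''T')$, i.e.\ $RT\geqslant RT'$. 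For the refined statement, when every $\omega_i$ has finite alternation, the induction hypothesis gives that $R''T>R''T'$ iff $T>T'$ and $R''T\parallel R''T'$ iff $T\parallel T'$; composing with the finite-alternation case of Lemma~\ref{lem:4} for $T_1$ preserves these equivalences (an iff chained with an iff), so the same holds for $RT$ versus $RT'$.

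First I would dispatch a small subtlety: Lemma~\ref{lem:4} as stated composes a single term $\omega\rho_1^a\rho_2^b\rho_3$ on the left of an element of $\gR/_\sim$, so at each stage of the induction I need $R''T$ and $R''T'$ to be (or to be treated as) elements of $\gR/_\sim$, i.e.\ well-formed rules in factorized irredundant form. Since $T,T'\in\gR/_\sim$ are w.f.c.r.'s and prepending finitely many terms $\omega_i\rho_1^{a_i}\rho_2^{b_i}\rho_3$ to a w.f.c.r.\ yields again a w.f.c.r.\ (the suffix already makes every tail sequence associative), this is harmless; I would note it in one line and invoke Lemma~\ref{lem:1a} if a reduction to irredundant form is needed before each application. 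The monotonicity of $\leqslant$ under the composition operation is really the only thing being used, and it is inherited termwise.

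The main (minor) obstacle is purely bookkeeping: making sure the ``finitely many times'' hypothesis is propagated correctly through the induction. It is a hypothesis on each $\omega_i$ individually, so it is stable under removing the first factor (the remaining $\omega_2,\dots,\omega_m$ still all have finite alternation), which is exactly what the induction on $m$ requires; there is no interaction between factors to worry about. One should also observe that $m$ may a priori be infinite (an infinite concatenation of terms), but the statement of the corollary as used only concerns finite $R=T_1\cdots T_m$, and in any case the argument is a finite induction reaching the given $m$. So after stating the induction and invoking Lemma~\ref{lem:4} at the base and step, together with the remark that the finite-alternation property passes to $T_2\cdots T_m$, the proof is complete; I would write it in two or three sentences rather than belabour it.

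\begin{proof}
We argue by induction on $m$. For $m=1$ the claim is precisely Lemma~\ref{lem:4} applied to the term $T_1=\omega_1\rho_1^{a_1}\rho_2^{b_1}\rho_3$. Assume the statement holds for rules with $m-1$ terms, and write $R=T_1R''$ with $R''=T_2\cdots T_m$. If $T\geqslant T'$, the induction hypothesis gives $R''T\geqslant R''T'$, and Lemma~\ref{lem:4} (applied with the single term $T_1$) yields $T_1(R''T)\geqslant T_1(R''T')$, that is $RT\geqslant RT'$. Suppose in addition that $\rho_4$ and $\rho_5$ alternate finitely many times in each $\omega_i$; in particular this holds for $\omega_2,\dots,\omega_m$, so by the induction hypothesis $R''T>R''T'$ (resp.\ $R''T\parallel R''T'$) if and only if $T>T'$ (resp.\ $T\parallel T'$). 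Since $\omega_1$ also has finite alternation, Lemma~\ref{lem:4} gives $T_1(R''T)>T_1(R''T')$ (resp.\ $T_1(R''T)\parallel T_1(R''T')$) if and only if $R''T>R''T'$ (resp.\ $R''T\parallel R''T'$). Chaining the two equivalences we obtain that $RT>RT'$ (resp.\ $RT\parallel RT'$) if and only if $T>T'$ (resp.\ $T\parallel T'$), which completes the induction and the proof.
\end{proof}
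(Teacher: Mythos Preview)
Your proof is correct and is exactly the approach the paper takes: the paper simply states that the corollary follows ``by repeated applications of Lemma~\ref{lem:4}'', and your induction on $m$ is the natural way to make this explicit. Your handling of the minor subtlety (that $R''T,R''T'$ remain w.f.c.r.'s so that Lemma~\ref{lem:4} applies at each step) is appropriate and could indeed be compressed to a single remark.
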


\begin{remark}\label{rem:123}
In fact, by Corollary~\ref{cor:5} it follows that if $\rho_4$ and $\rho_5$ alternate finitely many times in each $\omega_i$, then $T\geqslant T'$ (resp. $T\parallel T'$) if and only if  $RT\geqslant RT'$
(resp. $RT\parallel RT'$).
\end{remark}

\begin{lemma}\label{lem:1}
Let $T,T'\in\cL(\Psi)$ such that $T\geqslant T'$. Then $\omega\rho_1^a \rho_2^b \rho_3T\geqslant  \omega\rho_1^c \rho_2^d \rho_3T'$ if and only if $(a,b)= (c,d)$ or $(c,d)=(1,1)$. 
Moreover, $\omega\rho_1^a \rho_2^b \rho_3T> \omega\rho_1^c \rho_2^d \rho_3T'$ if and only if 
$(a,b)\neq (c,d)=(1,1)$. 
\end{lemma}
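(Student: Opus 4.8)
The plan is to reduce Lemma~\ref{lem:1} to the already-established Lemma~\ref{lem:4} by a careful case analysis on the pairs $(a,b)$ and $(c,d)$, keeping track of what the prefix $\omega\rho_1^x\rho_2^y\rho_3$ does to the leading pair of a sequence before the tail rule acts. First I would record the key computational fact underlying everything: for a sequence $\sigma$ with leading pair $(p_1,m_1)\geqslant(1,1)$, after $\omega$ (which alternates finitely often, so by Lemma~\ref{lem:1a}(ii) it is a genuine finite/eventually-constant word on $\{\rho_4,\rho_5\}$) acts on the later pairs, the block $\rho_1^x\rho_2^y\rho_3$ acts only on the current leading pair: $\rho_1^1$ forces $p_1\mapsto 1$, $\rho_2^1$ forces $m_1\mapsto 1$, and $\rho_3$ then subtracts $\min$ of the two. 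So $(x,y)=(1,1)$ turns $(p_1,m_1)$ into $(0,0)$ and deletes it, exposing the next pair (which, crucially, is the pair that $\omega$ has already partially processed and that the tail rule $T$ or $T'$ then sees); whereas $(x,y)=(1,0)$ leaves $(0,m_1)$ if $m_1>0$ and stops there (value $-n_1$), $(x,y)=(0,1)$ leaves $(p_1,0)$ and stops (value $n_1$), and $(x,y)=(0,0)$ with $p_1=m_1$ also deletes the pair, else stops. The point is that when $(x,y)\neq(1,1)$, the rule \emph{halts on the current pair} for suitably chosen $\sigma$ (e.g. $\psi(\sigma)$ starting with $(1,1)$ or $(2,1)$ or $(1,2)$), so the tail $T$ or $T'$ is irrelevant, and the comparison between $\omega\rho_1^a\rho_2^b\rho_3 T$ and $\omega\rho_1^c\rho_2^d\rho_3 T'$ is governed purely by $(a,b)$ versus $(c,d)$.

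Next I would split into the cases. If $(a,b)=(c,d)$: this is exactly Lemma~\ref{lem:4}, giving $\geqslant$, with strictness iff $T>T'$ — in particular $>$ can fail (when $T\sim T'$), consistent with the claimed equivalence ``$>$ iff $(a,b)\neq(c,d)=(1,1)$'' since here $(a,b)=(c,d)$. If $(c,d)=(1,1)\neq(a,b)$: I claim $\omega\rho_1^a\rho_2^b\rho_3T>\omega\rho_1^1\rho_2^1\rho_3T'$ strictly. For the $\geqslant$ direction, on any $\sigma$ with leading pair $\geqslant(1,1)$, the left rule deletes at most what the right rule deletes from that leading block: the right-hand $\rho_1\rho_2\rho_3$ deletes the whole leading pair (2 or more symbols, or moves on), while the left keeps at least one symbol of it when $(a,b)\neq(1,1)$ — hence $J^{\text{left}}_\sigma\supseteq J^{\text{right}}_\sigma$ provided the remainders also compare, which they do because after the left rule halts (keeping a symbol) nothing more is deleted, whereas the right rule continues deleting via $T'$; one has to check the sub-case where the left block also deletes its pair, namely $(a,b)=(0,0)$ with $p_1=m_1$, and there the exposed next pair is identical for both and the residual comparison reduces to $T$ vs. $T'$ again, but with the right side having deleted strictly more already — here I'd invoke $T\geqslant T'$ is not assumed symmetric, so I must be slightly careful and perhaps appeal to the structure $T$ being a w.f.c.r. in factorized form. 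For strictness, exhibit a sequence, e.g. $\psi(\sigma)=(2,1)$ or $(1,2)$ depending on $(a,b)$, where the left rule stops with a nonempty result and the right rule deletes everything giving $\varepsilon$ (value $0$), so $\svee$ differs, hence $J_\sigma\supsetneq J'_\sigma$.

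Finally, the remaining case: $(a,b)\neq(c,d)$ and $(c,d)\neq(1,1)$. Here I would show $\omega\rho_1^a\rho_2^b\rho_3T\not\geqslant\omega\rho_1^c\rho_2^d\rho_3T'$ and vice versa — i.e. they are incomparable or simply the ``$\geqslant$'' fails — by producing a witness sequence on which the $(c,d)$-side deletes strictly more. The natural witnesses are again the short sequences $(1,1)$, $(2,1)$, $(1,2)$: e.g. if $(a,b)=(1,0)$ and $(c,d)=(0,1)$, then on $\psi(\sigma)=(1,1)(2,1)\cdots$ the first block leaves $(0,1)$ on the left (stop, keep one negative symbol) versus $(1,0)$ on the right (stop, keep one positive symbol) — the deleted index sets are incomparable, so $\geqslant$ fails in both directions; if one of $(a,b),(c,d)$ is $(0,0)$ the pair $(1,1)$ is deleted by that side but not necessarily by the other. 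This establishes the ``only if'' of the biconditional: $\omega\rho_1^a\rho_2^b\rho_3T\geqslant\omega\rho_1^c\rho_2^d\rho_3T'$ forces $(a,b)=(c,d)$ or $(c,d)=(1,1)$. The main obstacle I anticipate is the bookkeeping in the $(c,d)=(1,1)$, $(a,b)=(0,0)$-with-equal-leading-entries sub-case, where both sides actually delete the leading pair and one must recurse into the tails while accounting for the asymmetry of the hypothesis $T\geqslant T'$; here one should use that after $\rho_3$ deletes $(p,p)$, well-formedness of the overall rule forces the exposed pair to be handled, and Lemma~\ref{lem:4} (or Corollary~\ref{cor:5}) can be reapplied to the common exposed suffix. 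Everything else is a finite check on sequences of length two or three.
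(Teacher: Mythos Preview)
Your approach is essentially the same as the paper's: reduce the case $(a,b)=(c,d)$ to Lemma~\ref{lem:4}, handle $(c,d)=(1,1)\neq(a,b)$ by the direct containment argument on the leading pair together with a short witness like $(2,1)$ or $(1,2)$ for strictness, and dispose of the remaining cases $(a,b)\neq(c,d)$, $(c,d)\neq(1,1)$ by exhibiting incomparability via $(2,1),(1,2),(2,2)$. Two small corrections: your inclusion $J^{\text{left}}_\sigma\supseteq J^{\text{right}}_\sigma$ should be $\subseteq$ (recall $R\geqslant R'$ means $R$ deletes \emph{fewer} indices), and your worry about the sub-case where the left block also annihilates its leading pair is resolved immediately by the hypothesis $T\geqslant T'$ applied to the common exposed tail---this covers not only $(a,b)=(0,0)$ with $p_1=m_1$ but also $(a,b)=(1,0)$ with $m_1=1$ and $(a,b)=(0,1)$ with $p_1=1$, which you omitted; the paper handles all of these uniformly in one line.
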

\begin{proof}  To see that the condition in the first claim is sufficient,
  observe that if $(a,b)=(c,d)$, then by Lemma~\ref{lem:4} $\omega\rho_1^a
  \rho_2^b \rho_3T\geqslant \omega\rho_1^c \rho_2^d \rho_3T'$. If $(c,d)=(1,1)$,
  then for every nonassociative string $\sigma=(p_1,m_1)\cdots
  (p_k,m_k)$ (i.e., $(p_1,m_1)\geqslant (1,1))$ we have
\begin{eqnarray*}
 \omega\rho_1^a \rho_2^b \rho_3T(\sigma)&=&(\rho_1^a \rho_2^b \rho_3)(p_1,m_1)T((p'_1,m'_1)\cdots (p'_{k'},m'_{k'})\nonumber\\
&\sqsupseteq & T'((p'_1,m'_1)\cdots (p'_{k'},m'_{k'}))=\omega\rho_1^c \rho_2^d \rho_3T'(\sigma),\label{eq:ab}
\end{eqnarray*}
 and hence $\omega\rho_1^a \rho_2^b \rho_3T\geqslant  \omega\rho_1^c \rho_2^d \rho_3T'$. 
Moreover, if  $(a,b)\neq(c,d)=(1,1)$, then by considering
 $(2,1)$ if $(a,b)$ equals $(0,1)$ or $(0,0)$, and
 $(1,2)$ if $(a,b)$ equals $(1,0)$,
one can easily verify that $\omega\rho_1^a \rho_2^b \rho_3T> \omega\rho_1^c \rho_2^d \rho_3T'$, 
thus showing that the condition of the second claim is also sufficient.

To verify that the conditions in the first and second claims are also necessary, it suffices to show that if 
$(a,b), (c,d)\neq(1,1)$ and $(a,b)\neq (c,d)$, then  $\omega\rho_1^a \rho_2^b \rho_3T\parallel \omega\rho_1^c \rho_2^d \rho_3T$.
But this fact can be easily verified by making use of the strings $(2,1), (1,2)$ or $(2,2)$, and thus the proof of the lemma is now complete.
 \end{proof}

\begin{lemma}\label{lem:2} 
 If $\rho_4$ and $\rho_5$ alternate infinitely many times in $\omega$ but not in $\omega'$, then 
$\omega\rho_1^a \rho_2^b \rho_3\sim\omega\rho_1^a \rho_2^b \rho_3T<\omega'\rho_1^a \rho_2^b \rho_3T'$,  for every $T,T'\in\cL(\Psi)$.
\end{lemma}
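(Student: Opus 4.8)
The plan is to exhibit a single sequence on which the left-hand rule deletes strictly more terms than the right-hand rule, and to argue that no sequence witnesses the reverse inequality, so the two rules are strictly comparable rather than incomparable. By Lemma~\ref{lem:1a}(ii), since $\rho_4$ and $\rho_5$ alternate infinitely many times in $\omega$, we have $\omega\rho_1^a\rho_2^b\rho_3T\sim(\rho_4\rho_5)^*\rho_1^a\rho_2^b\rho_3$ for any $T\in\cL(\Psi)$; this already disposes of the first equivalence in the statement, so the content is the strict inequality $(\rho_4\rho_5)^*\rho_1^a\rho_2^b\rho_3<\omega'\rho_1^a\rho_2^b\rho_3T'$.

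First I would observe that $(\rho_4\rho_5)^*$, applied to any non-associative sequence $\sigma=(p_1,m_1)\cdots(p_k,m_k)$ with $(p_1,m_1)\geqslant(1,1)$, deletes every pair except the first (it zeroes out all $p_i,m_i$ for $i\geqslant2$), so $(\rho_4\rho_5)^*\rho_1^a\rho_2^b\rho_3$ retains only (a sub-multiset of) the single block of largest absolute value. By contrast, since $\rho_4$ and $\rho_5$ alternate only finitely many times in $\omega'$, Remark~\ref{rem:2} (or Lemma~\ref{lem:1a}(ii)) tells us that $\omega'$ genuinely leaves later pairs untouched beyond a bounded point: writing $\omega'=\rho_4^{l_1}\rho_5^{r_1}\cdots\rho_4^{l_t}\rho_5^{r_t}$ with $p$ alternations, any pair occurring at position $>p+1$ survives the action of $\omega'$ and is then passed to $T'$. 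Hence for every $\sigma$, the set of indices deleted by the right-hand rule is contained in the set deleted by the left-hand rule — one checks this blockwise: on the first block both rules apply the same $\rho_1^a\rho_2^b\rho_3$ (or less, on the right), and on all later blocks the left-hand rule deletes everything while the right-hand rule may not. This gives $(\rho_4\rho_5)^*\rho_1^a\rho_2^b\rho_3\leqslant\omega'\rho_1^a\rho_2^b\rho_3T'$.

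For strictness, I would take a sequence with enough blocks to escape the finitely many alternations of $\omega'$, e.g.
\[
\sigma=(1,1)\,[(1,0)(0,1)]^{p+2}\,(1,0),
\]
where $p$ is the number of alternations in $\omega'$. The left-hand rule reduces the first block via $\rho_1^a\rho_2^b\rho_3$ and then $(\rho_4\rho_5)^*$ wipes out all remaining blocks, so it deletes (at least) all terms of all blocks after the first. The right-hand rule, after $\omega'$ exhausts its alternations, still carries at least the trailing $(1,0)$ block (and possibly more) into $T'$, which being a w.f.c.r. composed further cannot delete a lone surviving positive block that already makes the sequence associative. Thus $\svee$ of the right-hand result is some $n_k$ with $k>1$ while the left-hand result lies in the first block, so $J_\sigma\supsetneq J'_\sigma$ strictly, giving $<$.

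The main obstacle I anticipate is the bookkeeping in the inclusion $J'_\sigma\subseteq J_\sigma$ for \emph{all} $\sigma$ simultaneously: one must be careful that the right-hand rule, through the tail $T'$ and through the $\rho_4^*$ or $\rho_5^*$ buried inside a general $\omega'$ of the form $\omega\rho_4^*$ appearing later in its factorization, does not delete some block that the left-hand rule keeps — but since the left-hand rule keeps \emph{only} a sub-multiset of the first block and deletes literally everything else, no such reversal is possible, which is exactly why the comparison collapses to $\leqslant$ rather than $\parallel$. Making this "deletes everything else" claim precise for $(\rho_4\rho_5)^*$ — namely that it is the bottom element among all reductions that fix the first block — is the one step worth spelling out carefully; everything else is the routine witness-sequence computation already used repeatedly in the proof of Proposition~\ref{prop:ab}.
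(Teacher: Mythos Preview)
Your proposal is correct and takes essentially the same approach as the paper: reduce the left-hand side to $(\rho_4\rho_5)^*\rho_1^a\rho_2^b\rho_3$ via Lemma~\ref{lem:1a}(ii), establish $\leqslant$ by noting that $(\rho_4\rho_5)^*$ deletes every block past the first while the right-hand rule can only delete a subset of those, and then exhibit a witness sequence for strictness. Your padded witness $(1,1)[(1,0)(0,1)]^{p+2}(1,0)$ differs from the paper's tighter $\sigma=(1,1)(1,0)^{\alpha_1}(0,1)^{\beta_1}\cdots(1,0)^{\alpha_n}(0,1)^{\beta_n}$ (tailored to the shape of $\omega'$), but the idea is the same---and your extra padding in fact makes the strictness verification cleaner, since it guarantees a surviving associative tail regardless of how the finitely many alternations in $\omega'$ line up.
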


\begin{proof} Let $\omega=(\rho_4\rho_5)^*$ and $\omega':=\rho_4^{a_1}\rho_5^{b_1}\cdots\rho_4^{a_n}\rho_5^{b_n}$, with
$a_i,b_i\in \bN \cup \{*\}$. (By Lemma~\ref{lem:1a}, $\omega\rho_1^a \rho_2^b \rho_3\sim\omega\rho_1^a \rho_2^b \rho_3T$, for every $T\in\cL(\Psi)$.)
Then for every string $\gamma =(p_1,m_1)\cdots (p_k,m_k)$, $(p_1,m_1)\geqslant (1,1)$,
\begin{equation*}
\omega\rho_1^a \rho_2^b \rho_3(\gamma)=(\rho_1^a \rho_2^b \rho_3)(p_1,m_1)\sqsubseteq 
(\rho_1^a \rho_2^b \rho_3)(p_1,m_1)T'((p'_1,m'_1)\cdots (p'_{k'},m'_{k'}))=\omega'\rho_1^a \rho_2^b \rho_3T'(\gamma).
\end{equation*}
Hence, $\omega\rho_1^a \rho_2^b \rho_3\leqslant \omega'\rho_1^a \rho_2^b \rho_3T'$.
For $1\leqslant i\leqslant n$, let $\alpha_i=0$ (resp. $\beta_i=0$) if $a_i=0$ (resp. $b_i=0$) and  
$\alpha_i=1$ (resp. $\beta_i=1$) otherwise. By considering 
\[
\sigma=(1,1)(1,0)^{\alpha_1}(0,1)^{\beta_1}\cdots (1,0)^{\alpha_n}(0,1)^{\beta_n},
\]
 one can easily verify that
$\omega\rho_1^a \rho_2^b \rho_3<\omega'\rho_1^a \rho_2^b \rho_3T'$.
\end{proof}

\begin{lemma}\label{lem:3}
 Let $\omega:=\rho_4^{a_1}\rho_5^{b_1}\cdots\rho_4^{a_n}\rho_5^{b_n}$, $n\geq 0$, and let 
$\omega':=\rho_4^{a'_1}\rho_5^{b'_1}\cdots\rho_4^{a'_m}\rho_5^{b'_m}$, $m\geq 0$.
For $T=\omega\rho_1^a \rho_2^b \rho_3\langle \cdot \rangle^+_-$ and 
$T'=\omega'\rho_1^a \rho_2^b \rho_3\langle \cdot \rangle^+_-$, the following assertions hold:
\begin{itemize}
 \item[(i)] If $n=m=1$, then $T\parallel T'$ if and only if $b_1\neq b'_1$, or
$\big[b_1= b'_1=0$ and $a_1\neq a'_1\big]$.

\item[(ii)] If $n=m>1$, then $T\parallel T'$ if and only if 
\begin{itemize}
\item[a)] $b_n\neq b'_n$, or 
\item[b)] $b_n= b'_n=0$ and $a_n\neq a'_n$, or
\item[c)] $b_n= b'_n\neq 0$ and 
there exists $1\leqslant j<n$ such that $(a_j,b_j)\neq (a'_j,b'_j)$, or 
\item[d)] $b_n= b'_n= 0$, $a_n= a'_n\neq 0$, and $a_{n-1}\neq a'_{n-1}$ or there exists $1\leqslant j<n-1$ such that $(a_j,b_j)\neq (a'_j,b'_j)$.
\end{itemize}
\item[(iii)] If $n\neq m$, then $T\parallel T'$.
\end{itemize}
\end{lemma}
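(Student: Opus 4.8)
\textbf{Proof plan for Lemma~\ref{lem:3}.}
The plan is to analyze the behavior of $T=\omega\rho_1^a\rho_2^b\rho_3\langle\cdot\rangle^+_-$ on a well-chosen family of test sequences, exploiting the structure already laid bare in Lemmas~\ref{lem:4}--\ref{lem:2}. The key observation is that $\langle\cdot\rangle^+_-$ deletes \emph{everything} on a nonassociative tail (recall $\langle\sigma\rangle^+_-=\varepsilon$ for $\sigma\in\gS_0$), so once the prefix $\omega\rho_1^a\rho_2^b\rho_3$ has either stopped the rule (by making the current pair one-sided) or exhausted itself, the outcome is entirely determined by what that prefix does. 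Thus the comparison of $T$ and $T'$ reduces to comparing the two prefixes $\omega\rho_1^a\rho_2^b\rho_3$ and $\omega'\rho_1^a\rho_2^b\rho_3$ as ``$\sqsubseteq$-maps'' on sequences of the form $(1,1)(1,0)^{\ast}(0,1)^{\ast}\cdots$, where each $\rho_4$ (resp. $\rho_5$) in $\omega$ acts by deleting a block of positive (resp. negative) singletons if present and is inert otherwise, and $\rho_1^a\rho_2^b\rho_3$ acts on the leading pair. Since $a,b$ are the same on both sides, by Corollary~\ref{cor:5} and Remark~\ref{rem:123} the final term $\rho_1^a\rho_2^b\rho_3\langle\cdot\rangle^+_-$ is ``transparent'' for strict/incomparable comparisons, so it all comes down to whether $\omega$ and $\omega'$ act identically on such inputs.

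For part (iii), when $n\neq m$ say $n<m$: I would first argue that $\omega'$, having strictly more alternation blocks, can ``see'' deeper into a sequence than $\omega$. Feed in $\sigma=(1,1)[(1,0)(0,1)]^{m}$ together with a suitable one-sided tail placed so that $\omega$ runs out of letters while a pair is still two-sided (so $\rho_1^a\rho_2^b\rho_3$ then collapses it and $\langle\cdot\rangle^+_-$ kills the rest), whereas $\omega'$ reaches a later pair; by choosing the tail's sign one gets $T(\sigma)\sqsupset T'(\sigma)$, and by a symmetric choice of a second sequence one gets the reverse, yielding $T\parallel T'$. The exact sequences are the concatenation-type strings already used in Proposition~\ref{prop:ab} and Lemma~\ref{lem:2}; the content is that $n<m$ guarantees a pair index at which the two prefixes genuinely disagree.

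For parts (i) and (ii) the same machinery applies at finer granularity. Since the last exponents $b_n,b_n'$ (resp. $a_n,a_n'$) govern whether the \emph{last} $\rho_5$ (resp. $\rho_4$) block of $\omega$ is active, case (a) ($b_n\neq b_n'$) and case (b) ($b_n=b_n'=0$, $a_n\neq a_n'$) are detected directly by a sequence whose tail sits exactly at the $n$-th pair: one prefix deletes a block there and the other does not, producing incomparability via the two-directional argument. For $n>1$, when the last block agrees ($b_n=b_n'\neq 0$, case (c)), I would push the discrepancy at some earlier index $j<n$ forward: a sequence $(1,1)[(1,0)(0,1)]^{j-1}(1,0)^{a_j+1}(0,1)[(1,0)(0,1)]^{n-j}$-style string makes the difference at block $j$ survive all the way to the leading pair, exactly as in the uniqueness proof of Proposition~\ref{prop:ab}; case (d) is the degenerate sub-case where the discrepancy has to be sought one step further back because the $n$-th block is a pure-$\rho_4$ active block. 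Conversely, to show these conditions are \emph{necessary} I would check that if none of (a)--(d) holds then $\omega$ and $\omega'$ act identically on every relevant input, hence $T\sim T'$ (not incomparable): this is the routine ``same pattern $\Rightarrow$ same map'' verification, closing part (ii); part (i) is its $n=1$ specialization, where only the single block $(a_1,b_1)$ matters and (a),(b) collapse to the stated dichotomy.

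The main obstacle I anticipate is bookkeeping in case (d) and, more generally, correctly locating \emph{where} along the input a given discrepancy between $\omega$ and $\omega'$ first becomes ``visible'' to the leading pair: an inactive block ($a_i$ or $b_i$ equal to $0$) is indistinguishable from an absent one, so a naive index-by-index comparison of $\omega$ and $\omega'$ is not the right equivalence — one must compare the induced \emph{deletion patterns}, and the characterization (a)--(d) is precisely the combinatorial normal form for when those patterns differ. Once that is pinned down, each individual incomparability is witnessed by one of the handful of sequence templates already appearing in Proposition~\ref{prop:ab}, with the sign of the final tail toggled to get the two opposite strict inequalities; no genuinely new estimate is needed beyond Lemmas~\ref{lem:4}--\ref{lem:2}.
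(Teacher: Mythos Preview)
Your plan for sufficiency in (i)--(ii) and for (iii) is broadly in line with the paper's approach: construct explicit test sequences on which the two prefixes disagree in opposite directions. However, your plan for the \emph{necessity} direction in (i) and (ii) contains a genuine error. You write that if none of (a)--(d) holds then ``$\omega$ and $\omega'$ act identically on every relevant input, hence $T\sim T'$''. This is false. Take $n=m=1$, $b_1=b_1'=1$, $a_1=2$, $a_1'=1$: here none of the incomparability conditions in (i) is satisfied, yet $\omega=\rho_4^2\rho_5$ and $\omega'=\rho_4\rho_5$ certainly do not act identically (feed in $(1,1)(1,0)^2$), and $T\not\sim T'$. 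What actually happens, and what the paper proves, is that in this situation $T<T'$ strictly; see Remark~\ref{Rem:1234}. So the negation of (a)--(d) does not give equivalence but \emph{comparability}, and the substantive content of the necessity proof is establishing $T\leqslant T'$ in cases where $\omega\neq\omega'$.

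That verification is not the ``routine same-pattern $\Rightarrow$ same-map'' check you describe; it is the longest part of the paper's argument. One must take an arbitrary $\sigma=(p_1,m_1)\cdots(p_k,m_k)$ and argue, pair by pair, that whenever $\omega$ deletes more than $\omega'$ at some position, the extra deletions never cause $T(\sigma)\sqsupset T'(\sigma)$. The paper does this by successively reducing to the only potentially problematic configuration (e.g.\ in (i): $(p_1,m_1)=(1,1)$, then $m_2=0$ and $0<p_2\leqslant a_1'$, then $m_3=0$ and $0<p_3\leqslant a_1'-p_2$, etc.) and showing that this chain either terminates with $T(\sigma)\sqsubseteq T'(\sigma)$ or exhausts $\sigma$. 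Your proposal skips this entirely; without it, you have not shown that the conditions in (i)--(ii) are necessary for incomparability.
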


\begin{proof}
We may assume that $a_i\neq 0$ for $2\leqslant i\leqslant n$ and $b_i\neq 0$ for $1\leqslant n-1$,
and that $a'_j\neq 0$ for $2\leqslant j\leqslant m$ and $b'_j\neq 0$ for $1\leqslant j\leqslant m-1$.

{\bf (i):} To prove sufficiency, suppose first that  $b_1\neq b'_1$, say $b_1>b'_1$.
Then, by considering  $\sigma=(1,1)(0,1)^{b'_1}(1,1)$ and $\sigma'=(1,1)(0,1)^{b'_1+1}$, we see that
$T\not \leqslant T'$ and $T\not \geqslant T'$, respectively.

So suppose that $b_1= b'_1=0$ and $a_1\neq a'_1$, say $a_1>a'_1$. Then,
by considering  $\sigma=(1,1)(1,0)^{a'_1}(1,1)$ and $\sigma'=(1,1)(1,0)^{a'_1+1}$, we see that
$T\not \leqslant T'$ and $T\not \geqslant T'$, respectively. 

We prove necessity by counterposition. Observe first that if $b_1= b'_1$ and  $(a_1,b_1)=(a'_1,b'_1)$, then  $T=T'$.
So suppose that $b_1= b'_1\neq 0$ and $a_1\neq a'_1$ , say $a_1>a'_1$. We claim that $T<T'$.
By making use of $\sigma=(1,1)(1,0)^{a'_1+1}$, we see that $T(\sigma)\sqsubset T'(\sigma)$.
Thus, we only have to show that $T\leqslant T'$.

Let $\sigma=(p_1,m_1)(p_2,m_2)\cdots (p_k,m_k)$. 
If the action of $\rho_1^a \rho_2^b \rho_3$ does not delete all terms of $(p_1,m_1)$, then 
 $T(\sigma)\sqsubseteq T'(\sigma)$ since the ``subrule" $\langle \cdot \rangle^+_-$ in $T$ and $T'$ does not act on $\sigma$; 
hence, without loss of generality, we may further assume that $(p_1,m_1)=(1,1)$.
 
 
Now, if $m_2\neq 0$, then $T(\sigma)\sqsubseteq T'(\sigma)$, and if $p_2=0$,
then $T(\sigma)= T'(\sigma)$; hence, we may assume $m_2=0$ and
$p_2\neq 0$.  In fact, we may suppose that $p_2\leqslant a'_1$ since, otherwise,
$T(\sigma)\sqsubset T'(\sigma)$.
%

Under the assumption that $m_2=0$ and $p_2\leqslant a'_1$, and  applying the same
reasoning to $(p_3,m_3)$, we again derive that the only case to consider is when
$m_3=0$ and $0<p_3\leqslant a'_1-p_2$. Proceeding in this way, we may eventually arrive at
$(p_j,m_j)$ with $m_i=0$ and $p_i>0$ for $i=2,\ldots,j-1$, $m_j=0$ and 
\[
a'_1-\sum_{i=2}^{j-1}p_i\leqslant 0.
\] 
The only case to consider reduces then to $p_j=0$, hence $(p_j,m_j)=(0,0)$, so
this term disappears, and similarly all remaining terms till $(p_k,m_k)$. Otherwise, if
\[
a'_1-\sum_{i=2}^{k-1}p_i>0
\]
we have to consider the case $(p_k,m_k)=(p_k,0)$ with 
$0<p_k\leqslant a'_1-\sum_{i=2}^{k-1}p_i$. But then clearly $T(\sigma)\sqsubset T'(\sigma)$. 
In any case we have $T\leqslant T'$ (hence, $T\not\parallel T'$),  and the proof is now complete.

{\bf (ii):} The proof of sufficiency in the case when   $(a_j,b_j)= (a'_j,b'_j)$, for $1\leqslant j<n$, 
follows exactly the same steps as in the proof of (i), by adding $(0,1)[(1,0)(0,1)]^{n-2}$ (after the first $(1,1)$) to the strings used above. 
We consider the case when $b_n= b'_n\neq 0$ and there exists $1\leqslant j<n$ such that $(a_j,b_j)\neq (a'_j,b'_j)$, say $a_j>a'_j$. 
The case $b_n= b'_n= 0$,  $a_{n-1}= a'_{n-1}\neq 0$, and $(a_j,b_j)\neq
(a'_j,b'_j)$ (say $b_j>b'_j$) for some $1\leqslant j<n-1$, 
follows similarly by interchanging the roles of $\rho_4$ and $\rho_5$, and those of $(1,0)$ and $(0,1)$ (and $a'_j$ and $b'_j$) in the strings below.

So let $\sigma $ be given by
\begin{itemize}
\item $\sigma=(1,1)(0,1)(1,1)^{j-2}(1,0)^{a'_j+1}(0,1)(1,1)^{n-j-1}(1,0)$ if $1<j$, and 
\item $\sigma=(1,1)(1,0)^{a'_j+1}(0,1)(1,1)^{n-1}(1,0)$ otherwise. 
\end{itemize}
Then $T(\sigma)\neq \varepsilon=T'(\sigma)$ and thus $T\not\leqslant T'$.
Now let $\sigma'$ be given by 
\begin{itemize}
\item $\sigma'=(1,1)(0,1)(1,1)^{j-2}(1,0)^{a'_j+1}(0,1)(1,1)^{n-j-2}(1,0)$ if $1<j$ 
(with $(1,1)^{n-j-2}=(0,0)$ whenever $n-j-2\leqslant 0$), and 
\item $\sigma'=(1,1)(1,0)^{a'_j+1}(0,1)(1,1)^{n-2}(1,0)$ otherwise. 
\end{itemize}
Then $T(\sigma')= \varepsilon\neq T'(\sigma')$ and thus $T\not\geqslant T'$.

The proof of necessity is similar to case (i). If none of the conditions of (ii) is satisfied, then we may assume that
$b_n= b'_n\neq 0$,  $(a_j,b_j)= (a'_j,b'_j)$ for every $1\leqslant j<n$, and focus on the case
$a_n\neq a'_n$ (for otherwise $T=T'$).
(The case when $b_n= b'_n= 0$, $a_n=a'_n$, $b_{n-1}=b'_{n-1}$, and $(a_j,b_j)= (a'_j,b'_j)$ for every $1\leqslant j<n-1$ follows 
similarly by the above mentioned substitutions.)

So suppose without loss of generality that  $a_n>a'_n=t$.
As in case (i), we show that $T<T'$. 
By making use of 
$\sigma=(1,1)(0,1)(1,1)^{n-2}(1,0)^{a'_n+1}$, we see that $T(\sigma)\sqsubset T'(\sigma)$.

So let $\sigma=(p_1,m_1)(p_2,m_2)\cdots (p_k,m_k)$. By reasoning as in (i), we may assume
that $k>t+n$ and, since  $(a_j,b_j)= (a'_j,b'_j)$ for every $1\leqslant j<n$, that $(p_2,m_2)=(0,1)$, 
$(p_j,m_j)=(1,1)$ for  $3\leqslant j<n+1$, and that $p_j= 0$ for each $n+1\leqslant j \leqslant t+n$;
for otherwise we reach the same conclusion $T(\sigma)\sqsubseteq T'(\sigma)$.

If $p_{t+n+1}= 0$ or $m_{t+n+1}= 0$,
 then $T(\sigma)= T'(\sigma)$ or $T(\sigma)\sqsubset T'(\sigma)$, respectively.
 If  $p_{t+n+1}\neq 0$ and $m_{t+n+1}\neq 0$, then $T(\sigma)\sqsubseteq T'(\sigma)$,
 and the proof of (ii) is now complete.

{\bf (iii):}  Suppose that  $n\neq m$, say $1\leqslant n<m$. First we consider the case $n=1$.
Let $\sigma=(1,1)(0,1)(1,0)$. Then $T(\sigma)\neq \varepsilon =T'(\sigma)$ and thus $T\not\leqslant T'$.
Let $\sigma'=(1,1)(0,1)^\alpha(1,1)^{m-1}(1,0)$ where $\alpha=0$ if $b_1=0$, and $\alpha=1$ otherwise. Then $T(\sigma')=\varepsilon\neq T'(\sigma')$ and thus $T\not\geqslant T'$.

Suppose now that $n>1$. Then, for $\sigma=(1,1)(0,1)(1,1)^{m-2}(1,0)$, we have $T(\sigma)\neq \varepsilon=T'(\sigma)$ and thus $T\not\leqslant T'$.

To show that $T\not\geqslant T'$, let $\sigma'$ be given by
\begin{itemize}
\item $\sigma'=(1,1)(0,1)(1,1)^{n-1}(1,1) (1,1)^{m-n-1}(1,0)$ if $b_n,b'_m\neq 0$,
\item $\sigma'=(1,1)(0,1)(1,1)^{n-2}(1,0)(1,1)(0,1) (1,1)^{m-n-1}(1,0)$ if $b_n=0$ and $b'_m\neq 0$,
\item $\sigma'=(1,1)(0,1)(1,1)^{n-1}(1,1) (1,1)^{m-n-2}(1,0)(0,1)$ if $b_n\neq 0 $ and $b'_m= 0$, and 
\item $\sigma'=(1,1)(0,1)(1,1)^{n-2}(1,0)(1,1)(0,1)  (1,1)^{m-n-1}(1,0)$ if $b_n,b'_m= 0$.
\end{itemize}
In each case we get $T(\sigma')=\varepsilon\neq T'(\sigma)$. Thus $T\not\geqslant T'$, and the proof of Lemma~\ref{lem:3} is now complete.
\end{proof}

\begin{remark}\label{Rem:1234}
Note that the proofs of (i) and (ii) of Lemma~\ref{lem:3} show that if $b_n=b'_n\neq 0$ and 
$(a_j,b_j)= (a'_j,b'_j)$ for $1\leqslant j<n$, then $T<T'$ if and only if $a_n>a'_n$. Similarly,
if $b_n=b'_n= 0$, $a_n=a'_n\neq 0$, $a_{n-1}= a'_{n-1}$, and 
$(a_j,b_j)= (a'_j,b'_j)$ for $1\leqslant j<n-1$, then $T<T'$ if and only if $b_{n-1}>b'_{n-1}$. 

Moreover, if 
$b_n=b'_n$, $a_n=a'_n$, and 
there exists $1\leqslant j< n$ such that  $a_j>a'_j$ or $b_j>b'_j$, then we have that
 $\omega'\rho_1\rho_2\rho_3\langle \cdot \rangle^+_-\not\leqslant 
\omega\rho_1^a\rho_2^b\rho_3T$ for every 
$T\in \gR$ and any $a,b\in \{0,1\}$. To illustrate, suppose $a_j>a'_j$. Then
consider $\sigma $ given by
\begin{itemize}
\item $\sigma=(1,1)(0,1)(1,1)^{j-2}(1,0)^{a'_j+1}(0,1)[(1,0)(0,1)]^{n-j-1}(1,0)$ if $1<j$, and 
\item $\sigma=(1,1)(1,0)^{a'_j+1}(0,1)[(1,0)(0,1)]^{n-1}(1,0)$ otherwise. 
\end{itemize}
\end{remark}

\begin{remark}\label{Rem:123456} 
By reasoning as in the proof of (iii) of Lemma~\ref{lem:3} and taking $\omega $ and $\omega'$ as above with $m<n$,
one can show that $\omega'\rho_1\rho_2\rho_3\langle \cdot \rangle^+_-\not\leqslant 
\omega\rho_1^a\rho_2^b\rho_3T$  for every $T\in \gR$ and any $a,b\in \{0,1\}$.
\end{remark}

\subsection{The subposet $\gR_{123}$.}
Let $\gR_{123}:=\{R\in\gR/_\sim:R\in \cL(\{\rho_1,\rho_2,\rho_3\})\}.$ Writing
these rules in factorized irredundant form, they read $R=T_1T_2\cdots$ where, for
each $i\in \mathbb{N}$, $T_i=\rho_1^a \rho_2^b \rho_3$ for some
$a,b\in\{0,1\}$. Note that from Proposition~\ref{prop:ab} it follows that each such expression is
necessarily in $\gR/_\sim$, is in the factorized irredundant form, and 
has infinite length. 

For $T\in \gR_{123}$, and $a,b\in\{0,1\}$, set $\mathcal{I}^T_{ab}=\{i\in
\mathbb{N}:T_i=\rho_1^a \rho_2^b \rho_3\}$.  Since $T$ is of infinite length,
$(\mathcal{I}^T_{ab})_{a,b\in\{0,1\}}$ is a partition of $\mathbb{N}$. Moreover, 
$(\mathcal{I}^T_{ab})_{a,b\in\{0,1\}}$ uniquely determines $T$.

\begin{proposition}\label{prop:1}
Let $T,T'\in \gR_{123}$. Then $T\leqslant T'$ if and only if $\mathcal{I}^T_{11}\supseteq \mathcal{I}^{T'}_{11}$ and 
$\mathcal{I}^T_{ab}\subseteq \mathcal{I}^{T'}_{ab}$, for any $(a,b)\neq(1,1)$.
In particular, 
\begin{itemize}
\item[(i)] $T<T'$ whenever $\mathcal{I}^T_{11}\supset \mathcal{I}^{T'}_{11}$ and 
$\mathcal{I}^T_{ab}\subseteq \mathcal{I}^{T'}_{ab}$, for any $(a,b)\neq(1,1)$;
\item[(ii)] $T\parallel T'$ whenever $\mathcal{I}^T_{ab}\parallel \mathcal{I}^{T'}_{ab}$ for some $a,b\in\{0,1\}$.
\end{itemize}
\end{proposition}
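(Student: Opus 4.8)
The plan is to first recast the condition on the index sets as a componentwise comparison, and then prove the two directions of the equivalence separately. Since $(\mathcal{I}^T_{ab})_{a,b\in\{0,1\}}$ and $(\mathcal{I}^{T'}_{ab})_{a,b\in\{0,1\}}$ are partitions of $\mathbb{N}$, the stated condition is equivalent to the assertion that \emph{at every index $i$} one has $T_i=T'_i$, or else $T_i=\rho_1\rho_2\rho_3$ and $T'_i\neq\rho_1\rho_2\rho_3$; that is, $T_i$ dominates $T'_i$ in the order on the four ``triples'' $\{\rho_3,\rho_1\rho_3,\rho_2\rho_3,\rho_1\rho_2\rho_3\}$ in which $\rho_1\rho_2\rho_3$ is the top element and $\rho_3,\rho_1\rho_3,\rho_2\rho_3$ form an antichain below it. I will use three elementary observations about the action of a triple $\rho_1^a\rho_2^b\rho_3$ on a leading pair $(p_1,m_1)$ with $p_1,m_1>0$: (1) $\rho_1\rho_2\rho_3$ always turns $(p_1,m_1)$ into $(0,0)$, hence deletes the whole leading pair and moves on; (2) the pair $(1,1)$ is deleted entirely by each of the four triples; (3) each of $\rho_3$, $\rho_1\rho_3$, $\rho_2\rho_3$ fails to delete certain pairs (e.g.\ $\rho_3$ on $(2,1)$, $\rho_1\rho_3$ on $(1,2)$, $\rho_2\rho_3$ on $(2,1)$), leaving a nonempty one-signed residue, after which the rule stops because the sequence has become associative.

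For \emph{sufficiency}, assume componentwise domination and fix $\sigma$; I will show $R(\sigma)\sqsubseteq R'(\sigma)$, i.e.\ $J_\sigma\supseteq J'_\sigma$, by induction on the number of pairs of $\psi(\sigma)$ (with the rules arbitrary subject to componentwise domination). If $\psi(\sigma)$ is empty or its leading pair has a zero coordinate, both rules stop at once and $J_\sigma=J'_\sigma=\emptyset$. Otherwise the leading pair $(p_1,m_1)$ has $p_1,m_1>0$, and $T,T'$ apply $T_1,T'_1$ to it. If $T_1=T'_1$, the two rules act identically on the leading pair: either it is not fully deleted and both halt there with the same deletions, or it is fully deleted and both recurse on the shorter sequence of remaining pairs with the suffixes $T_2T_3\cdots$ and $T'_2T'_3\cdots$, which still dominate componentwise, so the induction hypothesis applies (this recursion is well founded since the number of pairs strictly decreases). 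If $T_1\neq T'_1$, then necessarily $T_1=\rho_1\rho_2\rho_3$ and $T'_1$ is one of the other three triples; $T$ then deletes the whole leading pair and recurses, while $T'$ either deletes the whole leading pair (and recurses, the induction hypothesis again giving the inclusion on the remaining pairs) or halts after deleting only part of the leading pair. Writing $L$ for the set of terms of the leading pair and $K\supseteq K'$ for the deletion sets produced on the remaining pairs, one gets $J_\sigma\supseteq L\cup K\supseteq L\cup K'\supseteq J'_\sigma$ in the recursive cases, and $J'_\sigma\subseteq L\subseteq J_\sigma$ in the case where $T'$ halts at the leading pair.

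For \emph{necessity} I argue by contraposition. Suppose componentwise domination fails: there is an index $i$ with $T_i\neq T'_i$ at which $T_i$ does not dominate $T'_i$, i.e.\ either $T'_i=\rho_1\rho_2\rho_3\neq T_i$, or $T_i$ and $T'_i$ are two distinct non-top triples. I exhibit a single $\sigma$ on which $T'$ deletes a term that $T$ does not, whence $T\not\leqslant T'$. Let $\psi(\sigma)$ consist of $(1,1)$ in positions $1,\dots,i-1$ followed by a single pair $(p_i,m_i)$ in position $i$; by fact (2) both rules delete positions $1,\dots,i-1$ entirely and reach position $i$, applying $T_i$ and $T'_i$ to $(p_i,m_i)$. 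If $T'_i=\rho_1\rho_2\rho_3$, pick $(p_i,m_i)\in\{(2,1),(1,2)\}$ so that, by fact (3), $T_i$ leaves a nonempty one-signed residue while $T'_i$ deletes the whole pair and then halts on the empty sequence; then $J'_\sigma$ contains the residual symbol and $J_\sigma$ does not. If $T_i$ and $T'_i$ are distinct non-top triples, then, using the symmetry $\rho_1\leftrightarrow\rho_2$ (equivalently, negation of all terms of the sequences, legitimate by property (III) of $\svee$ and preserving deletion sets), it suffices to treat the cases $(T_i,T'_i)=(\rho_1\rho_3,\rho_2\rho_3),(\rho_3,\rho_1\rho_3),(\rho_1\rho_3,\rho_3)$; taking $(p_i,m_i)$ to be $(1,2),(2,1),(2,2)$ respectively, one checks directly that $T'$ deletes all of pair $i$ and then halts on the empty sequence, while $T$ halts already on pair $i$ leaving a one-signed residue, so again $J'_\sigma\not\subseteq J_\sigma$ and $T\not\leqslant T'$. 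This establishes the equivalence.

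Finally, (i) follows at once: its hypothesis forces some $i\in\mathcal{I}^T_{11}\setminus\mathcal{I}^{T'}_{11}$, hence $T_i\neq T'_i$, and since the family of index sets $(\mathcal{I}^T_{ab})_{a,b}$ determines the rule, $T\neq T'$; together with $T\leqslant T'$ from the equivalence this yields $T<T'$. Part (ii) is also immediate: if $\mathcal{I}^T_{ab}\parallel\mathcal{I}^{T'}_{ab}$ for some $(a,b)$, then --- whether $(a,b)=(1,1)$ or not --- the corresponding clause in the criterion for $T\leqslant T'$ is violated, and symmetrically the clause in the criterion for $T'\leqslant T$ is violated, so $T\parallel T'$. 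I expect the main obstacle to be the bookkeeping in the sufficiency induction --- tracking precisely which symbols each rule removes from each pair and where each rule halts --- together with the routine but case-heavy verification of the small leading pairs used in the necessity argument.
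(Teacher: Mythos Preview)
Your proof is correct and follows essentially the same approach as the paper's: both directions hinge on the observation that for rules in $\gR_{123}$ the factor $T_i$ acts precisely on the $i$th pair $(p_i,m_i)$, and both use test sequences of the form $(1,1)^{i-1}(p_i,m_i)$ for necessity. Your sufficiency argument is organized as an explicit induction on the number of pairs (making the ``where does each rule halt'' bookkeeping cleaner than the paper's more compressed contradiction), and your necessity argument exploits the $\rho_1\leftrightarrow\rho_2$ symmetry to cut the case analysis in half, but these are stylistic refinements rather than a different route.
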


\begin{proof} Clearly, the third claim is a consequence of the first. 
Since $(\mathcal{I}^T_{ab})_{a,b\in\{0,1\}}$ is a partition of $\mathbb{N}$, 
the second claim is also a consequence of the first.

To see that the conditions in the first claim are sufficient, note that, if $T$
acts on a string $\sigma =(p_1,m_1)\cdots (p_k,m_k)$, then its factor $T_i$ acts
on the term $(p_i,m_i)$. Suppose that for some $\sigma$ we
have $T(\sigma)\sqsupset T'(\sigma)$. Then, using the above remark, for some $i$
we have  $T_i(p_i,m_i)\sqsupset T'_i(p_i,m_i)$,
 which means that $T'_i=\rho_1^a
\rho_2^b \rho_3$, $T_i=\rho_1^c \rho_2^d \rho_3$ with $(c,d)\leqslant (a,b)$
pointwise and $(c,d)\neq (a,b)$.  There are two possibilities: 
\begin{itemize}
\item $(a,b)=(1,1)$, and hence $\I^T_{11}\not\supseteq \I^{T'}_{11}$, or 
\item $(a,b)=(1,0)$ or $(a,b)=(0,1)$, in which case $(c,d)=(0,0)$, and thus  $\I^T_{00}\not\subseteq \I^{T'}_{00}$.
\end{itemize}

To show that the conditions of the first claim are necessary, 
suppose first that $\mathcal{I}^T_{11}\not \supseteq \mathcal{I}^{T'}_{11}$.
Let $i=\min\{j\in \mathcal{I}^{T'}_{11}: j\not \in\mathcal{I}^T_{11}\}$. 
If $i\in \mathcal{I}^T_{ab}$ for $ab= 00$ or $01$, consider 
$\sigma=(1,1)^{i-1}(2,1)$. Then $T(\sigma)=(1,0)\neq \varepsilon=T'(\sigma)$.
If $i\in \mathcal{I}^T_{10}$, consider 
$\sigma=(1,1)^{i-1}(1,2)$. Then $T(\sigma)=(0,1)\neq\varepsilon=T'(\sigma)$.
Thus $T\not\leqslant T'$.

So we may assume that  $\mathcal{I}^T_{11}\supseteq \mathcal{I}^{T'}_{11}$.
We treat the case $\mathcal{I}^T_{01}\not \subseteq \mathcal{I}^{T'}_{01}$;
the remaining cases follow similarly.
Let $i=\min\{j\in \mathcal{I}^{T}_{01}: j\not \in\mathcal{I}^{T'}_{01}\}$. 
If $i\in \mathcal{I}^{T'}_{10}$, consider 
$\sigma=(1,1)^{i-1}(2,1)$. Then $T(\sigma)=(1,0)\neq\varepsilon=T'(\sigma)$.
If $i\in \mathcal{I}^{T'}_{00}$, consider 
$\sigma=(1,1)^{i-1}(2,2)$. Then $T(\sigma)=(1,0)\neq\varepsilon=T'(\sigma)$.
Thus $T\not\leqslant T'$, and the proof is now complete.
\end{proof}

As immediate corollaries we have the following results.

\begin{corollary}\label{cor:1}
Let $T\in \gR_{123}$. 
\begin{itemize}
\item[(i)] $T$ is the least rule if and only if $\mathcal{I}^T_{11}=\mathbb{N}$,
  i.e., $T=\langle\cdot\rangle_=$.
\item[(ii)] $T$ is an atom if and only if $\mathcal{I}^T_{11}=\mathbb{N}\setminus \{i\}$ for some 
$i\in \mathbb{N}$.
\item[(iii)] $T$ is a maximal element if and only if $\mathcal{I}^T_{11}=\emptyset$.
\end{itemize}
\end{corollary}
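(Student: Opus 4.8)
The plan is to read off all three items directly from Proposition~\ref{prop:1}, using in addition the remark (made just before that proposition) that \emph{every} partition $(\I_{ab})_{a,b\in\{0,1\}}$ of $\mathbb{N}$ is realized as $(\I^T_{ab})_{a,b\in\{0,1\}}$ for a unique $T\in\gR_{123}$ in factorized irredundant form. Thus, whenever I need a witnessing rule I will simply specify such a partition, and whenever I need to compare two rules I will translate into the combinatorial criterion of Proposition~\ref{prop:1}. No genuine obstacle is expected; the only points requiring care are (a) checking that each interpolating rule I build really lies in $\gR_{123}$ — immediate from realizability of arbitrary partitions — and (b) in the ``converse'' directions, using the partition structure to upgrade the containments $\I^T_{ab}\subseteq\I^{T'}_{ab}$ to equalities.

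For (i) I would argue that $T$ is least iff $T\leqslant T'$ for all $T'\in\gR_{123}$; comparing $T$ with $\langle\cdot\rangle_=$ (whose partition has $\I_{11}=\mathbb{N}$) forces, via Proposition~\ref{prop:1}, $\I^T_{11}\supseteq\mathbb{N}$, hence $T=\langle\cdot\rangle_=$. Conversely, if $\I^T_{11}=\mathbb{N}$ then $\I^T_{11}\supseteq\I^{T'}_{11}$ and $\I^T_{ab}=\emptyset\subseteq\I^{T'}_{ab}$ for $(a,b)\neq(1,1)$ and every $T'$, so $T\leqslant T'$ for all $T'$ by Proposition~\ref{prop:1}.

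For (iii) I would first show that if $\I^T_{11}\neq\emptyset$ then $T$ is not maximal: picking $i\in\I^T_{11}$ and letting $T'$ have the partition obtained from that of $T$ by moving $i$ out of $\I_{11}$ into $\I_{10}$, we get $\I^T_{11}\supsetneq\I^{T'}_{11}$ and $\I^T_{ab}\subseteq\I^{T'}_{ab}$ otherwise, so $T<T'$ by Proposition~\ref{prop:1}(i). Conversely, if $\I^T_{11}=\emptyset$ and $T\leqslant T'$, then $\I^{T'}_{11}\subseteq\I^T_{11}=\emptyset$, so the three sets $\I^T_{ab}$ and the three sets $\I^{T'}_{ab}$ ($(a,b)\neq(1,1)$) each partition $\mathbb{N}$ and $\I^T_{ab}\subseteq\I^{T'}_{ab}$ for each, whence equality throughout and $T'=T$; so $T$ is maximal.

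For (ii), recall an atom is a rule covering the least element $\langle\cdot\rangle_=$. If $\I^T_{11}=\mathbb{N}\setminus\{i\}$, then $\langle\cdot\rangle_=<T$, and any $T'$ with $\langle\cdot\rangle_=<T'\leqslant T$ satisfies $\mathbb{N}\supseteq\I^{T'}_{11}\supseteq\mathbb{N}\setminus\{i\}$ with $\I^{T'}_{11}\neq\mathbb{N}$, so $\I^{T'}_{11}=\mathbb{N}\setminus\{i\}=\I^T_{11}$; applying the partition argument of (iii) to the three remaining pieces (which now partition $\{i\}$) gives $T'=T$, so $T$ is an atom. Conversely, if $T$ is an atom then $\I^T_{11}\neq\mathbb{N}$; if there were distinct $i,j\notin\I^T_{11}$, the rule $T'$ obtained from $T$ by moving $j$ into $\I_{11}$ would satisfy $\langle\cdot\rangle_=<T'<T$ (strict on the right since $j\in\I^{T'}_{11}\setminus\I^T_{11}$, strict on the left since $i\notin\I^{T'}_{11}$), contradicting atomicity; hence $\I^T_{11}=\mathbb{N}\setminus\{i\}$ for some $i$. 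Since Proposition~\ref{prop:1} supplies the comparison criterion, the remaining work is purely the bookkeeping above.
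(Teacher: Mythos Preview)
Your proposal is correct and follows exactly the approach intended by the paper: the corollary is stated there as an immediate consequence of Proposition~\ref{prop:1} with no proof given, and what you have done is simply unfold the combinatorial criterion of that proposition (together with the realizability of arbitrary partitions $(\I_{ab})$, which follows from the remarks preceding it) to verify each item. There is nothing to add.
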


\begin{corollary}\label{cor:2}
Let $T, T'\in \gR_{123}$ where $T=T_1T_2\cdots$ and $T'=T'_1T'_2\cdots$.
Then $T\wedge T'=S$ where $\mathcal{I}^S_{11}=\mathcal{I}^T_{11}\cup \mathcal{I}^{T'}_{11}\cup \bigcup_{(a,b)\neq(1,1)}\mathcal{I}^T_{ab}\oplus \mathcal{I}^{T'}_{ab}$,
where $\oplus$ denotes the symmetric difference, and 
$\mathcal{I}^S_{ab}=\mathcal{I}^T_{ab}\cap \mathcal{I}^{T'}_{ab}$ for every $(a,b)\neq(1,1)$.
\end{corollary}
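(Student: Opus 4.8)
The goal is to identify the meet $T \wedge T'$ in $\gR_{123}$ explicitly. The natural strategy is: (1) define the candidate $S$ by the prescribed index sets $(\I^S_{ab})_{a,b\in\{0,1\}}$, (2) verify this is a legitimate partition of $\bN$ so that $S$ is a well-defined element of $\gR_{123}$ (via Proposition~\ref{prop:ab}), (3) show $S \leqslant T$ and $S \leqslant T'$ using Proposition~\ref{prop:1}, and (4) show $S$ is the \emph{greatest} such lower bound, i.e., any $U \in \gR_{123}$ with $U \leqslant T$ and $U \leqslant T'$ satisfies $U \leqslant S$, again via Proposition~\ref{prop:1}.

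First I would check that the proposed sets do partition $\bN$. The classes $\I^S_{ab} = \I^T_{ab}\cap\I^{T'}_{ab}$ for $(a,b)\neq(1,1)$ are clearly pairwise disjoint; the remaining indices are exactly those $i$ where $T$ and $T'$ disagree on the pair-label, together with those where both assign $(1,1)$, and a short case analysis on the four labels shows that this leftover set is precisely $\I^T_{11}\cup\I^{T'}_{11}\cup\bigcup_{(a,b)\neq(1,1)}\I^T_{ab}\oplus\I^{T'}_{ab}$, so assigning it to $\I^S_{11}$ yields a partition. Hence $S \in \gR_{123}$.

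Next, the lower-bound claim $S \leqslant T$ reduces, by Proposition~\ref{prop:1}, to $\I^S_{11}\supseteq\I^T_{11}$ and $\I^S_{ab}\subseteq\I^T_{ab}$ for $(a,b)\neq(1,1)$; both are immediate from the definition of $S$ (the first because $\I^S_{11}$ is a union containing $\I^T_{11}$, the second because intersection shrinks). Symmetrically $S \leqslant T'$. For maximality among lower bounds, take any $U \leqslant T$ and $U \leqslant T'$. By Proposition~\ref{prop:1}, $\I^U_{11}\supseteq\I^T_{11}$ and $\I^U_{11}\supseteq\I^{T'}_{11}$, and $\I^U_{ab}\subseteq\I^T_{ab}\cap\I^{T'}_{ab}=\I^S_{ab}$ for $(a,b)\neq(1,1)$; since $(\I^U_{ab})$ partitions $\bN$, the complement $\I^U_{11}=\bN\setminus\bigcup_{(a,b)\neq(1,1)}\I^U_{ab}$ then contains $\bN\setminus\bigcup_{(a,b)\neq(1,1)}\I^S_{ab}=\I^S_{11}$. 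Thus $\I^U_{11}\supseteq\I^S_{11}$ and $\I^U_{ab}\subseteq\I^S_{ab}$ for $(a,b)\neq(1,1)$, so $U \leqslant S$ by Proposition~\ref{prop:1}, proving $S = T\wedge T'$.

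The only mild subtlety — and the step I would write out carefully — is the bookkeeping in step (2) verifying that the displayed formula for $\I^S_{11}$ is exactly the complement of $\bigcup_{(a,b)\neq(1,1)}(\I^T_{ab}\cap\I^{T'}_{ab})$; everything else is a direct translation through Proposition~\ref{prop:1}. No genuine obstacle is expected, since the order on $\gR_{123}$ has already been reduced to a coordinatewise comparison of partition blocks.
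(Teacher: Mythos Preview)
Your proposal is correct and is precisely the natural unpacking of what the paper intends: in the text, Corollary~\ref{cor:2} is stated without proof as an ``immediate corollary'' of Proposition~\ref{prop:1}, and your steps (1)--(4) are exactly the routine verification that this entails. There is nothing to add; the partition check in step~(2) and the greatest-lower-bound argument in step~(4) are the only places any ink is needed, and you have handled both correctly.
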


In other words,  $\gR_{123}$ constitutes a $\wedge$-semilattice. 
Now, by Proposition \ref{prop:1}, if $T ,T' \leqslant R\in \gR_{123}$, then 
$\mathcal{I}^T_{11}, \mathcal{I}^{T'}_{11}\supseteq \mathcal{I}^{R}_{11}$
and $\mathcal{I}^T_{ab}, \mathcal{I}^{T'}_{ab}\subseteq \mathcal{I}^{R}_{ab}$
 for every $(a,b)\neq(1,1)$.
 Hence, Corollary \ref{cor:2} can be refined by considering intervals of the form 
 $[\langle \cdot\rangle_=,R]$ for some  $R\in \gR_{123}$.

\begin{corollary}\label{cor:3}
Let $R\in \gR_{123}$. Then $([\langle \cdot\rangle_=,R], \leqslant)$ constitutes a lattice under $\wedge$ and $\vee$
defined by
\begin{enumerate}
\item $T\wedge T'=S$ where $\mathcal{I}^S_{11}=\mathcal{I}^T_{11}\cup \mathcal{I}^{T'}_{11}\cup \bigcup_{(a,b)\neq(1,1)}\mathcal{I}^T_{ab}\oplus \mathcal{I}^{T'}_{ab}$, and 
$\mathcal{I}^S_{ab}=\mathcal{I}^T_{ab}\cap \mathcal{I}^{T'}_{ab}$ for every $(a,b)\neq(1,1)$; 
\item $T\vee T'=S$ where $\mathcal{I}^S_{11}=\mathcal{I}^T_{11}\cap \mathcal{I}^{T'}_{11}$, and 
$\mathcal{I}^S_{ab}=\mathcal{I}^T_{ab}\cup \mathcal{I}^{T'}_{ab}$ for every $(a,b)\neq(1,1)$,
\end{enumerate}  for every $T ,T' \in [\langle \cdot\rangle_=,R]$, with $T=T_1T_2\cdots$ and $T'=T'_1T'_2\cdots$.
Moreover, $([\langle \cdot\rangle_=,R], \leqslant)$ is order-isomorphic to 
$(\mathcal{P}(\bigcup_{(a,b)\neq(1,1)}\mathcal{I}^R_{ab}),\subseteq)$.
\end{corollary}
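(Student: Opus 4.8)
The plan is to derive everything from Proposition~\ref{prop:1} together with the observation made just before the statement: if $T,T'\in[\langle\cdot\rangle_=,R]$, then $\mathcal{I}^R_{11}\subseteq\mathcal{I}^T_{11},\mathcal{I}^{T'}_{11}\subseteq\mathbb{N}$ and, dually, $\mathcal{I}^T_{ab},\mathcal{I}^{T'}_{ab}\subseteq\mathcal{I}^R_{ab}$ for every $(a,b)\neq(1,1)$. Write $U:=\bigcup_{(a,b)\neq(1,1)}\mathcal{I}^R_{ab}=\mathbb{N}\setminus\mathcal{I}^R_{11}$. First I would set up the order-isomorphism and let the lattice structure follow from it. Given $T\in[\langle\cdot\rangle_=,R]$, map $T$ to the triple of pairwise disjoint sets $(\mathcal{I}^T_{00},\mathcal{I}^T_{10},\mathcal{I}^T_{01})$, each contained in the corresponding $\mathcal{I}^R_{ab}$; equivalently, since these three sets are forced to lie in the fixed disjoint blocks $\mathcal{I}^R_{00},\mathcal{I}^R_{10},\mathcal{I}^R_{01}$ whose union is $U$, the data is exactly a single subset $S^T:=\mathcal{I}^T_{00}\cup\mathcal{I}^T_{10}\cup\mathcal{I}^T_{01}\subseteq U$ (one recovers the individual blocks by intersecting $S^T$ with $\mathcal{I}^R_{ab}$). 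This assignment $T\mapsto S^T$ is a bijection $[\langle\cdot\rangle_=,R]\to\mathcal{P}(U)$: injectivity is because $(\mathcal{I}^T_{ab})$ determines $T$ (stated in the text), and surjectivity is because any $S\subseteq U$ splits canonically into the three blocks and, padding $\mathcal{I}_{11}$ with the complement, yields a partition of $\mathbb{N}$, hence a rule in $\gR_{123}$ lying in the interval.

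Next I would check that $T\mapsto S^T$ is an order-isomorphism onto $(\mathcal{P}(U),\subseteq)$. By Proposition~\ref{prop:1}, $T\leqslant T'$ iff $\mathcal{I}^T_{11}\supseteq\mathcal{I}^{T'}_{11}$ and $\mathcal{I}^T_{ab}\subseteq\mathcal{I}^{T'}_{ab}$ for all $(a,b)\neq(1,1)$. Taking complements in $\mathbb{N}$, the first condition reads $S^T\subseteq S^{T'}$ is at least implied; and since the three off-$(1,1)$ inclusions simply say $S^T\cap\mathcal{I}^R_{ab}\subseteq S^{T'}\cap\mathcal{I}^R_{ab}$ for each block, their conjunction is exactly $S^T\subseteq S^{T'}$ (the blocks partition $U$). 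Conversely $S^T\subseteq S^{T'}$ gives all four conditions, using that $\mathcal{I}^T_{ab},\mathcal{I}^{T'}_{ab}$ are confined to $\mathcal{I}^R_{ab}$. Hence $T\leqslant T'\iff S^T\subseteq S^{T'}$.

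Finally, since $(\mathcal{P}(U),\subseteq)$ is a lattice with $\cap$ and $\cup$, so is the interval, and transporting meets and joins along the isomorphism gives the stated formulas: $S^{T\wedge T'}=S^T\cap S^{T'}$ translates, block by block, into $\mathcal{I}^{T\wedge T'}_{ab}=\mathcal{I}^T_{ab}\cap\mathcal{I}^{T'}_{ab}$ for $(a,b)\neq(1,1)$, and then $\mathcal{I}^{T\wedge T'}_{11}=\mathbb{N}\setminus(S^T\cap S^{T'})$; a short set computation $\mathbb{N}\setminus(S^T\cap S^{T'})=\mathcal{I}^T_{11}\cup\mathcal{I}^{T'}_{11}\cup(S^T\triangle S^{T'})$, combined with $S^T\triangle S^{T'}=\bigcup_{(a,b)\neq(1,1)}\mathcal{I}^T_{ab}\oplus\mathcal{I}^{T'}_{ab}$, reproduces formula (1); formula (2) is the dual, using $S^{T\vee T'}=S^T\cup S^{T'}$. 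This also matches Corollary~\ref{cor:2}, confirming that $\wedge$ inside the interval agrees with the global semilattice meet. The only point needing care — and the one I would write out in full — is the bookkeeping identity $\mathbb{N}\setminus(S^T\cap S^{T'})=\mathcal{I}^T_{11}\cup\mathcal{I}^{T'}_{11}\cup(S^T\triangle S^{T'})$ and the verification that the computed $S$ indeed lands back in the interval $[\langle\cdot\rangle_=,R]$ (i.e.\ $\mathcal{I}^S_{11}\supseteq\mathcal{I}^R_{11}$), which is immediate since $S^T\cup S^{T'}\subseteq U$; everything else is a direct translation through the bijection and contains no real obstacle.
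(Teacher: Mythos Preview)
Your argument is correct and follows exactly the route the paper intends: the paper states Corollary~\ref{cor:3} without proof, relying on Proposition~\ref{prop:1} and the observation immediately preceding it that $\mathcal{I}^T_{ab}\subseteq\mathcal{I}^R_{ab}$ for $(a,b)\neq(1,1)$ whenever $T\leqslant R$, which is precisely what you unpack. One small remark: the ``bookkeeping identity'' you flag as the delicate point is in fact immediate, since $\mathbb{N}\setminus(S^T\cap S^{T'})=(\mathbb{N}\setminus S^T)\cup(\mathbb{N}\setminus S^{T'})=\mathcal{I}^T_{11}\cup\mathcal{I}^{T'}_{11}$ by De~Morgan, and the symmetric-difference term $S^T\triangle S^{T'}$ is already contained in $\mathcal{I}^T_{11}\cup\mathcal{I}^{T'}_{11}$ (if $i\in S^T\setminus S^{T'}$ then $i\in\mathcal{I}^{T'}_{11}$), so its presence in formula~(i) is redundant within the interval---it is there only to match the global meet of Corollary~\ref{cor:2}.
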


From Corollary~\ref{cor:3}, it follows that $(\gR_{123}, \leqslant)$ embeds the
power set of integers ordered by inclusion.
Furthermore, for $R\in \gR_{123}$, if $ |\bigcup_{(a,b)\neq(1,1)}\mathcal{I}^R_{ab}|=n$ is finite, then $|[\langle \cdot\rangle_=, R]|=2^n$.

\subsection{Least element and atoms}\label{atoms}
We turn to the study of the atoms of $\gR/_\sim$. The next proposition was presented
in \cite{Gra03}.
\begin{proposition}\label{prop:3}
The rule $\langle \cdot\rangle_-^+$ is the least element of $\gR/_\sim$.
\end{proposition}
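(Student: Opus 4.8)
The plan is to show that for every well-formed computation rule $R\in\gR/_\sim$ and every sequence $\sigma\in\gS$, we have $\langle\cdot\rangle_-^+(\sigma)\sqsubseteq R(\sigma)$, i.e. $\langle\cdot\rangle_-^+$ deletes at least as many terms of $\sigma$ as $R$ does. Since $R(\sigma)=\sigma=\langle\cdot\rangle_-^+(\sigma)$ whenever $\sigma\in\gS\setminus\gS_0$ (by the remark following the definition of w.f.c.r.), it suffices to treat $\sigma\in\gS_0$. For such $\sigma$ one has $\langle\sigma\rangle_-^+=\varepsilon$ (noted just after the list of example rules), so the claim $\langle\cdot\rangle_-^+(\sigma)\sqsubseteq R(\sigma)$ holds trivially: the empty sequence deletes \emph{all} terms, hence at least as many as any $R$. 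Therefore $\langle\cdot\rangle_-^+\leqslant R$ for all $R\in\gR/_\sim$, which is exactly the statement.

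First I would record the two facts above as the backbone: (a) all rules agree (and do nothing) on $\gS\setminus\gS_0$, so only $\gS_0$ matters; (b) $\langle\cdot\rangle_-^+$ sends every element of $\gS_0$ to $\varepsilon$. For completeness I would include a short justification of (b): writing $\langle\cdot\rangle_-^+=(\rho_4\rho_5)^*\rho_1\rho_2\rho_3$ in factorized irredundant form, note that on any $\sigma\in\gS_0$ the block $(\rho_4\rho_5)^*$ deletes every pair $(p_k,m_k)$ with $k\geqslant 2$ (each application of $\rho_4$ then $\rho_5$ kills the positive and negative occurrences of the second-largest absolute value, and since the first pair $(p_1,m_1)$ with $p_1,m_1>0$ is never touched, the rule never halts and strips everything down to the single pair $(p_1,m_1)$); then $\rho_1\rho_2\rho_3$ reduces $(p_1,m_1)$ to $(1,1)$ and $\rho_3$ deletes it, yielding $\varepsilon$. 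Once (b) is in place, the inequality $J_\sigma^{\langle\cdot\rangle_-^+}=I\supseteq J_\sigma^{R}$ is immediate for every $\sigma=(a_i)_{i\in I}\in\gS_0$, and combined with (a) we get $R(\sigma)\sqsubseteq\langle\cdot\rangle_-^+(\sigma)$ — wait, the direction: $\langle\cdot\rangle_-^+$ deletes more, so $\langle\cdot\rangle_-^+\leqslant R$ by the definition of $\leqslant$ — for all $\sigma$, hence $\langle\cdot\rangle_-^+\leqslant R$.

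The only genuine content, and thus the main obstacle, is verifying claim (b) carefully — in particular convincing oneself that on a non-associative sequence the leading pair $(p_1,m_1)$ with both coordinates positive is genuinely immune to $\rho_4,\rho_5$ (which act only on $k\geqslant 2$) and to the renumbering, so that $(\rho_4\rho_5)^*$ cannot terminate prematurely and indeed exhausts all later pairs. This is exactly the kind of bookkeeping already carried out in the proof of Lemma~\ref{lem:1a}(ii) for the $(\rho_4\rho_5)^*$ case, so I would simply invoke that lemma: it states that a factor whose $\omega$-part is $(\rho_4\rho_5)^*$ leaves only the current pair, on which $\rho_1^{a}\rho_2^{b}\rho_3$ necessarily halts; applied to $\langle\cdot\rangle_-^+$ this forces the output on any $\sigma\in\gS_0$ to be $\rho_1\rho_2\rho_3$ applied to the single pair $(p_1,m_1)$, which is $\varepsilon$. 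With that, uniqueness of the least element follows from antisymmetry of $\leqslant$, established in Section~\ref{sec:struc}. I expect no serious difficulty beyond this; the argument is short precisely because $\varepsilon$ sits at the bottom of the $\sqsubseteq$ order by fiat.
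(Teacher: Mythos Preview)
Your proposal is correct and follows essentially the same approach as the paper: the paper's proof is the single sentence ``It follows immediately from the fact that $\langle \cdot\rangle_-^+$ deletes every term of a nonassociative string,'' which is exactly your facts (a) and (b). Your version is simply more detailed in justifying (b), invoking Lemma~\ref{lem:1a}(ii) where the paper relies on the earlier remark that $\langle\sigma\rangle_-^+=\varepsilon$ for all $\sigma\in\gS_0$.
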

\begin{proof}
It follows immediately from the fact that $\langle \cdot\rangle_-^+$
deletes every term of a nonassociative string.
\end{proof}

\begin{proposition}\label{prop:3a}
 Let $T=\omega \rho_1^a \rho_2^b \rho_3 T' $ be an element of
 $\gR/_\sim$, where $(a,b)\neq(1,1)$. Then $T$ is an atom if and only if $\rho_4$ and
 $\rho_5$ alternate infinitely many times in $\omega$ (and therefore $T'=\varepsilon$).
\end{proposition}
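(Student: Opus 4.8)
The plan is to characterize when $T=\omega\rho_1^a\rho_2^b\rho_3 T'$ with $(a,b)\neq(1,1)$ is an atom, i.e., covers the least element $\langle\cdot\rangle_-^+$ and nothing lies strictly between. First I would dispose of the ``only if'' direction. Suppose $\rho_4$ and $\rho_5$ alternate only finitely many times in $\omega$. Then $T$ has a factorized irredundant form $T_1T_2\cdots$ with more than one term, or at least a first term $T_1=\omega\rho_1^a\rho_2^b\rho_3$ with $\omega$ of the form treated in Remark~\ref{rem:2}. Using Lemma~\ref{lem:4} (and Corollary~\ref{cor:5}) together with the fact that $\langle\cdot\rangle_-^+<\langle\cdot\rangle_L,\langle\cdot\rangle_R$ (or more generally that there are rules strictly between $\langle\cdot\rangle_-^+$ and $\rho_1^a\rho_2^b\rho_3$ composed with something), I would produce a rule $S$ with $\langle\cdot\rangle_-^+ < S < T$: concretely, replace the tail $T'$ of $T$ by $\langle\cdot\rangle_-^+$ to get $T_0:=\omega\rho_1^a\rho_2^b\rho_3\langle\cdot\rangle_-^+$, note $\langle\cdot\rangle_-^+\le T_0\le T$ by Lemma~\ref{lem:4}, and then if $\omega$ has only finitely many alternations, exhibit via Lemma~\ref{lem:3} or Remark~\ref{rem:2} an intermediate rule. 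The key point is that when $\omega$ has finitely many alternations, $T$ genuinely ``keeps'' some information beyond the single current pair, and one can shave off that information one step at a time, contradicting atomicity.

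Next, the ``if'' direction: assume $\rho_4,\rho_5$ alternate infinitely often in $\omega$. By Lemma~\ref{lem:1a}(ii), $T\sim(\rho_4\rho_5)^*\rho_1^a\rho_2^b\rho_3$, and in particular $T'$ is irrelevant, which already establishes the parenthetical remark. So it remains to show $(\rho_4\rho_5)^*\rho_1^a\rho_2^b\rho_3$ is an atom. Since $(\rho_4\rho_5)^*$ deletes all pairs after the current one, on any nonassociative $\sigma$ this rule acts only on the first pair $(p_1,m_1)$: it deletes all later pairs and applies $\rho_1^a\rho_2^b\rho_3$ to $(p_1,m_1)$. Thus the behaviour of $T$ is completely determined by the single pair $(p_1,m_1)\ge(1,1)$ and by $(a,b)$. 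I would then show: $T>\langle\cdot\rangle_-^+$ because for $(a,b)\ne(1,1)$ there is some nonassociative pair (namely $(2,1)$ if $a=0$, or $(1,2)$ if $a=1,b=0$) on which $T$ leaves a nonempty result while $\langle\cdot\rangle_-^+$ deletes everything. And $T$ covers $\langle\cdot\rangle_-^+$: if $\langle\cdot\rangle_-^+\le S\le T$, then on every $\sigma$, $S(\sigma)$ lies between $T(\sigma)$ and $\varepsilon$ in the $\sqsubseteq$-order; since $T(\sigma)$ differs from $\varepsilon$ by at most one retained symbol (because after $(\rho_4\rho_5)^*$ only one pair survives and $\rho_1^a\rho_2^b\rho_3$ retains at most one element of it when $(a,b)\neq(1,1)$), $S(\sigma)$ is forced to equal either $T(\sigma)$ or $\varepsilon$ on each $\sigma$. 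A short argument — essentially that $S$, being well-formed, must already agree with $T$ on the pairs where $T$ retains something, since retaining ``less'' on such a pair would mean retaining nothing, which a comparison argument via the factorized form of $S$ rules out — then gives $S=T$ or $S=\langle\cdot\rangle_-^+$.

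Concretely for the covering step I would argue as follows. Write $S$ in factorized irredundant form $S=S_1S_2\cdots$. From $\langle\cdot\rangle_-^+\le S$ we do not directly get much, but from $S\le T=(\rho_4\rho_5)^*\rho_1^a\rho_2^b\rho_3$ and Lemma~\ref{lem:2} (or Remark~\ref{Rem:123456}), $S$ cannot have a first factor $\omega_1$ in which $\rho_4,\rho_5$ alternate only finitely often while being strictly above $\langle\cdot\rangle_-^+$ in the relevant sense; pushing this, $S$'s first factor must itself have $\omega_1$ with infinitely many alternations, hence $S\sim\omega_1\rho_1^{a'}\rho_2^{b'}\rho_3\sim(\rho_4\rho_5)^*\rho_1^{a'}\rho_2^{b'}\rho_3$. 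Then comparing $T$ and $S$ reduces to comparing $(\rho_4\rho_5)^*\rho_1^a\rho_2^b\rho_3$ with $(\rho_4\rho_5)^*\rho_1^{a'}\rho_2^{b'}\rho_3$, i.e., by Lemma~\ref{lem:1} (with $\omega=(\rho_4\rho_5)^*$), to comparing $(a,b)$ with $(a',b')$. Since $S\le T$ and $(a,b)\neq(1,1)$, Lemma~\ref{lem:1} forces $(a',b')=(a,b)$, whence $S=T$; the only remaining alternative is that $S$ is comparable to $\langle\cdot\rangle_-^+$ only, i.e. $S=\langle\cdot\rangle_-^+$.

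The main obstacle I anticipate is the covering argument in the ``if'' direction: one must rule out \emph{every} rule $S$ strictly between, including rules whose factorized form starts with a long finite-alternation block $\omega_1$, and the cleanest route is to invoke Lemma~\ref{lem:2} / Remark~\ref{Rem:123456} to collapse any such $S$ that lies below $(\rho_4\rho_5)^*\rho_1^a\rho_2^b\rho_3$ down to an infinite-alternation first factor, and then Lemma~\ref{lem:1} to pin down $(a,b)$. Making sure the hypotheses of those lemmas are actually met by an arbitrary $S\le T$ — and handling the degenerate sub-cases $(a,b)\in\{(0,0),(1,0),(0,1)\}$ uniformly — is the delicate bookkeeping, but it is routine given Lemmas~\ref{lem:1a}, \ref{lem:1} and \ref{lem:2}.
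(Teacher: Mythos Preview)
Your ``if'' direction is essentially the paper's argument spelled out in more detail: the paper simply cites Lemma~\ref{lem:1}, whereas you correctly anticipate that one must also invoke Lemma~\ref{lem:2} to force any $S\le T$ to have a first factor with infinitely many alternations before Lemma~\ref{lem:1} can pin down $(a',b')\in\{(a,b),(1,1)\}$. The bookkeeping you flag (handling $S$ whose first factor carries $(c,d)\neq(a,b)$) is real but indeed routine once one observes that $S\le T$ forces $S$ to delete every pair after the first on every sequence.

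For the ``only if'' direction, however, the paper's argument is both simpler and more robust than yours. The paper does not pass through your $T_0=\omega\rho_1^a\rho_2^b\rho_3\langle\cdot\rangle_-^+$; instead it directly sets
\[
T'':=(\rho_4\rho_5)^*\rho_1^a\rho_2^b\rho_3
\]
and observes in one line that $T''>\langle\cdot\rangle_-^+$ (since $(a,b)\neq(1,1)$) and $T''<T$ by a single application of Lemma~\ref{lem:2} (with the infinite-alternation prefix on the left and the original finite-alternation $\omega$ on the right). Your candidate $T_0$ may well coincide with $T$ --- this happens whenever $T'\sim\langle\cdot\rangle_-^+$, and in particular whenever $T$ is already a single-term rule such as $\rho_4^*\rho_1^a\rho_2^b\rho_3$ --- and in that case your fallback to Lemma~\ref{lem:3} or Remark~\ref{rem:2} does not obviously produce an intermediate element (Remark~\ref{rem:2} explicitly excludes $\omega$ of the form $\omega'\rho_4^*$ or $\omega'\rho_5^*$, and Lemma~\ref{lem:3}/Remark~\ref{Rem:1234} only let you ``increase $a_n$'' when $a_n$ is finite). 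The paper's $T''$ sidesteps all of this case analysis.
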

\begin{proof}
Note that if $\rho_4$ and $\rho_5$ alternate infinitely many times in $\omega$, then 
$T=\omega \rho_1^a \rho_2^b \rho_3 $. By Lemma \ref{lem:1}, the condition is sufficient.

To show that it is also necessary, let $T$ be an atom, and for the sake of a contradiction
 suppose that 
$\rho_4$ and $\rho_5$ alternate finitely many times in $\omega$.
Let $T''=(\rho_4\rho_5)^*T$. Then  $T''=(\rho_4\rho_5)^* \rho_1^a \rho_2^b \rho_3 > \langle \cdot\rangle_-^+ $.
Moreover, by Lemma \ref{lem:2} we have $T>T''$ which contradicts the fact that $T$ is an atom. 
\end{proof}
Consequently, for $(a,b)<(1,1)$, we have only 3 atoms, namely
$(\rho_4\rho_5)^*\rho_3$, $(\rho_4\rho_5)^*\rho_1\rho_3$ and $(\rho_4\rho_5)^*\rho_2\rho_3$.

\begin{proposition}\label{prop:4}
If $T=\omega\rho_1 \rho_2 \rho_3T'$ is an atom, then $\rho_4$ and $\rho_5$ alternate finitely many times in $\omega$.
\end{proposition}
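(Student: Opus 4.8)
The plan is to argue by contradiction: assume that $\rho_4$ and $\rho_5$ alternate infinitely many times in $\omega$. Write $T=T_1T_2\cdots$ in factorized irredundant form, so that $T_1=\omega_1\rho_1^{a_1}\rho_2^{b_1}\rho_3$ with $\omega_1=\omega$ and $a_1=b_1=1$. Since a word of the form $\omega'\rho_4^*$ or $\omega'\rho_5^*$ (for $\omega'\in\cL(\{\rho_4,\rho_5\})$) has only finitely many alternations of $\rho_4$ and $\rho_5$, the leading block $\omega_1$ falls under the ``infinitely many alternations'' condition of Lemma~\ref{lem:1a}(ii) but not the other one; hence, in the notation of that lemma, $k_2=1$ while $k_1$ (if defined) is at least $2$. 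Applying the ``otherwise'' branch of Lemma~\ref{lem:1a}(ii) with $k_2=1$ then yields $T\sim T'_1=(\rho_4\rho_5)^*\rho_1^{a_1}\rho_2^{b_1}\rho_3=(\rho_4\rho_5)^*\rho_1\rho_2\rho_3$.

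I would then recall that $(\rho_4\rho_5)^*\rho_1\rho_2\rho_3$ is exactly the factorized irredundant form of $\langle\cdot\rangle_-^+$, which by Proposition~\ref{prop:3} is the least element of $\gR/_\sim$. Since the least element of a poset is never an atom (an atom must lie strictly above the least element), $T$ cannot be an atom, contradicting the hypothesis. Therefore $\rho_4$ and $\rho_5$ alternate only finitely many times in $\omega$, as claimed.

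There is no real obstacle in this argument; the one point worth stating carefully is that having infinitely many alternations in the leading block $\omega$ is precisely the hypothesis of the ``otherwise'' branch of Lemma~\ref{lem:1a}(ii) with $k_2=1$. This is also exactly what makes the case $(a,b)=(1,1)$ treated here differ from the case $(a,b)\neq(1,1)$ of Proposition~\ref{prop:3a}: when $a=b=1$, an infinitely-alternating leading block collapses $T$ to the least element $\langle\cdot\rangle_-^+$, rather than producing the genuine atom $(\rho_4\rho_5)^*\rho_1^a\rho_2^b\rho_3$ obtained when $(a,b)\neq(1,1)$.
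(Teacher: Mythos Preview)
Your proof is correct and follows exactly the same idea as the paper's one-line argument: if $\rho_4$ and $\rho_5$ alternate infinitely many times in $\omega$, then $T=\langle\cdot\rangle_-^+$, which (being the least element) is not an atom. You have simply made explicit the appeal to Lemma~\ref{lem:1a}(ii) and Proposition~\ref{prop:3} that the paper leaves implicit. One minor remark: your side argument that $\omega'\rho_4^*$ has only finitely many alternations is unnecessary, since $k_2=1$ already forces $k_2\leqslant k_1$ and hence the ``otherwise'' branch of Lemma~\ref{lem:1a}(ii) applies regardless.
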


\begin{proof}
If $\rho_4$ and $\rho_5$ alternate infinitely many times in $\omega$, then $T=\langle \cdot\rangle_-^+$.
\end{proof}

\begin{proposition}\label{prop:5}
Let $T=\omega\rho_1 \rho_2 \rho_3T'$ such that $\rho_4$ and $\rho_5$ alternate finitely many times in $\omega$.
If $T$ is an atom, then $T'=\langle \cdot\rangle_-^+$.
\end{proposition}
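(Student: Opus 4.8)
The plan is to argue by contraposition and exploit the fact that $\langle\cdot\rangle_-^+$ is the least element of $\gR/_\sim$ (Proposition~\ref{prop:3}). Suppose $T=\omega\rho_1\rho_2\rho_3T'$ with $\rho_4,\rho_5$ alternating finitely many times in $\omega$, and suppose $T'\neq\langle\cdot\rangle_-^+$. I want to produce a w.f.c.r. $S$ with $\langle\cdot\rangle_-^+ < S < T$, which contradicts $T$ being an atom (recall an atom covers the least element). The natural candidate is $S:=\omega\rho_1\rho_2\rho_3\langle\cdot\rangle_-^+$, i.e., replace the tail $T'$ by the least rule. By Proposition~\ref{prop:3}, $\langle\cdot\rangle_-^+\leqslant T'$, so applying the ``monotone substitution'' principle of Lemma~\ref{lem:4} (with the prefix $\omega\rho_1^a\rho_2^b\rho_3$ having $\rho_4,\rho_5$ alternate finitely many times, $a=b=1$) gives $S\leqslant T$; moreover, since $T'\neq\langle\cdot\rangle_-^+$ means $\langle\cdot\rangle_-^+ < T'$ strictly, the ``only if'' part of Lemma~\ref{lem:4} upgrades this to $S<T$ strictly.

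It then remains to check that $S=\omega\rho_1\rho_2\rho_3\langle\cdot\rangle_-^+$ is strictly above $\langle\cdot\rangle_-^+$. Here $\langle\cdot\rangle_-^+$ deletes \emph{every} term of any nonassociative string, so to see $S>\langle\cdot\rangle_-^+$ it suffices to exhibit one $\sigma\in\gS_0$ on which $S$ does not delete everything, i.e., $S(\sigma)\neq\varepsilon$. This is exactly the kind of explicit sequence that appears throughout Subsection~\ref{preliminary results}; writing $\omega=\rho_4^{a_1}\rho_5^{b_1}\cdots\rho_4^{a_n}\rho_5^{b_n}$ (with $n$ finite by hypothesis and the normalization $a_i\neq0$ for $i\geqslant2$, $b_i\neq0$ for $i\leqslant n-1$), one sets, as in Lemma~\ref{lem:2},
\[
\sigma=(1,1)(1,0)^{\alpha_1}(0,1)^{\beta_1}\cdots(1,0)^{\alpha_n}(0,1)^{\beta_n}(1,0),
\]
where $\alpha_i,\beta_i\in\{0,1\}$ record whether $a_i,b_i$ are nonzero, appending a final $(1,0)$ that survives: $\rho_1\rho_2\rho_3$ clears the leading $(1,1)$, each $\rho_4$ (resp.\ $\rho_5$) block in $\omega$ is absorbed by the matching $(1,0)$ (resp.\ $(0,1)$) block, and the rule halts with the terminal $(1,0)$ intact before $\langle\cdot\rangle_-^+$ can act. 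Hence $S(\sigma)\neq\varepsilon=\langle\cdot\rangle_-^+(\sigma)$, so $\langle\cdot\rangle_-^+<S<T$, contradicting that $T$ is an atom.

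The main obstacle I anticipate is the bookkeeping in the last step: one must verify carefully that the appended terminal symbol genuinely survives the action of $\omega\rho_1\rho_2\rho_3$ and is not consumed by some $\rho_4^*$ or $\rho_5^*$ hidden in $\omega$ — but this cannot happen precisely because $\rho_4,\rho_5$ alternate only finitely many times in $\omega$, so $\omega$ is a genuine finite word and, after matching it against the $(1,0)^{\alpha_i}(0,1)^{\beta_i}$ blocks, there is no instruction left in the prefix to reach the trailing $(1,0)$. A secondary point is making sure the normalization conventions on $\omega$ (that consecutive like-blocks are merged, and that leading $\rho_5$'s or interior zero-exponents are handled) are compatible with the construction; this is routine and parallels the treatment already carried out in the proofs of Lemmas~\ref{lem:2} and~\ref{lem:3}.
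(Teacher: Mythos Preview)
Your overall strategy is exactly the paper's: set $T''=\omega\rho_1\rho_2\rho_3\langle\cdot\rangle_-^+$, use Proposition~\ref{prop:3} together with Lemma~\ref{lem:4} to obtain $T''<T$, and then argue $T''>\langle\cdot\rangle_-^+$ to contradict atomicity. The paper dispatches this last inequality in one line by invoking Lemma~\ref{lem:2}; you instead unfold an explicit witness sequence.

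That explicit witness has a gap. You assert that ``$\omega$ is a genuine finite word'' because $\rho_4,\rho_5$ alternate finitely many times, but finite alternation does \emph{not} force finite length: $\omega$ may contain blocks $\rho_4^*$ or $\rho_5^*$, which is precisely the case you flagged and then incorrectly dismissed. Worse, the problem already occurs for finite $\omega$: if the terminal block of $\omega$ is $\rho_4^{a_n}$ with $a_n\geqslant 2$ (so $b_n=0$), then your sequence ends with two consecutive $(1,0)$ pairs and $\rho_4^{a_n}$ consumes \emph{both} of them, yielding $S(\sigma)=\varepsilon$. For a concrete failure, take $\omega=\rho_4^2$ and $\sigma=(1,1)(1,0)(1,0)$. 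The repair is easy: either cite Lemma~\ref{lem:2} directly as the paper does, or choose the appended terminal pair according to the last nonzero block of $\omega$ --- append $(0,1)$ when $\omega$ ends in a $\rho_4$-block, and $(1,0)$ when it ends in a $\rho_5$-block (or when $\omega=\varepsilon$).
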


\begin{proof} Let $T=\omega\rho_1 \rho_2 \rho_3T'$ be an atom.
Suppose for the sake of a contradiction that $T'\neq\langle \cdot\rangle_-^+$.
By Proposition \ref{prop:3}, $T'>\langle \cdot\rangle_-^+$, and by Lemma \ref{lem:2}, 
 $T''=\omega\rho_1 \rho_2\rho_3 \langle \cdot\rangle_-^+>\langle \cdot\rangle_-^+$.
 Moreover, by Lemma \ref{lem:4}, $T>T''$ which contradicts the fact that $T$ is an atom.
\end{proof}

\begin{proposition}\label{prop:6}
Let $T=\omega\rho_1 \rho_2 \rho_3 \langle \cdot\rangle_-^+$. Then $T$ is an atom if and only if $\omega:=\rho_4^{a_1}\rho_5^{b_1}\cdots\rho_4^{a_n}\rho_5^{b_n}$ with $a_i\neq 0$ for $2\leqslant i\leqslant n$ and $b_i\neq 0$ for $1\leqslant i\leqslant n-1$, and such that 
\begin{itemize}
\item[(i)] $b_n\neq 0$ and $a_n$ is infinite, or
\item[(ii)] $b_n=0$,  $a_n\neq 0$ and $b_{n-1}$ is infinite. 
\end{itemize}
\end{proposition}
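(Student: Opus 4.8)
The plan is to combine the ``irredundant factorized form'' machinery of Proposition~\ref{prop:ab} and the comparison lemmas of Subsection~4.1 (especially Lemmas~\ref{lem:2},~\ref{lem:3} and~\ref{lem:4}, together with Remark~\ref{Rem:1234} and Remark~\ref{Rem:123456}). By Propositions~\ref{prop:4} and~\ref{prop:5}, we already know that an atom of the form $T=\omega\rho_1\rho_2\rho_3T'$ must have $\rho_4,\rho_5$ alternating finitely many times in $\omega$ and must have $T'=\langle\cdot\rangle_-^+$, so every candidate atom with leading block $\rho_1\rho_2\rho_3$ is of the shape described in the statement, i.e. $T=\omega\rho_1\rho_2\rho_3\langle\cdot\rangle_-^+$ with $\omega:=\rho_4^{a_1}\rho_5^{b_1}\cdots\rho_4^{a_n}\rho_5^{b_n}$, where the normalization $a_i\neq 0$ for $2\leqslant i\leqslant n$ and $b_i\neq 0$ for $1\leqslant i\leqslant n-1$ is without loss of generality. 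Thus the whole statement reduces to: among such $T$, characterize exactly which are atoms, and the claim is that this happens precisely in the two listed cases.

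First I would prove \emph{sufficiency}: if (i) or (ii) holds, then $T$ is an atom. Since $\langle\cdot\rangle_-^+$ is the least element (Proposition~\ref{prop:3}), it suffices to show there is no $S$ with $\langle\cdot\rangle_-^+<S<T$. Write $S$ in irredundant factorized form. Using Corollary~\ref{cor:5} / Remark~\ref{rem:123} one reduces the comparison $S<T$ to comparing the ``tails'' after a common prefix; since the $\omega_i$-blocks of $T$ alternate finitely often, comparisons are strict and faithful, so $S$ must agree with $T$ on an initial segment of factors and then differ. The point is that in case (i), the last block of $\omega$ is $\rho_4^{a_1}\rho_5^{b_1}\cdots\rho_4^{\infty}$ with $b_n\neq 0$ ($n\geqslant 1$; if $n=1$ this is $\rho_4^\infty$), so by Lemma~\ref{lem:1a}(ii) the rule is genuinely $\omega\rho_1\rho_2\rho_3$ and then stops — there is literally ``no room'' for an intermediate rule: any $S$ strictly above $\langle\cdot\rangle_-^+$ and at most $T$ would have to keep at least one more symbol somewhere, and Remark~\ref{Rem:1234} (which says $\omega'\rho_1\rho_2\rho_3\langle\cdot\rangle_-^+\not\leqslant\omega\rho_1^a\rho_2^b\rho_3 T$ when the $\rho_4$-exponents of $\omega'$ strictly dominate those of $\omega$) together with Lemma~\ref{lem:2} forces $S$ to either coincide with $T$ or drop to $\langle\cdot\rangle_-^+$. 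Case (ii), $b_n=0,\ a_n\neq 0,\ b_{n-1}$ infinite, is handled the same way after interchanging the roles of $\rho_4\leftrightarrow\rho_5$ and $(1,0)\leftrightarrow(0,1)$, using the second half of Remark~\ref{Rem:1234}.

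Next I would prove \emph{necessity}: if $T=\omega\rho_1\rho_2\rho_3\langle\cdot\rangle_-^+$ is an atom then $\omega$ satisfies (i) or (ii). Argue by contraposition: assume all exponents $a_i,b_i$ in $\omega$ are finite (the infinite-alternation case is excluded by Proposition~\ref{prop:4}, and the case of a trailing $\rho_4^\infty$ with $b_n\neq 0$ or trailing $\rho_5^\infty$ in $b_{n-1}$ position is exactly (i)/(ii)). Then I exhibit an explicit intermediate rule strictly between $\langle\cdot\rangle_-^+$ and $T$. The natural candidate is $S=\omega'\rho_1\rho_2\rho_3\langle\cdot\rangle_-^+$ where $\omega'$ is obtained from $\omega$ by decreasing the last relevant finite exponent by one (the last $b_n$ if $b_n\neq0$, otherwise the last $a_n$, otherwise the last $b_{n-1}$, etc.), staying inside the normalized form; by Lemma~\ref{lem:3} together with Remark~\ref{Rem:1234} this $S$ satisfies $\langle\cdot\rangle_-^+<S<T$ (the strict inequalities come from the explicit witnessing sequences of the shape $(1,1)[(1,0)(0,1)]^k\cdots$ used throughout Lemma~\ref{lem:3}). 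If instead $\omega$ has a tail like $\rho_5\rho_4^\infty\rho_5\cdots$ — i.e. an infinite exponent that is not in the final position — one uses the remark following Lemma~\ref{lem:1a} that ``the subsequent terms are relevant,'' so such $\omega$ actually is not in irredundant form / is comparable to something strictly smaller, again producing an intermediate rule.

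The main obstacle I anticipate is the bookkeeping in the necessity direction: carefully enumerating which trailing exponent to decrement so that the resulting $\omega'$ is still in normalized form and the comparison $S<T$ (not merely $S\leqslant T$ or $S\parallel T$) is guaranteed by Lemma~\ref{lem:3} and Remark~\ref{Rem:1234} — in particular one must be sure the decrement does not accidentally collapse a block and thereby change $n$, which by Lemma~\ref{lem:3}(iii) would make the two rules incomparable rather than ordered. Keeping track of the $\rho_4/\rho_5$ symmetry (so that cases $b_n\neq 0$ and $b_n=0$ are genuinely dual) is the other place where care is needed, but it is routine given the lemmas already established.
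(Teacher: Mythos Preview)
Your overall strategy matches the paper's: both directions rest on Lemma~\ref{lem:3} and Remark~\ref{Rem:1234}, with Propositions~\ref{prop:4}--\ref{prop:5} already baked into the hypothesis. However, your execution of the necessity direction contains a concrete error.

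Recall the order convention: $R\leqslant R'$ means $R$ deletes \emph{more}. Remark~\ref{Rem:1234} says that, with $b_n=b'_n\neq 0$ and all earlier blocks equal, one has $T<T'$ \emph{iff} $a_n>a'_n$: a larger $\rho_4$-exponent yields a \emph{smaller} rule. Hence, to produce $S$ with $\langle\cdot\rangle_-^+<S<T$ when $b_n\neq 0$ and $a_n$ is finite, you must \emph{increase} $a_n$ (e.g.\ replace it by $a_n+1$, or by $*$), not decrease $b_n$. Your proposed move --- decrementing $b_n$ --- is exactly what Lemma~\ref{lem:3}(ii)(a) forbids: if $b_n\neq b'_n$ then $T\parallel T'$, so your candidate $S$ would be incomparable with $T$, not below it. The dual error occurs in case~(ii): you should increase $b_{n-1}$, not decrease $a_n$. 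Once the correct exponent is increased, Lemma~\ref{lem:2} gives $S>\langle\cdot\rangle_-^+$ and Remark~\ref{Rem:1234} gives $S<T$, completing the contraposition.

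A smaller point on sufficiency: your appeal to Lemma~\ref{lem:1a}(ii) (``the rule is genuinely $\omega\rho_1\rho_2\rho_3$ and then stops'') is misplaced. In case~(i), $\omega=\cdots\rho_4^{*}\rho_5^{b_n}$ with $b_n$ finite and nonzero; this does \emph{not} end in $\rho_4^*$ or $\rho_5^*$, so Lemma~\ref{lem:1a}(ii) does not apply and the trailing $\langle\cdot\rangle_-^+$ is genuinely part of the irredundant form (cf.\ the remark just after Lemma~\ref{lem:1a}). The sufficiency argument should instead run directly through Remark~\ref{Rem:1234} and Remark~\ref{Rem:123456}: any $S\leqslant T$ with first factor $\omega'\rho_1^c\rho_2^d\rho_3$ must, by Lemma~\ref{lem:1}, have $(c,d)=(1,1)$ or agree with $T$ on the first block; in the former case Remarks~\ref{Rem:1234} and~\ref{Rem:123456} force $\omega'$ to match $\omega$ block-by-block up through the infinite exponent, at which point $a_n=*$ leaves no room for a strictly smaller choice, and the tail is absorbed by $\langle\cdot\rangle_-^+$.
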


\begin{proof}  Necessity follows from Propositions \ref{prop:4} and \ref{prop:5}, and Lemma~\ref{lem:3} and Remark~\ref{Rem:1234}.
Sufficiency follows from Lemma~\ref{lem:3} and Remark~\ref{Rem:1234}.
\end{proof}

We can now explicitly describe the atoms of $\gR/_\sim$. 
\begin{theorem}
 A w.f.c.r. $T$ is an atom of $\gR/_\sim$ if and only if $T=(\rho_4\rho_5)^* \rho_1^a \rho_2^b \rho_3$, for $(a,b)\neq(1,1)$, or 
 $T=\omega\rho_1 \rho_2 \rho_3 \langle  \cdot\rangle_-^+$ where $\omega:=\rho_4^{a_1}\rho_5^{b_1}\cdots\rho_4^{a_n}\rho_5^{b_n}$ with $a_i\neq 0$ for $2\leqslant i\leqslant n$ and $b_i\neq 0$ for  $1\leqslant i\leqslant n-1$, and  such that 
\begin{itemize}
\item[(i)] $b_n\neq 0$ and $a_n$ is infinite, or
\item[(ii)] $b_n=0$,  $a_n\neq 0$ and $b_{n-1}$ is infinite. 
\end{itemize}
\end{theorem}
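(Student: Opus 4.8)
The theorem is a synthesis of the propositions just established, so the plan is essentially to assemble them. First I would split into the two announced families and treat each direction separately. For the case $(a,b)\neq(1,1)$: by Lemma~\ref{lem:1a}(ii) and the factorization scheme, any w.f.c.r.\ whose first term $T_1=\omega\rho_1^a\rho_2^b\rho_3T'$ has $(a,b)\neq(1,1)$ is dealt with by Proposition~\ref{prop:3a}, which says such a $T$ is an atom precisely when $\rho_4$ and $\rho_5$ alternate infinitely often in $\omega$, i.e.\ when $T=(\rho_4\rho_5)^*\rho_1^a\rho_2^b\rho_3$. So this half is immediate from Proposition~\ref{prop:3a} once one notes that every w.f.c.r.\ has a factorized irredundant form $T_1T_2\cdots$ with $T_1$ of the stated shape, and that if $(a_1,b_1)\neq(1,1)$ then $(a,b)$ in the theorem statement refers to $(a_1,b_1)$.

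For the remaining case, the first term is $\omega\rho_1\rho_2\rho_3T'$ with $(a_1,b_1)=(1,1)$. Here I would invoke the chain Proposition~\ref{prop:4} $\to$ Proposition~\ref{prop:5} $\to$ Proposition~\ref{prop:6}. Concretely: if $T$ is an atom, Proposition~\ref{prop:4} forces $\rho_4,\rho_5$ to alternate only finitely often in $\omega$ (so $\omega=\rho_4^{a_1}\rho_5^{b_1}\cdots\rho_4^{a_n}\rho_5^{b_n}$ in the normalized form with $a_i\neq0$ for $2\leqslant i\leqslant n$ and $b_i\neq0$ for $1\leqslant i\leqslant n-1$); Proposition~\ref{prop:5} then forces $T'=\langle\cdot\rangle_-^+$; and Proposition~\ref{prop:6} characterizes exactly when $\omega\rho_1\rho_2\rho_3\langle\cdot\rangle_-^+$ is an atom, namely conditions (i) or (ii). Conversely, any rule of one of the two listed forms is an atom by the sufficiency halves of Propositions~\ref{prop:3a} and~\ref{prop:6}. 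Finally one should observe that these two families are disjoint (the first has $(a_1,b_1)\neq(1,1)$ and no trailing $\langle\cdot\rangle_-^+$ is forced, the second has $(a_1,b_1)=(1,1)$), and that together they exhaust all w.f.c.r.'s since the factorized irredundant form always begins with some $\omega\rho_1^a\rho_2^b\rho_3$ and, if this is not the entire rule, Lemma~\ref{lem:2} together with Lemma~\ref{lem:4} shows strict non-minimality unless we are in one of the covered configurations.

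There is essentially no new argument to make; the only genuine care required is the \emph{bookkeeping of cases}, which I anticipate to be the main (minor) obstacle: one must make sure that Propositions~\ref{prop:4}--\ref{prop:6} are being applied to $\omega$ written in the canonical normalized form (collapsing $\rho_1^k\sim\rho_1$, $\rho_2^k\sim\rho_2$, deleting trivial exponents, and using Lemma~\ref{lem:1a} to reduce $\omega$ containing $(\rho_4\rho_5)^*$ or a trailing $\rho_4^*$/$\rho_5^*$ appropriately), and that the degenerate small-$n$ cases ($n=1$, or $n-1\leqslant 0$ in condition (ii)) are consistent with the conventions in Lemma~\ref{lem:3} and Remark~\ref{Rem:1234}. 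Once the normalization is pinned down, the proof is a three-line citation of the preceding results for each direction.

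\begin{proof}
By Lemma~\ref{lem:1a}, write any w.f.c.r.\ $T$ in factorized irredundant form $T=T_1T_2\cdots$ with $T_1=\omega\rho_1^a\rho_2^b\rho_3T''$ and $\omega\in\cL(\{\rho_4,\rho_5\})$. We consider two cases according to the value of $(a,b)$.

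If $(a,b)\neq(1,1)$, then Proposition~\ref{prop:3a} applies directly: $T$ is an atom if and only if $\rho_4$ and $\rho_5$ alternate infinitely many times in $\omega$, in which case, by Lemma~\ref{lem:1a}(ii), $T=(\rho_4\rho_5)^*\rho_1^a\rho_2^b\rho_3$. This gives exactly the first family in the statement.

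If $(a,b)=(1,1)$, first suppose $T$ is an atom. By Proposition~\ref{prop:4}, $\rho_4$ and $\rho_5$ alternate finitely many times in $\omega$, so after normalization we may write $\omega:=\rho_4^{a_1}\rho_5^{b_1}\cdots\rho_4^{a_n}\rho_5^{b_n}$ with $a_i\neq 0$ for $2\leqslant i\leqslant n$ and $b_i\neq 0$ for $1\leqslant i\leqslant n-1$. By Proposition~\ref{prop:5}, $T''=\langle\cdot\rangle_-^+$, so $T=\omega\rho_1\rho_2\rho_3\langle\cdot\rangle_-^+$. Finally, Proposition~\ref{prop:6} tells us that this rule is an atom if and only if condition (i) or (ii) of the statement holds. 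Conversely, if $T$ has this form with (i) or (ii), then $T$ is an atom by Proposition~\ref{prop:6}. This gives exactly the second family.

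Since the two families are disjoint (the first has $(a,b)\neq(1,1)$ and the second has first term $\rho_4^{a_1}\rho_5^{b_1}\cdots\rho_4^{a_n}\rho_5^{b_n}\rho_1\rho_2\rho_3$) and together account for all possible values of $(a,b)$, the characterization is complete.
\end{proof}
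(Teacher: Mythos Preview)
Your proof is correct and takes exactly the paper's approach: the theorem is stated immediately after Proposition~\ref{prop:6} without a separate proof, being simply the assembly of Propositions~\ref{prop:3a}, \ref{prop:4}, \ref{prop:5} and~\ref{prop:6}. One minor notational slip: when you write $T_1=\omega\rho_1^a\rho_2^b\rho_3T''$, the symbol $T_1$ should denote only the first factor $\omega\rho_1^a\rho_2^b\rho_3$, with the tail $T''=T_2T_3\cdots$; this does not affect the argument.
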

  
\subsection{Maximal elements}\label{maximals}
We  now focus on the maximal elements of $\gR/_\sim$. In \cite{Gra03}, it was proved that
$\langle\cdot\rangle_0$ is a maximal element of the set of well-formed
computation rules. 
 
\begin{proposition}\label{prop:max1}
 Let $T=\omega \rho_1^a \rho_2^b \rho_3T'$. 
 If $T$ is maximal, then
\begin{itemize}
\item[(i)] $\rho_4$ and $\rho_5$ alternate finitely many times in $\omega$,
\item[(ii)] $(a,b)<(1,1)$, and 
\item[(iii)] $T'$ is maximal.
\end{itemize}
\end{proposition}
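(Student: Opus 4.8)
The plan is to prove the contrapositive of each of the three implications, exhibiting in every case a strictly larger w.f.c.r.\ that witnesses the failure of maximality. The unifying principle is that going from a ``deleting'' elementary rule to one that deletes less (or replacing a tail by a larger tail in the sense of $\leqslant$) produces a strictly greater rule, and that these local modifications propagate upward through the factorized prefix by Corollary~\ref{cor:5}/Remark~\ref{rem:123}.

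For (i): assume $\rho_4$ and $\rho_5$ alternate infinitely many times in $\omega$. Then by Lemma~\ref{lem:1a}(ii) we have $T\sim (\rho_4\rho_5)^*\rho_1^a\rho_2^b\rho_3=\langle\cdot\rangle^+_-$ if $(a,b)=(1,1)$, and in any case $T$ equals one of the three atoms $(\rho_4\rho_5)^*\rho_1^a\rho_2^b\rho_3$ from Proposition~\ref{prop:3a} (or $\langle\cdot\rangle^+_-$ itself). Since $\langle\cdot\rangle^+_-$ is the least element (Proposition~\ref{prop:3}) and $\langle\cdot\rangle_0=\rho_3^*$ is a w.f.c.r.\ strictly above each of these (as $\rho_3^*$ does not delete, e.g., the string $(2,1)(1,0)$ entirely), $T$ is not maximal.

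For (ii): assume $(a,b)=(1,1)$, and (by (i), which we may now use) that $\rho_4,\rho_5$ alternate finitely many times in $\omega$. Form $T''=\omega\rho_1\rho_3T'$ (drop the $\rho_2$). By Lemma~\ref{lem:1} applied with $T\geqslant T'$ trivially replaced by $T'\geqslant T'$, we get $\omega\rho_1\rho_3T'>\omega\rho_1\rho_2\rho_3T'=T$, so $T$ is not maximal. I should double-check that $T''$ is still well-formed: dropping $\rho_2$ only weakens the deletion on the leading pair, and since $\rho_1\rho_3$ still forces the leading pair to one of $(p,0),(0,m),(0,0)$ while the factorization structure and the tail $T'$ are untouched, well-formedness is preserved.

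For (iii): assume $T'$ is not maximal, so there is $T''>T'$. By Corollary~\ref{cor:5}, since $\rho_4,\rho_5$ alternate finitely many times in $\omega$ (which holds by (i)), we get $\omega\rho_1^a\rho_2^b\rho_3\,T''>\omega\rho_1^a\rho_2^b\rho_3\,T'=T$; one checks this composed rule is again a w.f.c.r.\ because prepending a fixed finite-alternation factor $\omega\rho_1^a\rho_2^b\rho_3$ to a w.f.c.r.\ yields a w.f.c.r.\ (the prefix acts on the leading pair, the suffix $T''$ handles whatever nonassociative pair survives). Hence $T$ is not maximal, completing the contrapositive in all three cases.

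The main obstacle I anticipate is the well-formedness bookkeeping: each time I modify an elementary rule or splice a larger tail, I must confirm the resulting word still lies in $\gR$, i.e.\ still sends every $\sigma\in\gS$ into $\gS\setminus\gS_0$. The cleanest way to handle this uniformly is to observe that all the replacements used above either (a) replace a factor by one with the same ``stopping behaviour'' on the leading pair (so the argument of Lemma~\ref{lem:1a} that the rule terminates associatively on every sequence still applies verbatim), or (b) attach/modify only the tail, which is itself a w.f.c.r.\ by hypothesis. I would state this as a short preliminary remark before the case analysis, so that parts (i)--(iii) reduce to the strict-inequality claims, which are immediate from Lemmas~\ref{lem:1},~\ref{lem:2} and Corollary~\ref{cor:5}.
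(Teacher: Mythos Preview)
Your arguments for (ii) and (iii) are correct and essentially coincide with the paper's: the paper simply invokes Lemma~\ref{lem:1} for (ii) and Lemma~\ref{lem:4} for (iii), exactly as you do (your appeal to Corollary~\ref{cor:5} for (iii) is just the single-factor case, i.e.\ Lemma~\ref{lem:4}). Your well-formedness discussion is more careful than the paper's, and the observation that $T'$ is itself a w.f.c.r.\ via Remark~\ref{rem:2} is a nice addition.

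There is, however, a genuine gap in your proof of (i). You claim that $\langle\cdot\rangle_0=\rho_3^*$ lies strictly above each of the rules $(\rho_4\rho_5)^*\rho_1^a\rho_2^b\rho_3$, but this is false for $(a,b)=(1,0)$ (and by symmetry for $(0,1)$). Take $\sigma=(2,3)$: then $\rho_3^*(\sigma)=(0,1)$ while $(\rho_4\rho_5)^*\rho_1\rho_3(\sigma)=(0,2)$, so $\rho_3^*$ deletes \emph{strictly more} on this sequence; combined with your own example $(2,1)(1,0)$ where it deletes strictly less, one concludes $\rho_3^*\parallel(\rho_4\rho_5)^*\rho_1\rho_3$. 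Your parenthetical only checks one direction of the comparison. The fix is immediate and is precisely what the paper does: invoke Lemma~\ref{lem:2} with $\omega'=\varepsilon$ to obtain, for each fixed $(a,b)$, a rule $\rho_1^a\rho_2^b\rho_3\,T''$ (for any w.f.c.r.\ $T''$, e.g.\ $T''=(\rho_1^a\rho_2^b\rho_3)^*$) strictly above $(\rho_4\rho_5)^*\rho_1^a\rho_2^b\rho_3$. With this replacement your contrapositive for (i) goes through.
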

 
\begin{proof}
Conditions $(i)$ and $(ii)$ follow from Lemmas \ref{lem:2} and \ref{lem:1}, respectively.
Condition $(iii)$ follows from Lemma \ref{lem:4}.
\end{proof}
 
As it turns out, every maximal element of $\gR_{123}$ is also maximal in $\gR/_\sim$.
\begin{proposition}\label{prop:max123}
Let $T\in \gR_{123}$. If $\mathcal{I}^T_{11}=\emptyset$, then
$T$ is a maximal element of $\gR/_\sim$.
\end{proposition}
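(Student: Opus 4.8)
The plan is to show that no w.f.c.r.\ $R\in\gR/_\sim$ can satisfy $T<R$ whenever $T\in\gR_{123}$ has $\mathcal{I}^T_{11}=\emptyset$. So suppose for contradiction that such an $R$ exists. First I would invoke the factorization scheme of Lemma~\ref{lem:1a} to write $R$ in factorized irredundant form, $R=S_1S_2\cdots$ with $S_i=\omega_i\rho_1^{c_i}\rho_2^{d_i}\rho_3$ and $\omega_i\in\cL(\{\rho_4,\rho_5\})$. Since $T\in\gR_{123}$ is itself in factorized irredundant form with every $\omega_i$ empty, the heart of the argument is to compare $R$ factor by factor with $T$ using the preliminary results of Subsection~4.1, and to locate a sequence $\sigma$ witnessing $R(\sigma)\sqsupseteq T(\sigma)$ strictly in the wrong direction — i.e.\ a $\sigma$ on which $R$ actually deletes \emph{more} than $T$, contradicting $T<R$.

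The key step is the following dichotomy. Either some $\omega_i$ in $R$ is nonempty (contains a $\rho_4$ or $\rho_5$), or all $\omega_i$ are empty, i.e.\ $R\in\gR_{123}$. In the second case $R\geqslant T$ with $T$ maximal in $\gR_{123}$ forces, by Proposition~\ref{prop:1} (and the characterization of maximal elements of $\gR_{123}$ via $\mathcal{I}^{R}_{11}=\emptyset$ in Corollary~\ref{cor:1}(iii)), that $R=T$, so $R$ is not strictly above $T$ — contradiction. In the first case, let $i$ be least with $\omega_i\neq\varepsilon$. I would first argue that the initial segments agree: for $j<i$, since $S_j=\rho_1^{c_j}\rho_2^{d_j}\rho_3$ and $T_j=\rho_1\rho_2\rho_3$ (as $j\notin\mathcal{I}^T_{11}=\emptyset$ is vacuous, so $T_j=\rho_1^{a_j}\rho_2^{b_j}\rho_3$ with $(a_j,b_j)\neq(1,1)$), comparability $T\leqslant R$ together with Lemma~\ref{lem:1} pins down $(c_j,d_j)$ relative to $(a_j,b_j)$; one sees that to have $T_1\cdots T_{i-1}\cdots\leqslant S_1\cdots S_{i-1}\cdots$ one needs $(c_j,d_j)=(a_j,b_j)$ for $j<i$. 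Then at position $i$, the presence of a $\rho_4$ or $\rho_5$ at the front of $S_i$ lets $R$ delete a term of the \emph{second} pair of a suitably chosen $\sigma$ that $T$ leaves untouched; feeding $R$ and $T$ a sequence of the form $(1,1)(1,0)$ or $(1,1)(0,1)$ (prefixed by the block $\sigma_1\cdots\sigma_{i-1}$ from Remark~\ref{rem:2} that collapses the first $i-1$ factors to $\varepsilon$), I get $R(\sigma)=\varepsilon$ while $T(\sigma)\neq\varepsilon$, hence $R(\sigma)\sqsupset T(\sigma)$, i.e.\ $R\leqslant T$ fails in the direction needed for $T<R$ — the contradiction. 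By Corollary~\ref{cor:5} / Remark~\ref{rem:123}, prefixing by the collapsing block does not disturb the comparison, which is what makes this reduction legitimate.

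The main obstacle I anticipate is the bookkeeping in the ``first nonempty $\omega_i$'' case: one must be careful that $\rho_4,\rho_5$ alternating only finitely many times in $\omega_i$ is guaranteed (otherwise by Lemma~\ref{lem:1a}(ii) the rule truncates and $R$ would be an atom, forcing $R=\langle\cdot\rangle^+_-$ by Propositions~\ref{prop:3a}–\ref{prop:6}, again no strict dominance of a maximal $\gR_{123}$-element), and that the witnessing sequence $\sigma$ is chosen so that the leading $\rho_4$ or $\rho_5$ of $\omega_i$ genuinely acts — this is exactly the kind of sequence used in Remark~\ref{rem:2} and in the proof of Proposition~\ref{prop:ab}. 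Once the witness is in hand the inequality $R(\sigma)\sqsupset T(\sigma)$ is immediate from Definition~\ref{def:elem}, and the proof concludes. Alternatively, and perhaps more cleanly, one can phrase the whole argument as: Proposition~\ref{prop:max1} gives necessary conditions $(i)$–$(iii)$ for maximality that a maximal element of $\gR/_\sim$ above $T$ would have to satisfy; combined with the factorized irredundant form these conditions, applied recursively, force every $\omega_i=\varepsilon$ and every $(c_i,d_i)<(1,1)$, i.e.\ the element lies in $\gR_{123}$ with $\mathcal{I}_{11}=\emptyset$, hence by Corollary~\ref{cor:1}(iii) and Proposition~\ref{prop:1} equals $T$; therefore $T$ is already maximal.
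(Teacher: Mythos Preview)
Your main argument is correct and is essentially the paper's: assume $R>T$, take $i$ minimal with $\omega_i\neq\varepsilon$ in the factorization of $R$, and exhibit a witness sequence on which $R$ deletes strictly more than $T$. The paper uses the same minimal-$i$ reduction and a test sequence of the shape $(1,1)^{i-1}\cdots$; your choice $(1,1)^{i}(1,0)$ or $(1,1)^{i}(0,1)$ (according to whether $\omega_i$ begins with $\rho_4$ or $\rho_5$) does the job. Two small points. First, the intermediate claim that $(c_j,d_j)=(a_j,b_j)$ for $j<i$ is unnecessary --- any $\rho_1^{c}\rho_2^{d}\rho_3$ annihilates a leading $(1,1)$ pair regardless of $(c,d)$, so the prefix $(1,1)^{i-1}$ collapses under both $T$ and $R$ without matching exponents --- and Lemma~\ref{lem:1} does not yield it directly anyway, since its hypothesis is a comparison of the \emph{suffixes}, which you have not established. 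Second, with $R(\sigma)=\varepsilon\neq T(\sigma)$ you get $R(\sigma)\sqsubset T(\sigma)$, hence $T\not\leqslant R$; your $\sqsupset$ is reversed, though your conclusion is the right one.

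The closing ``alternative'' via Proposition~\ref{prop:max1} does not work. That proposition gives necessary conditions for an element that is already \emph{maximal}; an arbitrary $R>T$ need not be maximal, and the existence of a maximal element above $T$ would itself require justification. Even granting such an $R$, condition~(i) there only forces finite alternation of $\rho_4,\rho_5$ in each $\omega_i$, not $\omega_i=\varepsilon$, so the recursion does not push $R$ into $\gR_{123}$ as you claim.
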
 
\begin{proof}
 It suffices to show that for every $T'=T'_1T'_2\cdots $ such that $T'\geqslant
 T=T_1T_2\cdots$, we have $T'\in \mathcal{R}_{123}.$

 For the sake of a contradiction, suppose $T'\not \in \mathcal{R}_{123}$, and
 let $i=\min\{j\in \mathbb{N}: T'_i\not \in \cL(\{\rho_1, \rho_2, \rho_3\})\}$.
 Note that $T_i=\rho_1^a \rho_2^b \rho_3$ for $(a,b)<(1,1)$, for every $i\in
 \mathbb{N}$.  Since $T'\geqslant T$, $T'_i=\omega \rho_1^a \rho_2^b \rho_3$
 where $\rho_4$ and $\rho_5$ alternate finitely many times in $\omega$.  Without
 loss of generality, suppose $\omega=\rho_4\omega'$.  Consider $\sigma
 =(1,1)^{i-1}(1,0)^2$. Then $T(\sigma)=(1,0)^2>(1,0)=T'(\sigma)$, thus yielding
 the desired contradiction.

Hence, $T'\in  \mathcal{R}_{123}$ and, by Corollary \ref{cor:1}, $T'=T$.
Thus $T$ is maximal in $\mathcal{R}$.
\end{proof}
 
Now we consider the maximal elements $T\in (\gR/_\sim)\setminus \gR_{123}$.
 
\begin{proposition}\label{prop:notmax1}
Let $T=\omega\rho_1^a \rho_2^b \rho_3R$. If $\omega \not \in \cL(\rho_4)\cup\cL(\rho_5)$,
then $T$ is not maximal.
\end{proposition}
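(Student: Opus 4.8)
The claim is that if $T=\omega\rho_1^a\rho_2^b\rho_3R$ with $\omega$ containing both $\rho_4$ and $\rho_5$ (more precisely, $\omega\notin\cL(\rho_4)\cup\cL(\rho_5)$, i.e.\ $\omega$ is not a block of pure $\rho_4$'s nor a block of pure $\rho_5$'s, so $\rho_4$ and $\rho_5$ each occur at least once and they alternate at least once), then $T$ is not maximal. By Proposition~\ref{prop:max1} we may already assume $(a,b)<(1,1)$, that $\rho_4$ and $\rho_5$ alternate only finitely many times in $\omega$, and that $R$ itself is maximal; otherwise $T$ is not maximal and there is nothing to prove. So write $\omega=\rho_4^{a_1}\rho_5^{b_1}\cdots\rho_4^{a_n}\rho_5^{b_n}$ in the normalized form of Lemma~\ref{lem:1a}, with the alternation length $p\geqslant 1$ (since $\omega$ uses both letters and is not a single pure block). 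The plan is to exhibit a strictly larger w.f.c.r.\ $T^\star\geqslant T$ with $T^\star\neq T$, which proves $T$ is not maximal.

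\textbf{Choice of the larger rule.} The natural candidate is to drop one letter from the alternating pattern of $\omega$: replace $\omega$ by the word $\widehat\omega$ obtained either by removing the last alternation (e.g.\ if $\omega=\cdots\rho_4^{a_n}\rho_5^{b_n}$ with $b_n\neq 0$, take $\widehat\omega=\cdots\rho_4^{a_n}$, or if the last block is $\rho_4^{a_n}$ with $b_n=0$, merge it back) or, more robustly, by \emph{deleting} the computation rule's leading term entirely and re-examining; but the cleanest move is to compare $T$ with $\widehat T=\widehat\omega\,\rho_1^a\rho_2^b\rho_3R$ where $\widehat\omega$ has strictly fewer alternations. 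Since fewer $\rho_4/\rho_5$ applications delete strictly fewer terms of the ``tail'' pairs $(p_k,m_k)$ for $k\geqslant 2$ while the action on $(p_1,m_1)$ is unchanged, one checks $\widehat T(\sigma)\sqsupseteq T(\sigma)$ for all $\sigma$, hence $\widehat T\geqslant T$; and by Lemma~\ref{lem:3} (or the explicit strings used in its proof, together with Remark~\ref{Rem:1234}) one produces a single sequence on which $\widehat T$ deletes strictly fewer terms, giving $\widehat T>T$. One must also verify $\widehat T$ is still well-formed, i.e.\ $\widehat T\in\gR$: this follows because $\widehat\omega\in\cL(\{\rho_4,\rho_5\})$ is still finite and $\rho_1^a\rho_2^b\rho_3$ followed by the w.f.c.r.\ $R$ still makes every sequence associative — the argument of Lemma~\ref{lem:1a} applies verbatim.

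\textbf{Key steps in order.} First, invoke Proposition~\ref{prop:max1} to reduce to the case $(a,b)<(1,1)$, $\rho_4,\rho_5$ alternating finitely often in $\omega$, and $R$ maximal. Second, put $\omega$ in normalized form and use the hypothesis $\omega\notin\cL(\rho_4)\cup\cL(\rho_5)$ to guarantee at least one genuine alternation, so that a strictly shorter alternating word $\widehat\omega$ exists. Third, define $\widehat T=\widehat\omega\,\rho_1^a\rho_2^b\rho_3R$ and check $\widehat T\in\gR$. Fourth, show $\widehat T\geqslant T$ by the termwise comparison $\widehat T(\sigma)\sqsupseteq T(\sigma)$ (the deleted-index sets shrink because $\widehat\omega$ performs fewer deletions on the pairs past the first). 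Fifth, exhibit an explicit $\sigma_0$ — of the shape $(1,1)$ followed by an alternating block $[(1,0)(0,1)]^{\ast}$ tuned to the last alternation removed, exactly as in the proof of Lemma~\ref{lem:3}(i)--(ii) and Remark~\ref{Rem:1234} — on which $\widehat T$ leaves a term that $T$ deletes, so $\widehat T(\sigma_0)\neq T(\sigma_0)$, hence $\widehat T>T$ and $T$ is not maximal.

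\textbf{Main obstacle.} The delicate point is the bookkeeping in step five: one needs the witness sequence $\sigma_0$ to be such that the \emph{first} pair $(1,1)$ is consumed identically by $\rho_1^a\rho_2^b\rho_3$ in both rules (so the discrepancy genuinely comes from $\omega$ versus $\widehat\omega$ and not from the head), and such that the removed alternation of $\omega$ acts on a pair that $\widehat\omega$ leaves in place, which then survives to the final value under $\widehat T$ but is killed under $T$; getting the parity of the alternating block right, and handling separately the subcase where the last block of $\omega$ is a pure $\rho_4^{a_n}$ (i.e.\ $b_n=0$) versus $b_n\neq 0$, is where care is required. All of this is, however, essentially a specialization of the string constructions already carried out in the proof of Lemma~\ref{lem:3}, so no new technique is needed.
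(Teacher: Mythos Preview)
Your step four --- the claim that $\widehat T(\sigma)\sqsupseteq T(\sigma)$ for every $\sigma$ --- is where the argument breaks. The informal justification (``fewer $\rho_4/\rho_5$ applications delete strictly fewer terms of the tail'') ignores what happens \emph{after} the first factor: once $\rho_1^a\rho_2^b\rho_3$ kills the first pair, the tail rule $R$ is applied to two \emph{different} residual sequences, and $R$ acting on the larger residual (the one left by $\widehat\omega$) can delete terms that $R$ never touches on the smaller residual (the one left by $\omega$), simply because the smaller residual may already be associative while the larger one is not. Concretely, take $\omega=\rho_4\rho_5$, $\widehat\omega=\rho_4$, $(a,b)=(0,0)$, and $R=\rho_3^*=\langle\cdot\rangle_0$ (which is maximal, so your reduction via Proposition~\ref{prop:max1} does not exclude it). On $\sigma'=(1,1)(0,1)$ one gets $\widehat T(\sigma')=(0,1)\sqsupset\varepsilon=T(\sigma')$, but on $\sigma=(1,1)(1,0)(1,1)(5,0)$ one computes $T(\sigma)=(1,0)(5,0)$ (after $\rho_4\rho_5$ the second and third pairs collapse to $(1,0)$, and $\rho_3$ leaves an associative tail) while $\widehat T(\sigma)=(5,0)$ (after $\rho_4$ alone the residual is $(1,1)(1,1)(5,0)$, $\rho_3$ gives $(1,1)(5,0)$, which is still nonassociative, and $\rho_3^*$ then kills the $(1,1)$). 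Hence $\widehat T\parallel T$, and $\widehat T$ does not witness non-maximality of $T$. The appeal to Lemma~\ref{lem:3} does not rescue this: that lemma only treats rules with tail $\langle\cdot\rangle_-^+$, not a general (maximal) $R$.

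The paper's construction avoids this trap by a different move: instead of shortening $\omega$, it removes $\omega$ entirely and compensates by inserting $(\rho_1\rho_2\rho_3)^n$ in front of $R$, setting $T'=\rho_1^a\rho_2^b\rho_3(\rho_1\rho_2\rho_3)^nR$. The point is that each $\rho_1\rho_2\rho_3$ annihilates a whole pair regardless of its shape, so the first $n{+}1$ factors of $T'$ wipe out at most the first $n{+}1$ pairs --- exactly the pairs that $\omega\rho_1^a\rho_2^b\rho_3$ could have touched --- and hand $R$ a residual that is a \emph{suffix} of what $T$ hands to $R$. This alignment is what makes the comparison $T\leqslant T'$ go through; your $\widehat T$ lacks any such alignment with $T$.
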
 
\begin{proof}
 Let $\omega = \rho_4^{a_1}\rho_5^{b_1}\cdots \rho_4^{a_n}\rho_5^{b_n}$,
 $n\geqslant 1$. 
 Without loss of generality, suppose that $a_i\neq 0$, $b_i\neq 0$, for every $1\leqslant i\leqslant n$.
  Assume also that $a,b=0$; the other cases
 $(a,b)<(1,1)$ follow similarly.
 
 Let $R'=(\rho_1 \rho_2 \rho_3)^nR$, and set $T'=\rho_1^a \rho_2^b \rho_3R'$. 
 Let $\gamma=(1,1)(1,0)(0,1)$. Then
 $T(\gamma)=\varepsilon<(1,0)(0,1)=T'(\sigma)$, and thus $T<T'$. Hence, $T$ is not maximal. 
\end{proof}
 
 As an immediate corollary we get the following necessary condition for maximality.
 
\begin{corollary}\label{cor:max4or5}
Let $T=T_1T_2\cdots$ with $T_i=\omega_i\rho_1^{a_i} \rho_2^{b_i} \rho_3$.
If $T$ is maximal, then for every $i\in \mathbb{N}$, $\omega_i\in \cL(\rho_4)\cup\cL(\rho_5)$.
\end{corollary}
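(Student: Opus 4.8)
The plan is to read this off from Proposition~\ref{prop:notmax1} once we know that maximality is inherited by tails. Write $T=T_1T_2\cdots$ in factorized irredundant form with $T_i=\omega_i\rho_1^{a_i}\rho_2^{b_i}\rho_3$, and for $i\in\mathbb{N}$ let $S_i:=T_iT_{i+1}\cdots$ denote the $i$-th tail of $T$. The key claim is that every $S_i$ is again a maximal element of $\gR/_\sim$.

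I would prove this claim by induction on $i$. For $i=1$ it is the hypothesis, since $S_1=T$. For the inductive step, observe that $S_i=\omega_i\rho_1^{a_i}\rho_2^{b_i}\rho_3\,S_{i+1}$ is exactly of the shape to which Proposition~\ref{prop:max1} applies; since $S_i$ is maximal by the induction hypothesis, part~(iii) of that proposition yields that $S_{i+1}$ is maximal as well. (Implicit here is that each $S_{i+1}$ is a genuine w.f.c.r., so that speaking of its maximality makes sense; this is automatic because $T$ is in factorized irredundant form, via Lemma~\ref{lem:1a}, and it is in any case already presupposed by the statement of Proposition~\ref{prop:max1}(iii).)

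Now fix an arbitrary $i\in\mathbb{N}$. Since $S_i=\omega_i\rho_1^{a_i}\rho_2^{b_i}\rho_3R$ with $R=T_{i+1}T_{i+2}\cdots$ is maximal, the contrapositive of Proposition~\ref{prop:notmax1} forbids $\omega_i\notin\cL(\rho_4)\cup\cL(\rho_5)$; hence $\omega_i\in\cL(\rho_4)\cup\cL(\rho_5)$. As $i$ was arbitrary, the corollary follows.

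There is essentially no obstacle here: the result is billed as an immediate corollary precisely because all the content sits in Propositions~\ref{prop:max1} and~\ref{prop:notmax1}. The only point worth a moment's care is the bookkeeping for the tails $S_i$ — that each is a well-formed computation rule, so that Proposition~\ref{prop:notmax1} may legitimately be invoked for it — but this is handled by the factorized irredundant form and is already built into Proposition~\ref{prop:max1}(iii). One could alternatively avoid the explicit induction and argue directly: if some $\omega_i\notin\cL(\rho_4)\cup\cL(\rho_5)$, the construction in the proof of Proposition~\ref{prop:notmax1} produces a rule strictly above $S_i$, and Corollary~\ref{cor:5} (together with Proposition~\ref{prop:max1}(i), which forces $\rho_4,\rho_5$ to alternate finitely many times in each $\omega_j$ with $j<i$) lifts this to a rule strictly above $T$, contradicting maximality; but the tail-maximality route is shorter.
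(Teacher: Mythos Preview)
Your proof is correct and follows essentially the approach the paper intends: the paper states the result as an immediate corollary of Proposition~\ref{prop:notmax1} without giving a separate argument, and your induction on tails via Proposition~\ref{prop:max1}(iii) is precisely the natural way to extend Proposition~\ref{prop:notmax1} from the first factor to all factors. There is nothing to add.
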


\begin{remark}
Let $T=T_1T_2\cdots$ where $T_i=\omega_i\rho_1^{a_i} \rho_2^{b_i} \rho_3$, with
$\omega_i \in \cL(\rho_4)\cup\cL(\rho_5)$.  If for some $i\in \mathbb{N}$,
$\omega_i$ is $\rho_4^*$, then $T=T_1\cdots T_i$. Otherwise,
$T=T_1T_2\cdots$, and for each $i$ there is a string $\gamma$ such that $T_i$
acts on $\gamma$.
\end{remark}

\begin{proposition}\label{prop:7}
Let $T=T_1T_2\cdots$  where $T_i=\omega_i\rho_1^{a_i} \rho_2^{b_i} \rho_3$,  and $T'=T'_1T'_2\cdots $  where $T'_i=\omega'_i\rho_1^{a_i} \rho_2^{b_i} \rho_3$, with $\omega_i, \omega'_i \in \cL(\rho_4)\cup\cL(\rho_5)$. Then $T\parallel T'$ if and only if one of the following holds:
  \begin{itemize}
 \item[(i)] each $\omega_i, \omega'_i$ has finite length, and $\omega_i\neq \omega'_i$, for some $i\in \mathbb{N}$, 
 \item[(ii)] $T=T_1\cdots T_i$ and neither $\rho_4^*$ nor $\rho_5^*$ occur in $T_j$, $1\leqslant j\leqslant i-1$, nor in $T'$.
 \item[(iii)] $T=T_1\cdots T_i$ and $T'=T'_1\cdots T'_j$ where neither $\rho_4^*$ nor $\rho_5^*$ 
 occur in $T_l$, $1\leqslant l\leqslant i-1$, nor in $T'_k$, $1\leqslant k\leqslant j-1$, or $\omega_t\neq \omega'_t$, for some 
 $1\leqslant t\leqslant i\wedge j$.
  \end{itemize}
  \end{proposition}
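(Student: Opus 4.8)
The plan is to compare $T$ and $T'$ by localizing to the first term at which they differ and then exhibiting incomparability witnesses in a short list of cases, reusing the gadget sequences from the proof of Lemma~\ref{lem:3}.

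First I would record the structure of the rules involved. Each $\omega_i$ (and each $\omega_i'$) is a pure power $\rho_4^{k}$ or $\rho_5^{k}$ with $k\in\bN\cup\{*\}$; since $T$ and $T'$ are written in factorized irredundant form, Lemma~\ref{lem:1a}(ii) forces an exponent $*$ to appear only in the last term, in which case the rule has finitely many terms. So each of $T$ and $T'$ is either \emph{infinite} (infinitely many terms, all $\omega_i$ finite) or \emph{finite} ($=T_1\cdots T_n$ with $\omega_n\in\{\rho_4^*,\rho_5^*\}$ and every earlier $\omega_j$ finite); this is exactly the trichotomy (both infinite)/(one of each)/(both finite) that underlies conditions (i),(ii),(iii). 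As $T$ and $T'$ share the exponent sequence $(a_i,b_i)$, we have $T_j=T_j'$ iff $\omega_j=\omega_j'$; if $\omega_j=\omega_j'$ for all $j$ then $T=T'$ and neither side of the claimed equivalence holds, so assume there is a least $i^*$ with $\omega_{i^*}\neq\omega_{i^*}'$. The common prefix $T_1\cdots T_{i^*-1}$ consists of terms whose $\omega$'s are pure powers, so $\rho_4,\rho_5$ alternate finitely often in each; hence Corollary~\ref{cor:5} and Remark~\ref{rem:123} give $T\parallel T'$ (resp.\ $<$, $>$) if and only if $T_{i^*}T_{i^*+1}\cdots\parallel T_{i^*}'T_{i^*+1}'\cdots$ (resp.\ $<$, $>$). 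Thus I may assume $i^*=1$ and only have to prove that $\omega_1\neq\omega_1'$ always yields $T\parallel T'$, after checking that the presence of such a discrepancy is precisely what (i)--(iii) encode (the converse direction being then immediate, since $\neg[(i)\vee(ii)\vee(iii)]$ forces $T=T'$).

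Second, to prove incomparability I would produce, for each type of discrepancy $\omega_1\neq\omega_1'$, two sequences $\sigma_>$ and $\sigma_<$ with $T(\sigma_>)\sqsupset T'(\sigma_>)$ and $T(\sigma_<)\sqsubset T'(\sigma_<)$, which gives $T\not\leqslant T'$ and $T\not\geqslant T'$ at once. The relevant types are: (a) $\omega_1=\rho_4^{k}$, $\omega_1'=\rho_4^{k'}$ with $k\neq k'$ finite; (b) the same with $\rho_5$ in place of $\rho_4$; (c) $\omega_1$ a finite power of $\rho_4$ and $\omega_1'$ a finite power of $\rho_5$ (one of them possibly empty); and (d) one of $\omega_1,\omega_1'$ equal to $\rho_4^*$ or $\rho_5^*$. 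In cases (a)--(c) the witnesses are built, exactly as in Lemma~\ref{lem:3}, from blocks $(1,1)$, $(1,0)$, $(0,1)$ --- with a leading $(2,1)$, $(1,2)$ or $(2,2)$ thrown in when the value of $(a_1,b_1)$ must be probed: one places enough $(1,0)$'s (resp.\ $(0,1)$'s) right after an initial $(1,1)$ so that the longer $\rho_4$- (resp.\ $\rho_5$-) power erases exactly one more block than the shorter one, while the companion sequence is shortened by one such block to reverse the effect. The point that makes these witnesses work regardless of the tails $T_2T_3\cdots$ and $T_2'T_3'\cdots$ is that after the relevant deletions each residual sequence is either already associative (on which every rule is the identity) or equals $(1,1)$ (which every w.f.c.r.\ deletes entirely), so the outcome does not depend on the tails. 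Case (d) is dealt with via Lemma~\ref{lem:2} together with the fact that $\rho_4^*\rho_1^{a_1}\rho_2^{b_1}\rho_3$ erases every leading positive block up to the first negative occurrence and then halts, whereas a finite competitor erases a bounded amount in front but possibly strictly more later; the two witnesses then arise just as before.

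The main obstacle is this second step: organising the case list for $\omega_1\neq\omega_1'$ (within each case, treating $(a_1,b_1)=(1,1)$ separately from the rest, and distinguishing whether the shorter power exhausts its letters or first meets a block of the opposite sign) and verifying the deletion counts of the chosen $\sigma_>,\sigma_<$, in particular that the residuals trivialise so that the tails play no role. The localization of the first step --- via Corollary~\ref{cor:5} and Remark~\ref{rem:123} --- is what keeps this list finite and confines the bookkeeping to a single term, and Lemma~\ref{lem:2} together with Lemma~\ref{lem:1a}(ii) handles the $\rho_4^*/\rho_5^*$ subcases.
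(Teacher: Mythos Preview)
Your plan is essentially the paper's own argument. Both proofs reduce to the first index $i^*$ where $\omega_{i^*}\neq\omega_{i^*}'$ (the paper simply says ``without loss of generality $\omega_1\neq\omega_1'$'', while you justify this explicitly via Corollary~\ref{cor:5} and Remark~\ref{rem:123}), and then both produce short witness sequences built from blocks $(1,1),(1,0),(0,1)$ whose residuals after the first term are either associative or equal to $(1,1)$, so the tails are irrelevant. Your cases (a)--(c) correspond to the paper's three ``representative cases'' under (i), and your case (d) to the paper's treatment of (ii) and (iii).

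One correction: your invocation of Lemma~\ref{lem:2} in case~(d) is misplaced. Lemma~\ref{lem:2} concerns $\omega$ in which $\rho_4$ and $\rho_5$ \emph{alternate infinitely often} (i.e.\ $\omega\sim(\rho_4\rho_5)^*$), and its conclusion is a strict inequality, not incomparability. Here every $\omega_i,\omega_i'$ is a pure power of $\rho_4$ or of $\rho_5$, so Lemma~\ref{lem:2} never applies; if it did, it would give comparability, the opposite of what you need. The verbal description you give afterwards is correct and suffices on its own: with $\omega_1=\rho_4^*$ and $\omega_1'=\rho_4^m$, the witnesses $\sigma=(1,1)(1,0)^{m+1}$ and $\sigma'=(1,1)(1,0)^m(1,1)$ (exactly the paper's choices) yield $T(\sigma)=\varepsilon\sqsubset(1,0)=T'(\sigma)$ and $T(\sigma')=(0,1)\sqsupset\varepsilon=T'(\sigma')$. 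So drop the reference to Lemma~\ref{lem:2} and argue directly. Also, the gadgets $(2,1),(1,2),(2,2)$ you mention are unnecessary here, since $T$ and $T'$ share the same $(a_i,b_i)$; the paper's witnesses use only $(1,1),(1,0),(0,1)$.
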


\begin{proof} To see that the conditions in (i)-(iii) are necessary, observe that if  $T\parallel T'$ ($T$ and $T'$ in factorized irredundant forms), then we must have 
$T\neq T'$. Since each $\omega_i$ and each $\omega'_i$ is in $\cL(\rho_4)\cup\cL(\rho_5)$,  
one of (i)-(iii) must occur. 

To show that (i) is sufficient, assume that each $\omega_i, \omega'_i$ has finite length and, without loss of generality,
suppose that  $\omega_1\neq \omega'_1$. We consider 3 representative cases; 
the remaining cases follow similarly.

Suppose that  $\omega_1\in \cL(\{\rho_4\})$ and $\omega'_1\in\cL(\{\rho_5\})$.
Take $\sigma=(1,1)(0,1)$ and $\sigma'=(1,1)(1,0)$.
Then $T(\sigma)=(0,1)\neq\varepsilon=T'(\sigma)$, but $T(\sigma')=\varepsilon \neq (1,0)=T'(\sigma')$.

Suppose that $\omega_1\in \cL(\{\rho_4\})$ and $\omega'_1=\varepsilon$.
Take $\sigma=(1,1)(1,0)$ and $\sigma'=(1,1)(1,1)$.
Then $T(\sigma)=\varepsilon\neq(1,0)=T'(\sigma)$, but $T(\sigma')=(0,1)\neq\varepsilon =T'(\sigma')$.

Suppose now that $\omega_1\in \rho_4^n$ and $\omega'_1=\rho_4^m$, 
say, $n<m$. Take $\sigma=(1,1)(1,0)^{n+1}$ and $\sigma'=(1,1)(1,0)^n(1,1)$.
Then $T(\sigma)=(1,0)\neq\varepsilon=T'(\sigma)$, but $T(\sigma')=\varepsilon \neq(0,1)=T'(\sigma')$.

In all representative cases we conclude that $T\parallel T'$.

To show that (ii) is sufficient, suppose that $T=T_1\cdots T_i$ and neither $\rho_4^*$ nor $\rho_5^*$ occur in $T_l$, $1\leqslant l\leqslant i-1$, 
nor in $T'$. Let $k=\min\{j:\omega_j\neq \omega'_j\}$.
If $k\leqslant i-1$, then the proof of (i) can be used to show that $T\parallel T'$. 

So suppose that $k=i$ and, without loss of generality, suppose that $\omega_i= \rho_4^*$ and $\omega'_i=\rho_4^m$, $m>0$.
Take $\sigma=(1,1)^i(1,0)^{m+1}$ and $\sigma'=(1,1)(1,0)^m(1,1)$.
Then $T(\sigma)=\varepsilon\neq (1,0)=T'(\sigma)$, but $T(\sigma')=(0,1)\neq\varepsilon =T'(\sigma')$,
and again we have that $T\parallel T'$.

Finally, to show that (iii) is sufficient, suppose that 
$T=T_1\cdots T_i$ and $T'=T'_1\cdots T'_j$ where neither $\rho_4^*$ nor $\rho_5^*$ occur in $T_l$, $1\leqslant l\leqslant i-1$, 
nor in $T'_k$, $1\leqslant k\leqslant j-1$, or $\omega_t\neq \omega'_t$, for some 
 $1\leqslant t\leqslant i\wedge j$.
 
 Now, as case (i), we may assume that $i<j$ (the case $i>j$ is similar), and that 
 $\omega_i\in \cL(\{\rho_4\})$ and $\omega'_i=\rho_4^m$, $m>0$. But then, as in case (ii),
we again have $T\parallel T'$, and thus the proof is now complete.
\end{proof}

From Lemma \ref{lem:1}, the above necessary condition and Propositions \ref{prop:max123} and \ref{prop:7}, we obtain the following
explicit description of the maximal elements of $\gR/_\sim$.
\begin{theorem}
Let $T\in \gR/_\sim$. Then $T$ is maximal if and only if
 \begin{itemize}
 \item[(i)] $T$ is a maximal element of $\gR_{123}$, or 
 \item[(ii)] $T=T_1T_2\cdots$  where $T_i=\omega_i\rho_1^{a_i} \rho_2^{b_i} \rho_3$ with $\omega_i \in \cL(\rho_4)\cup\cL(\rho_5)$ and $(a_i,b_i)<(1,1)$.
  \end{itemize}
\end{theorem}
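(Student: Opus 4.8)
The plan is to combine the necessary conditions already established with the sufficiency direction supplied by Propositions~\ref{prop:max123} and~\ref{prop:7}, using the factorized irredundant form throughout. Recall that by Lemma~\ref{lem:1a} any $T\in\gR/_\sim$ can be written as $T=T_1T_2\cdots$ with $T_i=\omega_i\rho_1^{a_i}\rho_2^{b_i}\rho_3$, $\omega_i\in\cL(\{\rho_4,\rho_5\})$, $a_i,b_i\in\{0,1\}$; moreover Proposition~\ref{prop:max1} tells us that if $T$ is maximal then each $\omega_i$ has $\rho_4,\rho_5$ alternating only finitely often and each $(a_i,b_i)<(1,1)$, while Corollary~\ref{cor:max4or5} forces $\omega_i\in\cL(\rho_4)\cup\cL(\rho_5)$ for every $i$. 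So a maximal $T$ must already have the shape in (ii) except possibly for the length of the $\omega_i$ (which may contain $\rho_4^*$ or $\rho_5^*$) and except for the case where $T$ reduces to a rule of $\gR_{123}$.

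First I would prove the ``if'' direction. For (i), this is exactly Proposition~\ref{prop:max123}: a maximal element of $\gR_{123}$ (equivalently $\mathcal{I}^T_{11}=\emptyset$ by Corollary~\ref{cor:1}(iii)) is maximal in $\gR/_\sim$. For (ii), suppose $T=T_1T_2\cdots$ with $T_i=\omega_i\rho_1^{a_i}\rho_2^{b_i}\rho_3$, $\omega_i\in\cL(\rho_4)\cup\cL(\rho_5)$, $(a_i,b_i)<(1,1)$, and take any $T'\geqslant T$ in factorized irredundant form. By Proposition~\ref{prop:max1}(i)--(ii), $T'=T'_1T'_2\cdots$ with each $\omega'_i$ alternating finitely often and each $(a'_i,b'_i)<(1,1)$; I would then argue that the ``exponent'' pattern of $T'$ must match that of $T$ term by term: since $(a'_i,b'_i),(a_i,b_i)\neq(1,1)$, Lemma~\ref{lem:1} shows that $T'\geqslant T$ can only hold if $(a'_i,b'_i)=(a_i,b_i)$ for the relevant $i$ (otherwise the two become incomparable), so $T'$ has the same $\rho_1^{a_i}\rho_2^{b_i}\rho_3$ skeleton. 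Now $T'\geqslant T$ with $T'\neq T$ would put $T$ and $T'$ in a situation covered by Proposition~\ref{prop:7}: whenever $T'$ has the form required above and differs from $T$, one of the conditions (i)--(iii) of Proposition~\ref{prop:7} holds, hence $T\parallel T'$, contradicting $T'\geqslant T$. Therefore $T'=T$ and $T$ is maximal.

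Next I would prove the ``only if'' direction. Let $T$ be maximal and write it in factorized irredundant form. By Corollary~\ref{cor:max4or5}, $\omega_i\in\cL(\rho_4)\cup\cL(\rho_5)$ for every $i$; by Proposition~\ref{prop:max1}(ii), $(a_i,b_i)<(1,1)$ for every $i$ (a term with $(a_i,b_i)=(1,1)$ would, via Lemma~\ref{lem:1}, admit a strictly larger rule). If moreover every $\omega_i$ is empty, then $T\in\cL(\{\rho_1,\rho_2,\rho_3\})$, i.e.\ $T\in\gR_{123}$, and since $(a_i,b_i)<(1,1)$ for all $i$ we get $\mathcal{I}^T_{11}=\emptyset$, so $T$ is a maximal element of $\gR_{123}$ and we are in case (i). Otherwise some $\omega_i$ is a nonempty word in $\cL(\rho_4)\cup\cL(\rho_5)$, and then $T$ has exactly the shape described in (ii). (The borderline sub-case where some $\omega_i=\rho_4^*$ or $\rho_5^*$ truncates $T$ to $T_1\cdots T_i$ is still of the form (ii), a finite such product.) This exhausts the possibilities, so a maximal $T$ satisfies (i) or (ii).

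The main obstacle I anticipate is the ``if'' direction for case (ii): one has to be sure that \emph{every} $T'$ with $T'\geqslant T$ is forced into the comparison framework of Proposition~\ref{prop:7}. The delicate point is matching the $\rho_1^{a_i}\rho_2^{b_i}\rho_3$ skeletons and handling truncation (when some $\omega_i$ or $\omega'_i$ equals $\rho_4^*$ or $\rho_5^*$, so one rule is a finite product and the other need not be). Here one must carefully invoke Lemma~\ref{lem:1} to pin down the exponents coordinate by coordinate and then check that the remaining discrepancy in the $\omega_i$'s falls under one of (i)--(iii) of Proposition~\ref{prop:7}; this is essentially bookkeeping but is where the proof could go wrong if a case is overlooked.
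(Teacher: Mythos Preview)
Your overall strategy matches the paper's one-line proof (combine Lemma~\ref{lem:1}, Corollary~\ref{cor:max4or5}, and Propositions~\ref{prop:max123} and~\ref{prop:7}), and the ``only if'' direction is fine. The gap is in the ``if'' direction for case~(ii). You take an arbitrary $T'\geqslant T$ and then write ``By Proposition~\ref{prop:max1}(i)--(ii), $T'=T'_1T'_2\cdots$ with each $\omega'_i$ alternating finitely often and each $(a'_i,b'_i)<(1,1)$''. But Proposition~\ref{prop:max1} has the hypothesis ``$T$ is maximal''; you are applying it to $T'$, which is merely assumed to satisfy $T'\geqslant T$, not to be maximal. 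So at this point you have no justification for the claimed shape of $T'$, and without that shape you cannot invoke Proposition~\ref{prop:7} (whose hypothesis is precisely that both rules have $\omega_i,\omega'_i\in\cL(\rho_4)\cup\cL(\rho_5)$ and the \emph{same} exponents $(a_i,b_i)$).

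What is actually needed is to derive those constraints on $T'$ directly from $T'\geqslant T$, not from maximality. Concretely, one argues factor by factor: first show that $\omega'_1$ must lie in $\cL(\rho_4)\cup\cL(\rho_5)$ and in fact equal $\omega_1$ (otherwise the test sequences used in the proofs of Propositions~\ref{prop:notmax1} and~\ref{prop:7} already give $T'\not\geqslant T$); with $\omega'_1=\omega_1$ in hand, Lemma~\ref{lem:1} forces $(a'_1,b'_1)=(a_1,b_1)$ (since $(a_1,b_1)\neq(1,1)$, any other choice yields incomparability or the wrong inequality); then Corollary~\ref{cor:5}/Remark~\ref{rem:123} lets you peel off the first factor and iterate. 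Only after this bootstrapping does Proposition~\ref{prop:7} (or rather its contrapositive) finish the job. Your outline jumps over exactly this step.
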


\section{Concluding remarks: An alternative ordering of $\gR/_\sim$}

An alternative ordering of $\gR/_\sim$ was proposed in \cite{Gra03}, and which is defined as follows.
Given $R\in \gR/_\sim$, let $\mathrm{Ker}(R):=\{\sigma: R(\sigma)=\varepsilon\}.$ 
For $R, R'\in \gR/_\sim$, we write $R\leqslant_{\mathrm{Ker}}R'$ if  $\mathrm{Ker}(R)\supseteq \mathrm{Ker}(R')$.
Clearly, $\leqslant_{\mathrm{Ker}}$ is a partial ordering of $\gR/_\sim$, and
 if $R\leqslant R'$, then $R\leqslant_{\mathrm{Ker}}R'$; see  \cite{Gra03}. 
 As it turns out, the converse is also true.

\begin{proposition}\label{prop:equivOrders}
Let $R, R'\in \gR/_\sim$. Then $R\leqslant R'$ if and only if $R\leqslant_{\mathrm{Ker}}R'$.
\end{proposition}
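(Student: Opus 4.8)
The plan is to prove the nontrivial implication: if $R\leqslant_{\mathrm{Ker}}R'$, then $R\leqslant R'$. By the antisymmetry of $\leqslant$ (and hence of $\leqslant_{\mathrm{Ker}}$, since $\leqslant_{\mathrm{Ker}}$ refines $\leqslant$ in one direction and, once the proposition is proved, the two orders coincide), it suffices to work with the factorized irredundant forms $R=T_1T_2\cdots$ and $R'=T'_1T'_2\cdots$, where $T_i=\omega_i\rho_1^{a_i}\rho_2^{b_i}\rho_3$ and likewise for $T'_i$. The strategy is to argue the contrapositive: assuming $R\not\leqslant R'$, I will exhibit a sequence $\sigma$ witnessing $R(\sigma)=\varepsilon$ but $R'(\sigma)\neq\varepsilon$, so that $\mathrm{Ker}(R)\not\supseteq\mathrm{Ker}(R')$, i.e. $R\not\leqslant_{\mathrm{Ker}}R'$.

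First I would locate the first place where $R$ and $R'$ differ. Since both are in factorized irredundant form, by Proposition~\ref{prop:ab} they agree term-by-term up to some index, so let $i$ be minimal with $T_i\neq T'_i$ (if $R$ is a proper prefix of $R'$ or vice versa, $i$ is the length of the shorter one plus one, handled using Lemma~\ref{lem:1a}(ii) and the fact that the last factor of a finite factorization has an infinite $\omega$). The prefix $T_1\cdots T_{i-1}$ is common, and by Remark~\ref{rem:2} there is a string $\sigma_0$ of the form concatenating the $\sigma_j$'s (for $1\leqslant j<i$, assuming none of those $\omega_j$ is $(\rho_4\rho_5)^*$, $\omega\rho_4^*$ or $\omega\rho_5^*$, which holds by minimality of the differing index and the structure of irredundant forms) such that $T_1\cdots T_{i-1}(\sigma_0\gamma)=\gamma$ for every $\gamma$; this lets me reduce to comparing the tails $T_iT_{i+1}\cdots$ and $T'_iT'_{i+1}\cdots$ on an arbitrary $\gamma$, appended after $\sigma_0$. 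Then I would split into the cases of how $T_i$ and $T'_i$ differ: (a) $(a_i,b_i)\neq(a'_i,b'_i)$, handled via Lemma~\ref{lem:1} together with the test strings $(2,1),(1,2),(2,2)$ used there; (b) $(a_i,b_i)=(a'_i,b'_i)$ but $\omega_i\neq\omega'_i$, handled via Lemmas~\ref{lem:2},~\ref{lem:3} and Remarks~\ref{Rem:1234},~\ref{Rem:123456}, which already produce, in each subcase, an explicit $\sigma$ on which one rule returns $\varepsilon$ and the other does not. The key observation making this work is that, in every such witnessing construction in those lemmas, the rule that "deletes more" in fact deletes \emph{everything} on the chosen $\sigma$ — that is, the witnesses are always of the shape $T(\sigma)=\varepsilon\neq T'(\sigma)$ (or the symmetric statement) rather than merely $T(\sigma)\sqsubset T'(\sigma)$ with both nonempty; one must check this holds, or slightly modify each witness $\sigma$ (e.g. by prefixing a block that both tails must fully delete) so that it does.

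The main obstacle will be subcase (b) when both $\omega_i$ and $\omega'_i$ are finite and differ only "deep inside" — the situation covered by Lemma~\ref{lem:3}(ii)(c)-(d) and Remark~\ref{Rem:1234}, where the difference is at an interior exponent $a_j$ or $b_j$ with $j<n$. There the natural witnessing strings from Proposition~\ref{prop:ab} and Lemma~\ref{lem:3} establish incomparability by a pair $\sigma,\sigma'$ with $T(\sigma)\sqsubset T'(\sigma)$ and $T(\sigma')\sqsupset T'(\sigma')$, and I must ensure at least one of these can be upgraded to an $\varepsilon$-versus-nonempty witness. The fix is to append to the relevant witness a short nonassociative tail, e.g. a block $[(1,0)(0,1)]^{q}$ with $q$ large (or a single extra $(1,1)$), chosen so that whichever tail rule is "lower" continues deleting and empties the string while the other stops; this is precisely the kind of padding already appearing in the proofs of Lemma~\ref{lem:3} and of Proposition~\ref{prop:ab}, so the argument is a bookkeeping refinement rather than a new idea. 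Finally, I would assemble: in each case the produced $\sigma_0\sigma$ (or $\sigma_0\sigma'$) lies in $\mathrm{Ker}$ of one rule but not the other, contradicting $R\leqslant_{\mathrm{Ker}}R'$, which completes the contrapositive and hence the proposition.
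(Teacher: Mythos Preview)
Your contrapositive is stated with $R$ and $R'$ swapped: from $R\not\leqslant R'$ you must produce a $\sigma$ with $R'(\sigma)=\varepsilon\neq R(\sigma)$ (so that $\sigma\in\mathrm{Ker}(R')\setminus\mathrm{Ker}(R)$, witnessing $\mathrm{Ker}(R)\not\supseteq\mathrm{Ker}(R')$), not the other way round. This is not fatal---your case analysis is symmetric enough that swapping the roles throughout would salvage the plan---but it shows the argument has not been checked.

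More to the point, the paper's proof bypasses the entire case analysis. It argues directly: suppose $R\parallel R'$, so there is $\sigma_1$ with $R(\sigma_1)\sqsubset R'(\sigma_1)$. Let $\sigma'_1$ be the subsequence of $\sigma_1$ consisting of the terms \emph{deleted} by $R$ (equivalently, $\sigma_1$ with the indices of $R(\sigma_1)$ removed). Then $R(\sigma'_1)=\varepsilon$ while $R'(\sigma'_1)\neq\varepsilon$, so $\sigma'_1\in\mathrm{Ker}(R)\setminus\mathrm{Ker}(R')$; the symmetric construction from a $\sigma_2$ with $R(\sigma_2)\sqsupset R'(\sigma_2)$ gives the other non-inclusion, hence $R\parallel_{\mathrm{Ker}}R'$. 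No factorized irredundant forms, no invocation of Lemmas~\ref{lem:1}--\ref{lem:3}, no upgrading or padding of witnesses. Your route---locating the first differing factor and then exhausting the subcases of how $T_i$ and $T'_i$ can differ---could in principle be pushed through once the direction is corrected and each witness is verified to be of $\varepsilon$-versus-nonempty type, but it re-derives much of the structural machinery of \S\ref{sec:struc} when a single uniform construction suffices. The key idea you are missing is precisely this: pass from an arbitrary witness $\sigma$ of strict inclusion to the sequence of deleted terms, which automatically lands in the kernel of the smaller rule.
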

 
\begin{proof} 
To prove Proposition \ref{prop:equivOrders} it remains to show that if $R\parallel R'$, then 
$R\parallel_{\mathrm{Ker}}R'$, i.e., $R\not\leqslant_{\mathrm{Ker}}R'$ and $R'\not\leqslant_{\mathrm{Ker}}R$.

So suppose that  $R\parallel R'$, that is, there exist $\sigma_1$ and $\sigma_2$ such that
$R(\sigma_1) \sqsubset R'(\sigma_1)$ and $R(\sigma_2) \sqsupset R'(\sigma_2)$.

Let $ \sigma'_1$ the string be obtained from $ \sigma_1$ by removing 
the indices in $R(\sigma_1)$ such that $R(\sigma'_1)=\varepsilon \neq R'(\sigma'_1)$.
Hence, $R'\not\leqslant_{\mathrm{Ker}}R$.

Similarly, let $ \sigma'_2$ the string be obtained from $ \sigma_2$ by removing 
the indices in $R'(\sigma_2)$ such that $R(\sigma'_2) \neq \varepsilon=R'(\sigma'_2)$.
Hence, $R\not\leqslant_{\mathrm{Ker}}R'$.

Thus $R\parallel_{\mathrm{Ker}}R'$, and the proof is now complete.
\end{proof}

We have presented a partial description of the poset $\gR/_\sim$; being uncountable, there is little hope 
of obtaining an explicit description as it was the case of the subposet $\mathcal{R}_{123}$, which was shown
to be isomorphic to the power set of natural numbers.

Looking at directions of further research, we are inevitably drawn to the question in 
determining whether $\gR/_\sim$ constitutes a $\wedge$-semilattice and, if that is the case,
whether its closed intervals constitute lattices, as it was the case  of the subposet $\mathcal{R}_{123}$.

\end{document}